%
%

\documentclass{IEEEtran}
\usepackage{color} 
\usepackage{times,amssymb,amsmath,amsfonts,amsthm,bbm,mathrsfs,color,enumitem,booktabs,cite,bm,caption,subcaption}
\usepackage[normalem]{ulem}
\usepackage[mathscr]{eucal}
\interdisplaylinepenalty=5000

\allowdisplaybreaks
\usepackage{hyperref}
\setlength{\belowcaptionskip}{-10pt}
\textfloatsep 0.0pt plus 2.0pt minus 4.0pt
\newcommand\redout{\bgroup\markoverwith
{\textcolor{red}{\rule[0.5ex]{2pt}{0.8pt}}}\ULon}

\newtheorem{theorem}{Theorem}[section]
\newtheorem{lemma}[theorem]{Lemma}

\newtheorem{proposition}[theorem]{Proposition}

\newtheorem{corollary}[theorem]{Corollary}

\newcommand\nc\newcommand
\nc\ffa{{\boldsymbol a}}\nc\ffA{{\boldsymbol A}}\nc\cA{{\mathscr A}}
\nc\ffb{{\boldsymbol b}}\nc\ffB{{\boldsymbol B}}\nc\cB{{\mathscr B}}
\nc\ffc{{\boldsymbol c}}\nc\ffC{{\boldsymbol C}}\nc\cC{{\mathscr C}}
\nc\ffd{{\boldsymbol d}}\nc\ffD{{\boldsymbol D}}\nc\cD{{\mathscr D}}
\nc\ffe{{\boldsymbol e}}\nc\ffE{{\boldsymbol E}}\nc\cE{{\mathscr E}}
\nc\fff{{\boldsymbol f}}\nc\ffF{{\boldsymbol F}}\nc\cF{{\mathscr F}}
\nc\ffg{{\boldsymbol g}}\nc\ffG{{\boldsymbol G}}\nc\cG{{\mathscr G}}\nc\bG{{\mathbb G}}
\nc\ffh{{\boldsymbol h}}\nc\ffH{{\boldsymbol H}}\nc\cH{{\mathscr H}}
\nc\ffi{{\boldsymbol i}}\nc\ffI{{\boldsymbol I}}\nc\cI{{\mathcal I}}
\nc\ffj{{\boldsymbol j}}\nc\ffJ{{\boldsymbol J}}\nc\cJ{{\mathscr J}}
\nc\ffk{{\boldsymbol k}}\nc\ffK{{\boldsymbol K}}\nc\cK{{\mathscr K}}
\nc\ffl{{\boldsymbol l}}\nc\ffL{{\boldsymbol L}}\nc\cL{{\mathscr L}}
\nc\ffm{{\boldsymbol m}}\nc\ffM{{\boldsymbol M}}\nc{\cM}{{\mathscr M}}
\nc\ffn{{\boldsymbol n}}\nc\ffN{{\boldsymbol N}}\nc\cN{{\mathscr N}}
\nc\ffo{{\boldsymbol o}}\nc\ffO{{\boldsymbol O}}\nc\cO{{\mathscr O}}
\nc\ffp{{\boldsymbol p}}\nc\ffP{{\boldsymbol P}}\nc\cP{{\mathscr P}}\nc\eP{{\EuScriptP}}\nc\fP{{\mathfrak P}}
\nc\ffq{{\boldsymbol q}}\nc\ffQ{{\boldsymbol Q}}\nc\cQ{{\mathscr Q}}
\nc\ffr{{\boldsymbol r}}\nc\ffR{{\boldsymbol R}}\nc\cR{{\mathscr R}}
\nc\ffs{{\boldsymbol s}}\nc\ffS{{\boldsymbol S}}\nc\cS{{\mathscr S}}
\nc\fft{{\boldsymbol t}}\nc\ffT{{\boldsymbol T}}\nc\cT{{\mathscr T}}
\nc\ffu{{\boldsymbol u}}\nc\ffU{{\boldsymbol U}}\nc\cU{{\mathscr U}}
\nc\ffv{{\boldsymbol v}}\nc\ffV{{\boldsymbol V}}\nc\cV{{\mathscr V}}
\nc\ffw{{\boldsymbol w}}\nc\ffW{{\boldsymbol W}}\nc\cW{{\mathscr W}}
\nc\ffx{{\boldsymbol x}}\nc\ffX{{\boldsymbol X}}\nc\cX{{\mathscr X}}
\nc\ffy{{\boldsymbol y}}\nc\ffY{{\boldsymbol Y}}\nc\cY{{\mathscr Y}}
\nc\ffz{{\boldsymbol z}}\nc\ffZ{{\boldsymbol Z}}\nc\cZ{{\mathscr Z}}
\nc{\bb}{{\mathbbm{1}}}
\nc\reals{{\mathbb R}}
\nc{\ff}{{\mathbb F}}
\nc{\PP}{{\mathbb P}}

\DeclareMathOperator{\diam}{{\sf diam}}
\DeclareMathOperator{\diag}{diag}

\usepackage{tikz}
\usetikzlibrary{calc}
\usetikzlibrary{decorations.pathreplacing,decorations.markings,snakes,shapes.geometric}
\tikzset{snake it/.style={decorate, decoration=snake}}

\newcommand\remove[1]{}

\DeclareSymbolFont{bbold}{U}{bbold}{m}{n}
\DeclareSymbolFontAlphabet{\mathbbold}{bbold}

\begin{document}

\title{{Node repair on connected graphs}}

		\author{\IEEEauthorblockN{Adway Patra} \hspace*{1in}
\and \IEEEauthorblockN{Alexander Barg,}\IEEEmembership{\ Fellow, IEEE}}
		\date{}
		\maketitle
\maketitle

\maketitle

\begin{abstract}
We study the problem of erasure correction (node repair) for regenerating codes defined on graphs wherein the cost of transmitting the information to the failed node depends on the graphical distance from this node to the helper vertices of the graph. The information passed to the failed node from the helpers traverses several vertices of the graph, and savings in communication complexity can be attained if the intermediate vertices process the information rather than simply relaying it toward the failed node. 
We derive simple information-theoretic bounds on the amount of information communicated between the nodes in the course of the repair. 
Next we show that Minimum Storage Regenerating (MSR) codes can be modified to perform the intermediate processing, thereby attaining the
lower bound on the information exchange on the graph. We also consider node repair when the underlying graph is random, deriving 
conditions on the parameters that support recovery of the failed node with communication complexity smaller than required by the simple relaying.

\end{abstract}

\renewcommand{\thefootnote}{\arabic{footnote}}
\setcounter{footnote}{0}		
		
{\renewcommand{\thefootnote}{}\footnotetext{
\vspace*{-.15in}

\noindent\rule{1.5in}{.4pt}

{Results of this paper were presented in part at the 2021 IEEE International Symposium on Information Theory.

The authors are with Dept. of ECE and ISR, University of Maryland, College Park, MD 20742. Emails: \{apatra, abarg\}@umd.edu.
Alexander Barg is also with IITP, Russian Academy of Sciences, 127051 Moscow, Russia.

This research was supported by NSF grants CCF1814487, CCF2110113, and CCF2104489.
}}}
\renewcommand{\thefootnote}{\arabic{footnote}}
\setcounter{footnote}{0}

\section{Introduction}
Applications of erasure-correcting codes in distributed storage are focused on recovering a single erasure under the constraint on the total
amount of data ``moved'' from the other coordinates to correct the erased (failed) coordinate. This processing is commonly modeled 
by assuming that the codeword coordinates are placed on different servers (storage nodes), and aims at limiting the information communicated between them for the recovery of the failed node. The currently adopted framework for studying erasure correction under communication constraints was established in \cite{Dimakis10}, and coding constructions that minimize the communication are known collectively as {\em regenerating codes}. The authors of \cite{Dimakis10} derived a lower bound on the minimum amount of information acquired from the surviving nodes for the purposes of repair. 

For a finite field $\ff=\ff_q$ we consider a code $\cC\subset \ff^{nl}$ whose codewords are represented by $l\times n$ matrices over $\ff$.
We assume that each coordinate (a vector in $\ff^l$) is written on a single storage node, and that a failed node amounts to having its coordinate erased. The task of node repair can be thought of as correcting a single erasure in the vector code of length $n$ over $\ff.$ In this paper we address communication complexity of node repair under the assumption that communication between the nodes is constrained by 
a (connected) graph $G(V,E)$, where $V$ is an $n$-set of vertices and the cost of sending a unit of information from $v_i$ to $v_j$ is 
determined by the graph distance $\rho(v_i,v_j)$ in $G$. This group of problems is motivated by the assumption 
that the links between the nodes are established based on physical proximity and the associated energy constraints, limitations of the system architecture, or other features with the same effect. In the network environment such as low-power wide-area networks (LP-WAN), e.g., path loss in narrow-band lower-power IoT, the mentioned limitations arise naturally as a part of the functioning of the system.
We also always assume point-to-point rather than broadcast communication. For distributed storage systems this is a natural restriction, while for IoT applications this assumption may be imposed because of energy or privacy considerations.

Under a naive approach to this problem, it is still possible to use the known methods of node repair whereby the chosen group of the helper nodes communicates some functions of their contents to the failed node. Note however that the data from the helper nodes not directly connected to the failed node will have to be relayed along some path to the failed node, increasing the bandwidth utilized for the repair. Thus, a natural question to study is whether there are more economical ways of accomplishing this goal given the structure of the graph $G,$ under which the data from the far-off  helper nodes is processed along the way and combined with the contents of the intermediate nodes, saving on the overall communication. We give an affirmative answer in Sec.~\ref{sec:MSR}, showing that if the data is encoded using a {\em minimum storage regenerating}, or MSR code, then under some conditions it is possible to save on the communication cost of node repair compared to simple relaying of the information. More precisely, we derive a lower bound on the repair bandwidth on a graph and show that this bound can be attained with linear MSR codes. We note that the problem of repair on graphs was considered in earlier \cite{Gerami2014} in the case 
when the graph is a path and in \cite{LuGuangFu2014} for a directed ring. These papers derived lower bounds on the repair bandwidth which coincide with our bound once it is specialized to a path or a ring. 

Intermediate data processing is also an essential component of a version of the node repair problem known as {\em cooperative repair} 
\cite{ShumHu2013}. It is therefore of interest to examine possible applications of cooperative MSR codes to the problem at hand, aiming again at reduced communication complexity of repair. We show one application of this idea in Sec.~\ref{sec:multiple}, using 
a family of cooperative codes to design a scheme with reduced repair bandwidth in the case of multiple failed nodes.
In Sec.~\ref{sec:coop} we consider the problem of node repair in the situation when the helper nodes can exchange (and process) information
before communicating with the failed nodes. We derive a framework to bound below the complexity of repair under this relaxation
and use it to compute lower bounds on the repair bandwidth in several examples. One of these examples also affords a matching
code construction, again inspired by cooperative codes. Since the setting of this example is rather restricted, we give the details
in the appendix. Finally, in Sec.~\ref{sec:random} we address the question of node repair for random graphs from the standard Erd{\"o}s-R{\'e}nyi ensemble $\cG_{n,p}$ as well as from the ensemble of random regular graphs, and determine a range of parameters under which the communication cost of repair with intermediate processing is advantageous over the repair scheme based on the relaying. 

Throughout the paper we focus only on the node repair problem and do not study the communication complexity of the actual access to the encoded data (the ``data collection'' task in the terminology of \cite{Dimakis10}).

To conclude this introduction, we note that the general problem of information processing or recovery under communication constraints represented by a graph has recently been studied in a number of specific settings. Among them, locally recoverable codes on graphs \cite{Maz2015,MazMcgVor2019} (and the associated problems of guessing games on graphs \cite{Gadouleau2018} and index coding \cite{BKL2013}), their extension to recoverable systems \cite{ElishcoBarg2020}, private information retrieval on graphs \cite{Raviv2020}, and others. The problem of node repair under communication constraints introduced here is another instantiation of this broadly defined theme.

\section{Lower bounds on the repair bandwidth}\label{sec:LowerBound}

\subsection{MSR codes: A reminder}  Let $\ff$ be a finite field. A {\em vector code} $\cC$ of length $n$ is an $\ff$-linear subspace of $(\ff^{l})^{n}$ whose codewords can be thought of as $l\times n$ matrices. In the context of storage codes, elements of $\cC$ are often referred to as $n$-words whose coordinates are $l$-vectors over $\ff.$  We further assume that the information contents of the codeword is $kl$ symbols of $\ff$, in other words, that $|\cC|=q^{kl}$, and that any $k$ coordinates suffice to recover the entire codeword. 
Thus, the code has the maximum distance separable (MDS) property, and any $n-k$ erased coordinates can be found from the remaining $k$
ones, accounting for the optimal erasure correction capacity.

Suppose that the coordinates of the codeword are placed on $n$ different storage nodes, and refer to the coordinates themselves
as nodes. The defining property of MSR codes is related to recovering the value of an erased coordinate of the codeword, or repairing a 
single failed node. According to the above description, we can accomplish this by using the information from $k$ functional nodes
and downloading a total of $kl$ symbols of the field $\ff.$ At the same time, this operation supports recovery of the entire
codeword, accomplishing more than we actually need. An important finding of the work \cite{Dimakis10} was to point out that
we can save on the amount of downloaded information by performing the repair based on the contents of $d>k$ helper nodes. To achieve
the saving, each of the helper nodes provides a function of its contents, and \cite{Dimakis10} 
showed that to accomplish the repair it is necessary to download at least $\frac{dl}{d-k+1}$ field symbols. This is smaller than $kl$ for all $d\le n.$ A code $\cC$ with the parameters $(n,k,d,l)$ is called MSR if it supports node recovery with {\em repair bandwidth} meeting the lower bound for the chosen number $d$ of {\em helper nodes}. It is easy to show that for a code to have this property, each of the helper nodes necessarily provides $l/(d-k+1)$ field symbols for the recovery of the failed node (the so-called {\em uniform download} property).

Formally, an $(n,k,d,l)$ linear MDS vector code $\cC$ over $\ff$ is called MSR if there are linear functions $h_i:\ff^{l}\to \ff^{l/(d-k+1)},
i=1,\dots,n$ such that for any $j\in[n]$ and any subset $\{i_1,\dots,i_d\}\subset[n]\backslash \{j\}$ there exists a linear function $g_j:\ff^{d(l/(d-k+1))}\to\ff^l$ such that for any codeword $C=(C_1,\dots,C_n)\in \cC$ the value $C_j$ (the contents of the failed node) is found as
    $$
    C_j=g_j(h_{i_1}(C_{i_1}),\dots,h_{i_d}(C_{i_d})).
    $$
Slightly more generally, the functions $h_i$ could also depend on $j$, but this will not be important below. A number of families of MSR codes are known in the literature, among them constructions of \cite{Rashmi11,Tamo14,Raviv17,Goparaju17,Ye17,Ye16a}, see also a recent survey in \cite{KumarDistStorage2021}. In this paper we use two such families to exemplify our approach to node repair on graphs, namely {product matrix} codes \cite{Rashmi11} and {diagonal-matrix codes} \cite{Ye17}. It will become apparent toward the end of Sec.~\ref{sec:MSR} that any family of $\ff$-linear MSR codes
can be incorporated in our repair scheme. 

\subsection{Problem statement and the lower bounds} The problem that we consider is associating the nodes with the vertices of a graph and performing node repair by transmitting the
information along the edges. Let $\cC$ be an $(n,k,d,l)$ MSR code and suppose that each coordinate of a codeword $C\in \cC$ is written on a vertex of a graph $G(V,E)$ with $|V|=n,$ which represents a distributed storage system with a given connectivity structure. Suppose further that the coordinate $C_f, f\in[n]$ is erased, or, as we will say, that the node $v_f$ has failed. Let $D\subset V\backslash\{v_f\}, |D|=d$ be a set of helper nodes. To repair the failed node, the helper nodes provide information which is communicated to $v_f$ over the edges in $E$. If one discounts the connectivity constraints, then to accomplish the repair, each of the helper nodes sends the information to the failed node over the shortest path in $G$, and the intermediate nodes simply relay this information further, possibly supplementing it with their own data. We call this repair strategy {\em accumulate and forward} (AF).  To examine options for more economical repair including {\em intermediate processing} (IP) of the information, we begin with deriving a lower bound on the repair bandwidth.

Before proceeding, let us further specify our assumptions. We assume that for the failed node $v_f$, the helper nodes $D$ are chosen to be the $d$ \textit{closest} nodes to $v_f$ in terms of the graph distance\footnote{This assumption is not restrictive because, whenever
the set $D$ spans a connected subgraph, our bounds on communication complexity apply for the information processing within that subgraph.}. These nodes can be found by a simple breadth-first search on $G$ starting at $v_f$. Denote by $G_{f,D} = (V_{f,D},E_{f,D})$ the subgraph spanned by $\{v_f\}\cup D.$ Let $t= \max_{v \in D} \rho(v,v_f)$. We will use the following notation for spheres and balls around $v_f$ in $G_{f,D}:$
   $$\Gamma_j(v_f)=\{v\in V_{f,D}:\rho(v,v_f)=j\},\; N_i(v_f)=\cup_{j=1}^i \Gamma_j(v_f),
   $$ 
and we refer to the vertices in $\Gamma_{j}(v_f)$  as the helper nodes in \textit{layer} $j$. The case $t=1$ corresponds to the much-studied graph-agnostic repair scenario \cite{Dimakis10}, and therefore
we exclude it from consideration. Observe that the graph $G_{f,D}$ is not necessarily unique; in particular, there may be multiple possible choices for the helper nodes in the $t$-th layer. 

In the next lemma, we derive lower bounds on the amount of information contributed by a group of helper nodes for the purposes of repair. The lemma is phrased in information-theoretic terms. We assume that the information stored at the vertices is given by random variables $W_i,i\in[n]$ that have some joint distribution on $(\ff^l)^n$ and satisfy $H(W_i)=l$ for all $i$, where $H(\cdot)$ is the entropy. For a subset $A\subset V$ we write $W_A=\{W_i,i\in A\}.$
Let $S_i^f$ be the information provided to $v_f$ by the $i$th helper node in the traditional, fully connected repair setting, and let $S_D^f=\{S_i^f,i\in D\}.$ The RV $S_i^f$ is a function of the contents of the node $v_i$, and the RVs $S_i^f,i\in D$ determine the contents of $v_f,$  or formally,
   \begin{align*}
   &(i)\; H(S_i^f|W_i)=0, \quad i\in D,\\
   &(ii)\; H(W_f|S_D^f)=0.
   \end{align*}
From the cut-set bound \cite{Dimakis10} it follows that $H(S_i^f) \ge {l}/(d-k+1),$ and we assume that this is achieved with equality, i.e., the codes we use have the MSR property.
In the next lemma we bound below the amount of information sent by a subset of helper nodes in an MSR code. 
The proof that we give is close to the arguments that have previously appeared in the literature, see for instance \cite{Shah2012}.
\begin{lemma}\label{lemma:bound} Let $v_f, f\in[n]$ be the failed node.
For a subset of the helper nodes $A \subset D$ let $R_A^f$ be such that $H(R_A^f|W_A)=0$ and
\begin{equation}\label{eq:eqtn17}
		H(W_f|R_A^f, S_{D\backslash A}^f) = 0.
	\end{equation} 
1) If $|A| \ge d-k+1$, then
		$$
		H(R_A^f) \ge l.
		$$
2) If $|A| \le d-k$, then
		$$
		H(R_A^f) \ge \frac{|A|l}{d-k+1}.
		$$
\end{lemma}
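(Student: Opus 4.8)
\emph{Proof plan.} The strategy is to turn both parts into a single entropy computation resting on two standard facts: the MDS property forces any at most $k$ of the node variables $W_i$ to be mutually independent, each of entropy $l$ (so that $H(W_f\mid W_C)=l$ whenever $v_f\notin C$ and $|C|\le k-1$); and the assumed MSR/uniform-download property gives $H(S_i^f)=l/(d-k+1)$ for each $i\in D$, with $S_i^f$ a function of $W_i$. In each part I will condition on a set $W_C$ of \emph{full} node contents, chosen so that $W_C$ is independent of $W_f$ and yet, together with $R_A^f$ (and possibly a few leftover ordinary messages $S_i^f$), still determines $W_f$ through \eqref{eq:eqtn17}.

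\emph{Part 1 ($|A|\ge d-k+1$).} Then $|D\setminus A|=d-|A|\le k-1$, so I can choose $C\subseteq[n]\setminus\{v_f\}$ with $D\setminus A\subseteq C$ and $|C|=k-1$ (feasible since $|[n]\setminus\{v_f\}|=n-1\ge d\ge k-1$; note $C$ may meet $A$, which is harmless). Since $S_{D\setminus A}^f$ is a function of $W_{D\setminus A}$ and $D\setminus A\subseteq C$, equation \eqref{eq:eqtn17} yields $H(W_f\mid R_A^f,W_C)=0$. As $|C\cup\{v_f\}|=k$, the MDS property gives $H(W_f\mid W_C)=l$, whence
\[
l=H(W_f\mid W_C)-H(W_f\mid W_C,R_A^f)=I(W_f;R_A^f\mid W_C)\le H(R_A^f\mid W_C)\le H(R_A^f).
\]

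\emph{Part 2 ($|A|\le d-k$).} Now $|D\setminus A|=d-|A|\ge k$, so I pick $C\subseteq D\setminus A$ with $|C|=k-1$ and set $C'=(D\setminus A)\setminus C$, of size $(d-k+1)-|A|\ge 1$. Replacing $S_C^f$ by $W_C$ in \eqref{eq:eqtn17} gives $H(W_f\mid R_A^f,W_C,S_{C'}^f)=0$, while again $H(W_f\mid W_C)=l$. Hence
\[
l=I\bigl(W_f;(R_A^f,S_{C'}^f)\mid W_C\bigr)\le H(R_A^f\mid W_C)+H(S_{C'}^f\mid W_C)\le H(R_A^f)+\sum_{i\in C'}H(S_i^f)=H(R_A^f)+\frac{(d-k+1-|A|)\,l}{d-k+1},
\]
and rearranging yields $H(R_A^f)\ge \frac{|A|\,l}{d-k+1}$.

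\emph{Main difficulty.} There is no substantial obstacle here: the displays above are essentially the whole argument. The one idea that must be gotten right is the choice of the auxiliary set $C$ — one ``tops up'' $D\setminus A$ to \emph{exactly} $k-1$ full nodes, the largest number that is still independent of $W_f$, so that conditioning on $W_C$ fixes $H(W_f\mid W_C)=l$ without destroying recoverability of $W_f$. The rest is combinatorial bookkeeping of the set sizes (checking $|D\setminus A|\le k-1$ in Part 1 and $|D\setminus A|\ge k$, so that a nonempty $C'$ exists, in Part 2) together with the routine invocation of $H(W_f\mid W_C)=l$ from the MDS hypothesis and $H(S_i^f)=l/(d-k+1)$ from the MSR hypothesis already in force. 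Note that Part 1 is exactly the boundary case $|A|=d-k+1$ of the Part 2 estimate ($C'=\emptyset$), so a single unified argument can be written if desired.
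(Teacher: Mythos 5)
Your proof is correct and follows essentially the same route as the paper's: in both parts you augment $D\setminus A$ (or restrict it) to an auxiliary $(k-1)$-set $C$, use the MDS property to get $H(W_f\mid W_C)=l$ and the uniform-download MSR property for $H(S_i^f)=l/(d-k+1)$, and then bound $H(R_A^f)$; the paper merely phrases the same computation as a two-sided bound on the joint entropy $H(R_A^f,W_{D\setminus A},W_B)$ rather than via mutual information. No gaps.
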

\begin{proof}
Part (1): By the assumption \eqref{eq:eqtn17}, given the contents of all the nodes in $D\backslash A,$ the information contained in $R_A^f$ is sufficient to repair $v_f$, i.e., 
		\begin{equation}\label{eq:eqtn20}
			H(W_f|R_A^f, W_{D\backslash A})=0.
		\end{equation}
		We have $|D\backslash A| \le k-1$. Consider a set $B \subset A$ with $|B| = k-1-|D\backslash A|$. Now,
		\begin{equation}\label{eq:eqtn18}
			H(R_A^f, W_{D\backslash A}, W_{B}) = H(R_A^f, W_{D\backslash A}, W_f, W_{B}) \ge kl,
		\end{equation}
		where the equality in \eqref{eq:eqtn18} follows from \eqref{eq:eqtn20} and the chain rule, and the inequality follows from the MDS property of MSR codes because 
		$|D\backslash A|+|B|+1 = k$. Next observe that
		\begin{align}
			H(R_A^f, W_{ D\backslash A}, W_{B}) &\le H(R_A^f)+H( W_{D\backslash A}, W_{B})\notag\\ &= H(R_A^f)+ (k-1)l,\label{eq:eqtn19}
		\end{align}
		where the equality again uses the independence of any $k-1$ coordinates in an MDS code. Combining \eqref{eq:eqtn18} and \eqref{eq:eqtn19}, we obtain the claimed inequality.
	
 For Part (2), let $C \subseteq D\backslash A$ such that $|C| = k-1$ and let $I = D\backslash \{A\cup C\}$. By the assumption \eqref{eq:eqtn17}, we have
		\begin{equation}\label{eq:eqtn22}
			H(W_f|R_A^f,W_C,S_I^f) = 0.
		\end{equation}
		Now,
		\begin{equation}\label{eq:eqtn23}
			H(R_A^f,W_C,S_I^f) = H(R_A^f,W_f,W_C,S_I^f) \ge kl,
		\end{equation}
where the equality in \eqref{eq:eqtn23} follows from \eqref{eq:eqtn22} and the chain rule, and the inequality follows from the MDS property and the fact that $|C| = k-1$. Next observe that
		\begin{equation}\label{eq:eqtn24}
			\begin{split}
				&H(R_A^f,W_C,S_I^f)\\ &\le H(R_A^f)+H( W_C)+H(S_I^f)\\ 
				&\le H(R_A^f)+H( W_C)+\sum_{i \in D\backslash\{A\cup C\}}H(S_i^f)\\
				&= H(R_A^f)+ (k-1)l+\frac{(d-(k-1)-|A|)l}{d-k+1}
			\end{split}
		\end{equation}
		where we again use the independence of any $k-1$ coordinates in an MDS code. Combining \eqref{eq:eqtn23} and \eqref{eq:eqtn24}, we obtain the claimed inequality.
\end{proof}
Rephrasing this lemma, we obtain a lower bound on the amount of information transmitted between the layers in $G_{f,D}$.
\begin{proposition}\label{prop:layer}
Let $R_{j}^f$ be the random variable denoting the information flow from the $j$-th layer to the $(j-1)$-th layer.  Then 
$$ H(R_j^f) \ge \min\Big\{l, \frac{|\cup_{i=j}^t\Gamma_{i}(v_f)|\cdot l}{d-k+1}\Big\}$$
\end{proposition}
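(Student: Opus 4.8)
The plan is to deduce the bound directly from Lemma~\ref{lemma:bound} by taking the subset $A$ of helper nodes to be precisely the outer layers. Fix $j\in\{1,\dots,t\}$ and set $A=\bigcup_{i=j}^{t}\Gamma_i(v_f)$, so that $A\subseteq D$ is the set of helper nodes lying in layers $j,\dots,t$, while $D\backslash A=N_{j-1}(v_f)$ collects the helper nodes in the inner layers $1,\dots,j-1$. Since the layers are BFS layers of $G_{f,D}$, there are no edges of $G_{f,D}$ joining $\Gamma_i(v_f)$ and $\Gamma_{i'}(v_f)$ with $|i-i'|\ge 2$; hence every walk in $G_{f,D}$ from a vertex of $A$ to $v_f$ must use an edge between $\Gamma_j(v_f)$ and $\Gamma_{j-1}(v_f)$. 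These edges therefore separate $A$ from $v_f$, and the message $R_j^f$ crossing them is the only conduit through which the contents of the outer layers can influence $v_f$. As $R_j^f$ is generated from the data held in layers $\ge j$, we get $H(R_j^f\mid W_A)=0$, which is the first hypothesis of Lemma~\ref{lemma:bound}.

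Next I would check the second hypothesis, $H(W_f\mid R_j^f, S_{D\backslash A}^f)=0$. A correct repair scheme reconstructs $W_f$ at $v_f$ from what it ultimately receives; by the previous paragraph, everything reaching $v_f$ is a function of $R_j^f$ together with the data supplied by the inner helper nodes, whose budgeted contributions are the MSR helper symbols $S_i^f=h_i(W_i)$, $i\in D\backslash A$. Thus $R_j^f$ fulfils the role of $R_A^f$ in Lemma~\ref{lemma:bound}. Because $A\subseteq D$ we have $|A|\le d$, so either $|A|\ge d-k+1$ or $|A|\le d-k$, and exactly one part of the lemma applies. In the first case $H(R_j^f)\ge l$, and since $|A|l/(d-k+1)\ge l$ this equals $\min\{l,\,|A|l/(d-k+1)\}$; in the second case $H(R_j^f)\ge |A|l/(d-k+1)$, and since $|A|l/(d-k+1)\le l$ for $|A|\le d-k$ this is again the same minimum. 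Substituting $|A|=\big|\bigcup_{i=j}^{t}\Gamma_i(v_f)\big|$ gives the proposition.

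The step I expect to be the main obstacle is the rigorous justification that the layer boundary is a genuine information bottleneck: one must argue that no information about $W_A$ reaches $v_f$ other than through $R_j^f$, and that on the inner side it is legitimate to charge the contribution of each $v_i$, $i\in D\backslash A$, to its MSR helper symbol $S_i^f$. This requires pinning down a precise model of the repair process on $G_{f,D}$ — each vertex produces its outgoing messages as a function of its stored symbol and the messages it has already received, the flow being directed toward $v_f$, and each helper node's accounted contribution being $S_i^f$ — and verifying that under this model the two displayed entropy identities hold. Once the model is fixed, the conclusion is an immediate application of Lemma~\ref{lemma:bound}.
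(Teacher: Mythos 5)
Your proof is correct and follows exactly the paper's route: the paper's entire proof is the one-line observation that the claim follows from Lemma~\ref{lemma:bound} applied with $A=\cup_{i=j}^t\Gamma_i(v_f)$, with the two cases of the lemma matching the two branches of the minimum just as you verify. Your additional discussion of why the layer boundary is an information bottleneck makes explicit a modeling assumption the paper leaves implicit (it only remarks afterward that $R_j^f$ is the joint information crossing from layer $j$ to layer $j-1$), but it does not change the argument.
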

\noindent{\em Proof.}
	Follows from Lemma \ref{lemma:bound} by taking $A=\cup_{i=j}^t\Gamma_{i}(v_f)$.\qed

Note that $R_j^f$ in the above proposition represents the joint information transmitted by all the nodes in layer $j$ to layer $j-1$ 
and hence it does not account for any other communication occurring among the helper nodes.
If $G_{f,D}$ is a rooted tree, such communication does not occur, and this will be the main (but not the only) case studied below.
In this case we can make our arguments more precise.

\vspace*{.1in}

\begin{figure}[th]\begin{center}\scalebox{0.3}{\begin{tikzpicture}
				[
				vertex_style/.style={circle, draw, fill,minimum size=0.08cm,scale=1.25},
				vertex_style1/.style={circle, draw, fill=blue,minimum size=0.08cm,scale=1.25}
				]
				
				\useasboundingbox (-10,-10) rectangle (10,10);
				
				\begin{scope}[rotate=90]
					
					\node[circle,draw=red,fill=red,minimum size=0.1cm,scale=1.5,label={[xshift=1.2cm, yshift=-0.7cm,minimum size=0.5cm,color=red,scale=3]$v_f$}] (0) at (canvas polar cs: radius=0cm,angle=0){};
					
					\foreach \x/\y in {36/1,108/2,180/3,252/4,324/5}{
						\node[vertex_style] (\y) at (canvas polar cs: radius=4cm,angle=\x){};
					}
					\foreach \x/\y in {0/1,0/2,0/3,0/4,0/5}{
						\path[-] (\x) edge [black,ultra thick] (\y);
					}
				
					\foreach \x/\y in {13.5/6,28.5/7,43.5/8,58.5/9}{
						\node[vertex_style1] (\y) at (canvas polar cs: radius=8.5cm,angle=\x){};
						\path[-] (1) edge [blue,thick] (\y); 
					}
					\foreach \x/\y in {85.5/10,100.5/11,115.5/12,130.5/13}{
						\node[vertex_style1] (\y) at (canvas polar cs: radius=8.5cm,angle=\x){};
						\path[-] (2) edge [blue,thick] (\y); 
					}
					\foreach \x/\y in {85.5+72/14,100.5+72/15,115.5+72/16,130.5+72/17}{
						\node[vertex_style1] (\y) at (canvas polar cs: radius=8.5cm,angle=\x){};
						\path[-] (3) edge [blue,thick] (\y); 
					}
					\foreach \x/\y in {85.5+144/19,100.5+144/20,115.5+144/21,130.5+144/22}{
						\node[vertex_style1] (\y) at (canvas polar cs: radius=8.5cm,angle=\x){};
						\path[-] (4) edge [blue,thick] (\y); 
					}
					\foreach \x/\y in {85.5+216/23,100.5+216/24,115.5+216/25,130.5+216/26}{
						\node[vertex_style1] (\y) at (canvas polar cs: radius=8.5cm,angle=\x){};
						\path[-] (5) edge [blue,thick] (\y); 
					}
					\draw[black,ultra thick,dashed,label=x] (0,0) circle (4.0cm) ;
					\draw[blue,ultra thick,dashed] (0,0) circle (8.5cm);
					\node[minimum size=0.5cm,scale=3] at (1.75,-5.5) {$\Gamma_{1}(v_f)$};
					\node[minimum size=0.5cm,scale=3,color=blue] at (5.3,-9) {$\Gamma_{2}(v_f)$};
					\end{scope}
		\end{tikzpicture}}
		\centerline{\footnotesize {\sc Fig.1:} Repair tree of the node $v_f$}
		
	\end{center}
\end{figure}

Let $T_f$ be a rooted spanning tree of $G_{f,D}$ with root $v_f$ (see Fig.1), then it defines the set of descendants of each node in $T_f.$ Let $D(v_i)$ be the set of descendants of $v_i,$ and let $D^\ast(v_i)=D(v_i)\cup\{v_i\}.$ 
The total communication complexity of node repair using the tree $T_f$ is bounded below in the following proposition.
\begin{proposition}
Let $J_f = \{v \in V(T_f)\backslash\{v_f\}: |D^*(v)| \ge d-k+2\}$. The total communication complexity $\beta$ for the repair of node $v_f$ on the repair tree $T_f$ is bounded as
\begin{equation}\label{eq:eqtn25}
\beta \ge |J_f|l+\sum_{v \in V(T_f)\backslash(\{v_f\}\cup J_f)}\frac{|D^*(v)|l}{d-k+1}.
\end{equation} 
\end{proposition}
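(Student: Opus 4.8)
The plan is to charge the total bandwidth edge by edge on the repair tree $T_f$ and invoke Lemma~\ref{lemma:bound} once per edge. Orient every edge of $T_f$ toward the root $v_f$ and, for a non-root vertex $v$, write $R_v^f$ for the message that $v$ sends up to its parent during the repair: in the convergecast model used throughout, $v$ forms this message from its own contribution $S_v^f$ together with the messages it has received from its children. Because all communication travels along edges of $T_f$ in the direction of $v_f$, the number of field symbols crossing the edge above $v$ is at least $H(R_v^f)$ under the normalization $H(W_i)=l$, so $\beta\ge\sum_{v\in V(T_f)\backslash\{v_f\}}H(R_v^f)$. It therefore suffices to bound each $H(R_v^f)$ from below by applying Lemma~\ref{lemma:bound} with $A=D^*(v)$; this is legitimate because $v_f$, being the root, is a descendant of no vertex, so $D^*(v)\subseteq D$.

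First I would check the two hypotheses of Lemma~\ref{lemma:bound} for $R_A^f:=R_v^f$ with $A=D^*(v)$. The requirement $H(R_v^f\mid W_{D^*(v)})=0$ follows by unrolling the convergecast recursion within the subtree rooted at $v$: each vertex's outgoing message is a deterministic function of its own contribution and of its children's messages, so $R_v^f$ is ultimately a function of $\{S_i^f:i\in D^*(v)\}$ and hence of $W_{D^*(v)}$. For $H(W_f\mid R_v^f,S_{D\backslash D^*(v)}^f)=0$, I would replay the remainder of the protocol around the subtree of $v$: along the path $v=u_0,u_1,\dots,u_m=v_f$ from $v$ to the root, the vertices $u_1,\dots,u_{m-1}$, all subtrees hanging off this path, and all subtrees rooted at the other children of $v_f$ lie entirely inside $D\backslash D^*(v)$, so each message they generate is a function of $\{S_i^f:i\in D\backslash D^*(v)\}$; starting from the known value $R_v^f$ on the first edge one then recomputes every message traveling up the path and hence every message arriving at $v_f$, and by correctness of the repair protocol these determine $W_f$.

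With both hypotheses verified, Lemma~\ref{lemma:bound} gives $H(R_v^f)\ge l$ whenever $|D^*(v)|\ge d-k+1$ and $H(R_v^f)\ge |D^*(v)|l/(d-k+1)$ whenever $|D^*(v)|\le d-k$; since $(d-k+1)l/(d-k+1)=l$, the bound $H(R_v^f)\ge|D^*(v)|l/(d-k+1)$ in fact holds for every non-root $v$ with $|D^*(v)|\le d-k+1$, i.e.\ for every $v\in V(T_f)\backslash(\{v_f\}\cup J_f)$, while $H(R_v^f)\ge l$ for every $v\in J_f$. Summing these over all non-root vertices reproduces the right-hand side of \eqref{eq:eqtn25}. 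The step I expect to be the main obstacle is the second paragraph, namely pinning down the convergecast/processing model precisely enough that ``the message above $v$ depends only on the contributions of the subtree of $v$'' and ``the rest of the protocol can be simulated from the contributions outside that subtree'' are rigorous; both ultimately rest on $T_f$ being acyclic, so that the subtree of $v$ is joined to the remainder of the graph only through the single edge above $v$, and once the model is fixed they are routine. Matching the two regimes of Lemma~\ref{lemma:bound} to the threshold $d-k+2$ in the definition of $J_f$ is then immediate.
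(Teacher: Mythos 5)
Your proof is correct and follows essentially the same route as the paper's: since $T_f$ is a tree, all information leaving the subtree spanned by $D^*(v)$ must cross the single edge above $v$, and Lemma~\ref{lemma:bound} applied with $A=D^*(v)$ gives the per-vertex lower bounds, which are then summed over the non-root vertices. Your explicit treatment of the boundary case $|D^*(v)|=d-k+1$ (where part (1) of the lemma still yields $l=|D^*(v)|l/(d-k+1)$, so the second summand's formula remains valid for all $v\notin J_f$) is welcome extra care, since the paper's assertion that $v\notin J_f$ implies $|D^*(v)|\le d-k$ is off by one.
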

\begin{proof}
For every non-root node $v \notin J_f$, we have $|D^*(v)| \le d-k$. Since $T_f$ is a tree, any outflow of information 
out of the subtree spanned by $D^*(v)$ passes through the node $v$, so it needs to transmit at least $|D^*(v)|\cdot l/(d-k+1)$ symbols to 
its immediate parent in $T_f$ by Lemma \ref{lemma:bound}. By the same lemma, every node $v \in J_f$ 
needs to transmit at least $l$ symbols to its immediate parent.
\end{proof}

For comparison purposes we also write out an expression for the AF repair procedure of MSR codes, described in the beginning of this section. Its repair bandwidth can be found as
\begin{equation}\label{eq:bAF}
	\beta_{\text{\rm AF}} = \Big( t(d-|N_{t-1}(v_f)|)+\sum_{i=1}^{t-1}i|\Gamma_i(v_f)| \Big)\frac{l}{d-k+1}.
\end{equation}
Every helper node provides $l/(d-k+1)$ symbols of information for repair, so for a node $v\not\in J_f$ 
the AF strategy is trivially optimal by part (2) of Lemma \ref{lemma:bound}. At the same time, for nodes $v \in J_f$ a better communication strategy is not a priori ruled out. This problem is addressed in the next section.

\subsection{A bound for repair of multiple nodes}\label{sec:multiple_bound}
Before proceeding further, let us note a simple extension of Lemma \ref{lemma:bound} to the case of multiple failed nodes, which 
is often studied for regenerating codes under full node connectivity \cite{Ye17}. The repair of multiple nodes in a graph
depends on their mutual placement and their connections to the helpers, and gives rise to several options. Denote by $F\subset V$ 
the set of failed nodes, and let $|F|=h\ge 1$. To keep the argument manageable, we assume that recovery of all the nodes in $F$ relies on a {\em common} set $D$ of helper nodes. With this assumption Lemma \ref{lemma:bound} affords the following extension.

\begin{lemma}\label{lemma:bound_mult} Let $F \subset [n],|F| = h, 1\le h \le n-d$ be the set of failed nodes.
	For a subset of the helper nodes $A \subset D$ let $R_A^F$ be a function of $S_A^F$ such that 
	\begin{equation}\label{eq:eqtn_mult1}
		H(W_F|R_A^F, S_{D\backslash A}^F) = 0.
	\end{equation} 
	1) If $|A| \ge d-k+h$, then
	$$
	H(R_A^F) \ge hl.
	$$
	2) If $|A| \le d-k+h-1$, then
	$$
	H(R_A^F) \ge \frac{h|A|l}{d-k+h}.
	$$
\end{lemma}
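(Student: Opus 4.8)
The plan is to mimic the two-part proof of Lemma \ref{lemma:bound}, replacing the single failed node $v_f$ by the set $F$ of $h$ failed nodes throughout, and keeping track of the factor $h$ that multiplies the entropy bounds. The underlying combinatorial identity that drives everything is the cut-set bound for multiple repair: under the MSR-type assumption each helper provides $H(S_i^F)\ge hl/(d-k+h)$, and we take this with equality. For the data-collection side we will repeatedly use the MDS property in the form that any $k$ coordinates of the code jointly carry $kl$ symbols of entropy, while any set of at most $k-1$ coordinates are mutually independent with total entropy equal to its size times $l$.

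For Part (1), suppose $|A|\ge d-k+h$. From \eqref{eq:eqtn_mult1} and $H(R_A^F|S_A^F)=0$ we get $H(W_F\mid R_A^F, W_{D\setminus A})=0$ (replacing the helper summaries by the full node contents only increases information). Since $|D\setminus A|\le k-h$, choose $B\subset A$ with $|B|=k-h-|D\setminus A|$, so that $|D\setminus A|+|B|+h=k$. Then by the chain rule and the MDS property,
\begin{align*}
H(R_A^F,W_{D\setminus A},W_B)=H(R_A^F,W_{D\setminus A},W_B,W_F)\ge kl,
\end{align*}
while subadditivity and the independence of the $k-h$ coordinates in $(D\setminus A)\cup B$ give
\begin{align*}
H(R_A^F,W_{D\setminus A},W_B)\le H(R_A^F)+(k-h)l.
\end{align*}
Combining the two displays yields $H(R_A^F)\ge hl$.

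For Part (2), suppose $|A|\le d-k+h-1$. Pick $C\subseteq D\setminus A$ with $|C|=k-h$ (possible since $|D\setminus A|=d-|A|\ge k-h+1>k-h$), and set $I=D\setminus(A\cup C)$. From \eqref{eq:eqtn_mult1} we obtain $H(W_F\mid R_A^F,W_C,S_I^F)=0$, hence by the chain rule and the MDS property (now $|C|+h=k$)
\begin{align*}
H(R_A^F,W_C,S_I^F)=H(R_A^F,W_C,W_F,S_I^F)\ge kl.
\end{align*}
On the other side, subadditivity, the independence of the $k-h$ coordinates in $C$, and the uniform-download bound $H(S_i^F)=hl/(d-k+h)$ give
\begin{align*}
H(R_A^F,W_C,S_I^F)\le H(R_A^F)+(k-h)l+\frac{(d-(k-h)-|A|)hl}{d-k+h}.
\end{align*}
Subtracting, $H(R_A^F)\ge kl-(k-h)l-\frac{(d-k+h-|A|)hl}{d-k+h}=hl-\frac{(d-k+h-|A|)hl}{d-k+h}=\frac{h|A|l}{d-k+h}$, as claimed. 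The only real point requiring care — the "main obstacle" — is bookkeeping: making sure the set sizes work out ($|D\setminus A|\le k-h$ in Part (1), existence of $C$ of size $k-h$ in Part (2)) and that the hypothesis $1\le h\le n-d$ guarantees $d+h\le n$ so that the relevant unions of at most $k$ coordinates actually exist inside $[n]$; once the counting is pinned down, the entropy manipulations are identical in form to those of Lemma \ref{lemma:bound}.
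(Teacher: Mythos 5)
Your proof is correct and follows essentially the same route as the paper's own argument (given in the preprint's Appendix B): in Part (1) augment $R_A^F$ with $W_{D\setminus A}$ and a set $B\subset A$ of size $k-h-|D\setminus A|$ to invoke the MDS bound $kl$, and in Part (2) use a set $C$ of $k-h$ full node contents plus the uniform-download bound $H(S_i^F)=hl/(d-k+h)$ for the remaining helpers. The bookkeeping on set sizes and the use of $h\le n-d$ are handled correctly.
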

The proof follows closely the proof of Lemma~\ref{lemma:bound} and will be omitted (it is included in the preprint version of this paper, arXiv:2108.00939, Appendix B). As above, in Lemma \ref{lemma:bound_mult} we sidestepped the specific way of communicating the information from the helpers to the failed nodes, limiting ourselves to the lower bounds on the information provided by the helpers. The communication complexity of implementing the repair depends on the topology of the graph and on the relative location of the failed nodes and the helpers.
In Sec.~\ref{sec:multiple} we present a construction that, under certain assumptions, attains the bounds of this lemma, performing the intermediate processing instead of relaying and gaining in communication complexity over the AF protocol.

%
%

\section{MSR constructions for repair of graph vertices}\label{sec:MSR}

In this section we show that linear MSR codes support a repair procedure that attains the lower bound \eqref{eq:eqtn25} on the 
communication complexity. While this procedure is general, we begin with illustrating it for product-matrix codes of \cite{Rashmi11}.
Then we consider several examples of graphs, estimating the savings of repair complexity compared to the AF repair. After that,
we show that conceptually the same procedure applies to the diagonal-matrix codes of \cite{Ye17}, and then briefly discuss
a general version of this repair protocol as it applies to all families of $\ff$-linear MSR codes.

\subsection{Product-matrix (PM) codes} In their standard form, PM codes are described as follows. Fix the code length $n$ and the dimension parameter $k,$ and take $d=2k-2,l=k-1.$
The code $\cC: \ff^{k(k-1)}\to \ff^{l n}$ encodes $k(k-1)$ symbols of $\ff$ into a codeword of length $n$ with each coordinate formed
of $l$ symbols. To define this mapping, form a matrix $M=[S_1\,|\, S_2]^T$, where $S_1,S_2$ are symmetric matrices of order $l$. The number of unique symbols in $M$ equals $2\binom{l+1}2=k(k-1)$. Next let $x_i,i=1,\dots,n$ be distinct elements of $\ff,$ let
   $$
   \Phi=[\phi_1^T,\dots,\phi_n^T]^T
   $$
be a Vandermonde matrix with rows of the form $\phi_i=(1,x_i,\dots,x_i^{l-1})$ and take $\Lambda=\text{diag}(\lambda_1,\dots,\lambda_n)$ with $\lambda_i=x_i^l,i=1,\dots,n.$ Now form an $n\times 2l$ matrix 
$\Psi=[\Phi, \Lambda \Phi]$ The encoding mapping $\cC$ sends the matrix $M$ to $C=\Psi M$, which is an $n \times l$ matrix, 
and thus the contents of the node $v_i, i=1,\dots,n$ is given by the product 
  \begin{equation}\label{eq:nc}
  C_i:=[\phi_i,x_i^l \phi_i]M=\phi_i S_1+\lambda_i\phi_i S_2.
  \end{equation}

To describe the repair procedure from \cite{Rashmi11}, suppose without loss of generality that the helper nodes form the set $D=\{1,\dots,d\}$ and that the failed node's index is $f\in[n]\backslash[D].$ The original node repair (erasure correction) procedure proposed in \cite{Rashmi11} proceeds as follows. The information downloaded by the failed node $v_f$ from the helper node $i\in D$ is given by $(\phi_iS_1+\lambda_i\phi_iS_2)\phi_f^T,$ i.e., each helper node provides one symbol of $\ff$. Thus, the failed node
downloads a $d$-dimensional vector $y=y_{f,D}$ given by
    \begin{equation}\label{eq:y}
    y=\Psi_D M\phi_f^T=\Psi_D\begin{bmatrix}S_1\phi_f^T\\S_2\phi_f^T\end{bmatrix},
    \end{equation}
where $\Psi_D$ is the submatrix of $\Psi$ formed of the first $d=2l$ rows. The matrix $\Psi_D$ is square $d\times d$ and it is invertible by construction, so we can compute the vectors $(S_1\phi_f^T)^T=\phi_f S_1$ and $(S_2\phi_f^T)^T=\phi_f S_2$. 
By \eqref{eq:nc} the sum $\phi_f S_1+\lambda_f\phi_f S_2$ equals $C_f,$ and
this completes the repair process. 

Now we will modify the repair procedure in a way that supports processing the information received by the nodes 
in the repair tree as it is passed to the failed node $v_f.$ Note that by \eqref{eq:y}
   \begin{equation}\label{eq:-1}
   \phi_f M^T=y^T (\Psi_D^{T})^{-1}.
    \end{equation}
 Using \eqref{eq:nc},\,\eqref{eq:-1}, the contents of the node $v_f$ can be written as
\begin{align*}
   C_f&=\phi_f M^T \begin{bmatrix}
	I_{l} \\ \lambda_f I_{l}\end{bmatrix}=y^T (\Psi_D^{T})^{-1} \begin{bmatrix}
	I_{l} \\ \lambda_f I_{l}\end{bmatrix}.
\end{align*}   
Introduce a $d\times l$ matrix
   $
   U:=(\Psi_D^{T})^{-1} \begin{bmatrix}
	I_{l} \\ \lambda_f I_{l}\end{bmatrix}
   $
and denote its rows by $U_i,i=1,\dots,d,$ then
we have
  \begin{equation}\label{eq:sep}
  C_f=\sum_{i=1}^d y_i U_i.
  \end{equation}
Note that the matrix $U$ does not depend on the codeword, and can be precomputed. Overall this rewriting of the
repair process \eqref{eq:y} enables us to separate the contributions of the helper nodes, and offers savings in the
communication cost of repair. Recalling our notation $D^\ast(v_i),$ suppose that, instead of transmitting the symbol $y_i$ to its parent, the node  transmits the sum $\sum_{j\in D^\ast(v_i)}y_j U_j.$ Since we are now moving vectors rather than individual symbols along the edges of $T_f$,
this may seem wasteful; however remember that the symbols are relayed many times, and that from some point on,
the repair process has to move at least $l$ symbols along the edge by Lemma \ref{lemma:bound}. 
To justify the savings, suppose that $|D^\ast(v_i)|\ge d-k+2=k,$ then forwarding the symbols $(y_j, j\in D^\ast(v_i))$ from $v_i$ to its predecessor in $T_f$
amounts to sending at least $k$ symbols, whereas transmitting the sum $\sum_{j\in D^\ast(v_i)}y_j U_j$ requires
$l=k-1$ transmissions.

Therefore, the communication for repair can be summarized as follows. First, the leaf nodes in $T_f$ send their
symbols $y_i$ one level up, then the nodes that received these symbols send them together with their symbols
$y_i$, etc. If at any stage a node $v_i$ has $d-k+1$ or more descendants, then it switches to transmitting 
\vspace*{-.05in} \begin{equation}\label{eq:lc}
 \sum_{j\in D^\ast(v_i)}y_j U_j.
 \vspace*{-.05in}\end{equation}
  Finally if a node $v_i$ received a vector $\sum_{j\in D(v_i)}y_j U_j$ from 
its immediate descendant, it adds to it the vector $y_i U_i$ and forwards it to its parent in $T_f.$

In summary, we have shown that, for every node $v_i\in T_f$ with $|D(v_i)|\ge d-k+1$ descendants in $T_f$ there exists a repair procedure under which $v_i$ transmits exactly $l$ symbols of $\ff$ to its parent in $T_f.$
This proves the following theorem.
\begin{theorem}\label{thm:PM_repair} Suppose a codeword of a PM code $\cC$ is written on the vertices of a graph $G,$ and let $T_f$ be the
repair tree of a failed node $v_f.$ There exists an explicit repair procedure that achieves the lower bound in (\ref{eq:eqtn25}) with equality. 
\end{theorem}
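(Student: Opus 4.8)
The plan is to verify that the repair procedure just described for PM codes transmits exactly the number of symbols prescribed by the right-hand side of \eqref{eq:eqtn25}, so that the lower bound of the preceding proposition is met with equality. The key observation, already established in the text via \eqref{eq:sep}, is that $C_f = \sum_{i=1}^d y_i U_i$ where the row vectors $U_i \in \ff^l$ are codeword-independent and precomputable, and $y_i$ is the single field symbol $C_i \phi_f^T$ that helper $v_i$ can compute from its own contents. This separability is what lets a node accumulate partial sums of the form $\sum_{j \in D^*(v_i)} y_j U_j$ rather than relaying raw symbols.

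First I would fix the repair tree $T_f$ and argue correctness: by downward induction on $T_f$, each node $v_i$ forwards to its parent either (a) the tuple $(y_j : j \in D^*(v_i))$ of raw symbols, when $|D^*(v_i)| \le d-k+1$, or (b) the vector $\sum_{j \in D^*(v_i)} y_j U_j \in \ff^l$, once $|D^*(v_i)| \ge d-k+2 = k$; in case (b) a node combines an incoming partial sum from its child with its own term $y_i U_i$, and in the transitional case where a node first reaches $|D^*(v_i)| \ge k$ it converts the collected raw symbols into the vector form. Summing the contributions reaching $v_f$ over its children yields $\sum_{i=1}^d y_i U_i = C_f$, so the repair succeeds. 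The set of nodes operating in regime (b) is exactly $J_f$ as defined in the preceding proposition.

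Next I would count the transmissions edge by edge. Consider the edge from a non-root node $v$ to its parent. If $v \notin J_f$ then $|D^*(v)| \le d-k+1$; since every helper in $D^*(v)$ contributes exactly $l/(d-k+1)$ symbols under the uniform-download property of MSR codes, and these are forwarded without compression, the number of symbols on this edge is $|D^*(v)| \cdot l/(d-k+1)$. If $v \in J_f$ then $v$ forwards a single vector in $\ff^l$, i.e., exactly $l$ symbols. Here one must check the boundary bookkeeping carefully: the "transitional" node forwards $l$ symbols (the vector form), consistent with membership in $J_f$, and the inequality $d-k+2 = k = l+1 > l$ for PM codes is precisely what guarantees switching to the vector form is never wasteful. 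Summing over all edges, $\beta = |J_f| \, l + \sum_{v \in V(T_f) \setminus (\{v_f\} \cup J_f)} |D^*(v)| \, l/(d-k+1)$, which matches the lower bound \eqref{eq:eqtn25} exactly.

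The main obstacle I anticipate is not any deep inequality but the careful handling of the \emph{transitional layer} and of nodes with multiple children: when several partial sums (some raw tuples, some already vectorized) converge at a single node $v$, one must argue that $v$ can always reduce them to a single object of the correct type (raw tuple if $|D^*(v)| \le d-k+1$, vector otherwise) whose transmission cost is exactly the claimed value, with no hidden overhead. This requires tracking that the "conversion" step $\{y_j\}_{j} \mapsto \sum_j y_j U_j$ is purely local (uses only the precomputed $U$) and that the count $l/(d-k+1) = 1$ for PM codes makes the raw-symbol regime coincide numerically with "one symbol per descendant." A secondary point worth a sentence is that the argument never used any property of PM codes beyond \eqref{eq:sep} — the decomposition of $C_f$ into a codeword-independent linear combination of the downloaded symbols — which is exactly why the paper can later extend it to all $\ff$-linear MSR codes.
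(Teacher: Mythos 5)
Your proposal is correct and follows essentially the same route as the paper: it uses the codeword-independent decomposition $C_f=\sum_i y_i U_i$ from \eqref{eq:sep}, has nodes relay raw symbols until $|D^*(v)|\ge d-k+2=k$ and then switch to the $l$-symbol vector $\sum_{j\in D^*(v)}y_jU_j$, and matches the resulting edge-by-edge count to \eqref{eq:eqtn25}. Your extra care with the transitional layer and multi-child aggregation, and the closing remark that only $\ff$-linearity of \eqref{eq:sep} is used, are consistent with (and slightly more explicit than) the paper's own argument.
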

To match the above procedure to the bound (\ref{eq:eqtn25}), recall that each helper node in the PM code construction provides one symbol of $\ff$ for repair.

%
%

\subsection{Examples of graphs} Let us give a few examples in which the proposed repair procedure gains in communication complexity over the AF repair. For simplicity we will assume that each helper node provides one symbol of $\ff$ for the repair of $v_f$.

\vspace*{.05in}1. Suppose that the repair tree $T_f$ is a {\em star} with $d$ rays in which $v_f$ is one of the leaves and the remaining $d$ vertices serve as the helper nodes. Using the AF repair, each of the nonerased leaves sends its symbol to the center, which then sends $d$ symbols to $v_f,$ so $\beta_{\text{\rm AF}}=2d-1=4k-5.$ At the same time, the repair bandwidth with intermediate processing equals
$\beta_{\text{\rm IP}}=3k-4$ because the symbols of the helpers other than the center are aggregated using \eqref{eq:lc} before relaying to $v_f.$ Another elementary example, which results in a similar improvement, arises when the repair tree $T_f$ is a {\em path} on $d+1$ vertices.

\vspace*{.05in}2. {\em Regular tree}. Suppose that $G$ is an $(r+1)$-regular graph, and the repair tree $T_f$ of every node is $(r+1)$-regular 
as shown in Fig.1. We need to take the depth $t$ of the tree to satisfy 
   $
   (r+1)\sum_{i=0}^{t-1} r^i\ge d;
   $
suppose for simplicity that this holds with equality.   
The communication complexity of the AF repair is \par
 $$
			\beta_{\text{\rm AF}}	= td-(r+1)\sum_{i=0}^{t-2} (t-i-1)r^{i}.
  $$
Suppose that $r>d-k+1$, then from the next to last layer 
we can switch to uploading the linear combination of the form \eqref{eq:lc}, resulting in the repair bandwidth
	$
			\beta_{\text{\rm IP}}
			= d+(d-k)(r+1)\sum_{i=0}^{t-2}r^i.
	$
The difference
	$$	\beta_{\text{\rm AF}}-\beta_{\text{\rm IP}} = (t-1)d - (r+1)\sum_{i=0}^{t-2}((d-k)+(t-i-1))r^i
	$$
is positive if $\frac{d-k}{d}$ is small, i.e., if $d\ge k$ is close to $k$. Note that the regime of small $d-k$  arises
also as a sufficient condition of repair bandwidth savings for random graphs in Sec.~\ref{sec:random}.

\vspace*{.05in}3. {\em Galton-Watson tree.} Having in mind a scenario in which the helper nodes are chosen randomly and independently by the nodes already included in the repair tree $T_f,$ suppose that it is constructed following a branching process with the root $v_f,$ resulting in 
a Galton-Watson ensemble of random trees $\cT_f.$ 
In this example we choose a simple ``offspring pmf'' under which a node has 1 or 2 descendants with probability $p$ and $1-p,$
respectively. Let $Z_i=|\Gamma_i(v_f)|$ be the total number of vertices in layer $i$ of $\cT_f.$
Thus, $\Pr(Z_1=1) = p = 1-\Pr(Z_1=2)$ where $p \in (0,1)$ is chosen to satisfy $m:=\mathbb{E}(Z_1) = 2-p >1$ so that we are operating in the {\em supercritical regime.} Assuming that a tree of depth $t$ suffices for repair, we have 
   \begin{gather*}
\beta_{AF} = td - \sum_{i=1}^{t-1}(t-i)Z_i;\;\;
  \mathbb{E}[\beta_{\text{\rm AF}}] = td-\sum_{i=1}^{t-1}(t-i)m^i.
   \end{gather*}
If we assume that the intermediate processing technique can be applied to layers $i, 1\le i\le s,$ then an easy calculation yields
    $$	
    \mathbb{E}[\beta_{\text{IP}}]
	= (t-s)d + (d-k+1-t+s)\sum_{i=1}^sm^i - \sum_{i=s+1}^{t-1}(t-i)m^i
   $$
and so 
  $$
	\mathbb{E}[\beta_{\text{\rm AF}}-\beta_{\text{\rm IP}}] 	= sd-\sum_{i=1}^s(2-p)^i(d-k+1+s-i)
  $$
which is positive for small values of $d-k$ and large $d$. 

\remove{5. {\em Expander graphs.} Suppose that the graph $G=(V,E), |V|=n$ is $r$-regular and is $(\gamma,\alpha)$-expanding, i.e., for any $S \subset V$ with $|S| \le \gamma n, \gamma \in (0,1/2)$ the neighborhood of $\cN(S) = \{v \in V\backslash S : (u,v) \in E, u \in S\}$ 
satisfies $|\cN(S)| \ge \alpha |S|$ where $\alpha$ is the expansion constant. Consider such a graph and let $d$ be chosen such that a 3-layer repair tree is sufficient. By the expansion property the set $\Gamma_1(v_f)$ (of size $r$) is connected to a set of nodes $\cN(\Gamma_1(v_f))$ of size at least $\alpha r.$ Further, $\Gamma_{2}(v_f)$ satisfies $\alpha(r-1) \le |\Gamma_{2}(v_f)| \le r( r-1)$. So,
\begin{align*}
	\beta_{\text{\rm AF}} &= 3d-\sum_{i=1}^{2}(3-i)|\Gamma_{i}(v_f)| \\
	&\le 3d - \alpha r +1 -2r
\end{align*}
If $d-k \le r$, then intermediate processing can be started from the penultimate layer and
\begin{align*}
	\beta_{\text{\rm IP}} &= \left[d-(|\Gamma_{1}(v_f)|+|\Gamma_{2}(v_f)|)\right]+(d-k+1)\sum_{i=1}^{2}|\Gamma_i(v_f)|\\
	& = d+(d-k)(|\Gamma_{1}(v_f)|+|\Gamma_{2}(v_f)|)\\
	& \le d+(d-k)(r+ r(r-1))\\
	& = d+(d-k)r^2
\end{align*}
The difference between the two upper bounds is  
$$2d-(\alpha+2+(d-k)r)r+1$$ which is positive for large $d$ and small $\alpha,r$ and $d-k$.}

%
%

\subsection{Diagonal-matrix MSR codes}
While the product-matrix codes are limited by the code rate $k/n<1/2$, the construction of \cite{Ye17} removes this limitation, providing
explicit families of exact-repair MSR codes for all possible values of $n-1\ge d \ge k$. 
\remove{Under full connectivity these codes also have the optimal 
repair bandwidth (the lowest communication cost of repair of a single or multiple nodes). 
We briefly recap this construction to argue that it is also amenable to a modification of the repair protocol that makes it suitable for 
repair on graphs.}

The codes in \cite{Ye17} are defined in terms of the parity-check matrix which has a block diagonal structure. 
Below we assume that the parameters of the $(n,k,l)$ array code $\cC$ are fixed, 
and that $d=n-1, l=r^n,$ where $r:=n-k.$ The code is defined over a finite field $\ff$ of size at least $rn.$ 
Let $\{\lambda_{i,j}\}_{i \in [n],j=0,1,\dots,r-1}$ be $rn$ distinct elements of $\ff.$ For an integer $a\in\{0,1,\dots,l-1\}$ 
let $a_i$ be the $i$-th digit of its $r$-ary expansion. For $i=1,2,\dots,n$ define the matrix $A_i=\diag(\lambda_{i,a_i}, a=0,\dots,l-1).$
The code $\cC$ is formed of the codewords $C=(C_1,\dots,C_n)\in (\ff^l)^n$ that satisfy the following set of $r$ parity-check equations:
\begin{equation}\label{eq:eqtn27}
	\sum_{i=1}^n A_i^{t-1}C_i = 0 , \quad t=1,\dots,r.
\end{equation}
Let $C_i=(c_{i,a},a=0,\dots,l-1)^T$. Since the matrices $A_i$ are diagonal, the parity check equations \eqref{eq:eqtn27} take the form
\begin{equation}\label{eq:eqtn29}
	\sum_{i=1}^n\lambda_{i,a_i}^{t-1}c_{i,a} = 0, \quad t = 1,\dots,r,\;a = 0,1,\dots,l-1.
\end{equation}

The node repair with no communication constraints proceeds as follows. Assume that the node $i \in [n]$ has failed. 
We partition the set of coordinates $(c_{i,a})$ into groups of size $r$ whose indices differ only in the $i$th entry.
Namely, start with some $a\in\{0,\dots,l-1\}$ and consider the set of indices $a(i,u) = (a_n,\dots,a_{i+1},u,a_{i-1},\dots,a_1)$, $u=0,1,\dots,r-1.$ The information downloaded from the helper
node $j\in [n]\backslash\{i\}$ is given by
   $
	\mu_{j,i}^{(a)} = \sum_{u=0}^{r-1} c_{j,a(i,u)}.
   $
Writing \eqref{eq:eqtn29} for each of the indices $a(i,u),$ we obtain
   $$
	\lambda_{i,u}^tc_{i,a(i,u)} + \sum_{j\ne i}\lambda_{j,a_j}^tc_{j,a(i,u)}=0, \quad t = 0,1,\dots,r-1.
	$$
Summing these equations on $u$ and writing the result in matrix form, we obtain the relation
\begin{equation}\label{eq:eqtn32}
	\hspace*{-.07in}\left[ \begin{array}{*{4}{@{\hspace*{.03in}}c}}
		1 & 1 & \dots & 1\\
		\lambda_{i,0} & \lambda_{i,1}& \dots & \lambda_{i,r-1}\\
		\vdots & \vdots & \ddots & \vdots\\
		\lambda_{i,0}^{r-1} & \lambda_{i,1}^{r-1}& \dots & \lambda_{i,r-1}^{r-1}
		\end{array}	\right]\!\!
		\begin{bmatrix}
	c_{i,a(i,0)}\\c_{i,a(i,1)}\\\vdots \\c_{i,a(i,r-1)}\end{bmatrix} \!\!=\! -\begin{bmatrix}
		\sum_{j\ne i}\mu_{j,i}^{(a)}\\
		\sum_{j\ne i}\lambda_{j,a_j}\mu_{j,i}^{(a)}\\
		\vdots\\
		\sum_{j\ne i}\lambda_{j,a_j}^{r-1}\mu_{j,i}^{(a)}
	\end{bmatrix}\!.
\end{equation} 
This equation permits recovery of the symbols $c_{i,a(i,u)}, 0\le u\le r-1$ of the failed coordinate, and varying $a$, we 
recover the other groups of coordinates in the same manner.

To adapt this procedure to repair on graphs, assume that the failed node is $i=n$ and write the vector on the right-hand side of \eqref{eq:eqtn32} as 
     $[\mu_{1,n}^{(a)},\mu_{2,n}^{(a)},\dots,\mu_{n-1,n}^{(a)}] V_1^T,$ 
where $$V_1:=\text{Vandermonde}(\lambda_{1,a_1}, \lambda_{2,a_2},  \dots,  \lambda_{n-1,a_{n-1}})$$ is an $r\times(n-1)$ Vandermonde matrix with columns defined by the arguments. The matrix on the left in \eqref{eq:eqtn32} is also Vandermonde, denote it by $V_2.$ With these notations, \eqref{eq:eqtn32} can be rewritten as
    \begin{multline*}
   [c_{n,a(n,0)},c_{n,a(n,1)},\dots, c_{n,a(n,r-1)}]V_2^T\\=-[\mu_{1,n}^{(a)},\mu_{2,n}^{(a)},\dots,\mu_{n-1,n}^{(a)}] V_1^T
   \end{multline*}
or 
 \begin{align}
  [c_{n,a(n,0)},&c_{n,a(n,1)},\dots, c_{n,a(n,r-1)}]\nonumber\\&=[\mu_{1,n}^{(a)},\mu_{2,n}^{(a)},\dots,\mu_{n-1,n}^{(a)}] U \nonumber\\
     &=\sum_{j=1}^{n-1} \mu_{j,n}^{(a)} U_j \label{eq:DM}
 \end{align}
where we denoted $U:=-V_1^T (V_2^T)^{-1}$ and $U_j$ is the $j$th row of $U$.
This representation is essentially the same as \eqref{eq:sep}, and hence 
the generic distributed repair scheme described in Sec.~\ref{sec:MSR} applies to the codes considered in this section. 
Specifically, the matrix $U$ is independent of the codeword, and can be computed in advance, and once a node $v$ in the repair tree
has $d-k+1$ or more descendants, it switches to transmitting $\sum_{j\in D^\ast(v)} \mu_{j,n}^{(a)} U_j$. This procedure supports repair
bandwidth gains over the AF strategy for each of the groups of the node components mentioned above.

\vspace*{.1in}
\subsection{Node repair for general linear array codes}
From the examples in the previous sections it is clear that the graph-based repair procedure defined in \eqref{eq:lc} applies to any 
$\ff$-linear MSR code for which the information downloaded from the helper nodes is an $\ff$-linear function of their contents 
(all the known MSR codes are such). Indeed, the download operation can be written as $C(D)U$, where $C(D)$ is the contents of the helper nodes and $U$ represents the linear transformation of the form \eqref{eq:lc}. Once we reach the helper nodes in $T_f$ with at least $d-k+1$ descendants, then we can switch to relaying linear combinations rather than the contents of the helper nodes. The savings in repair bandwidth will be the same as for the two constructions considered above in this section.

\vspace*{.1in}\emph{Remark (MBR codes)}: For the other extremal point of the storage-bandwidth trade-off \cite{Dimakis10}, i.e., the Minimum Bandwidth Regenerating codes, the AF repair strategy is optimal in terms of the repair bandwidth because the amount of downloaded information is minimized by the code design. 

%
%

\section{Node repair for multiple failures}\label{sec:multiple}

In this section we present a code construction for the repair of multiple nodes that attains the lower bound of 
Lemma~\ref{lemma:bound_mult}. We begin with specifying our assumptions. Suppose that the data is stored on a connected graph $G(V,E)$, and 
$F\subset V$ is a set of failed vertices of size $h$. Further, let $D, |D|=d$ be the subset of helper nodes. The data is encoded using an $(n,k,d,l)$
MSR code, where $n=|V|$ is the number of vertices. The encoding scheme that we present below further assumes that the 
communication from $D$ to $F$ passes through some fixed node $w\in D$ as shown in Fig.~2 for $h=2$ and $F=\{v_1,v_2\}.$ This assumption, taken to fit the 
structure behind Lemma~\ref{lemma:bound_mult}, suggests that we perform simple relaying along the path(s) from $w$ to the failed vertices.
The repair process becomes more complicated if the failed vertices have different access points to $D$, and we do not consider it
here. We further assume that the set $D$ spans a connected subgraph $G_D\subset G$ and denote by $T_w$ a (rooted) spanning tree of $G_D$
with root $w$. Finally, denote by $D_{w}(v)$ the set of descendants of $v\in V(T_w)$ in the tree $T_w$ and let $D^\ast_w(v)=D_w(v)\cup
\{v\}.$

Under these assumptions it is possible to write out a bound on the communication complexity of repair within the set of the helper nodes
until the data reaches the node $w$ (after that the data is no longer processed until it reaches the nodes in $F$). The following
proposition is an obvious extension of the bound \eqref{eq:eqtn25}.
\begin{proposition}\label{prop:Cm} Let $J_{w,h}=\{v\in V(T_w): D_w(v)\ge d-k+h\}.$ The total communication along the edges of $T_w$ for repair
of the nodes in $F$ is bounded below as 
   \begin{equation*}
   \beta(D)\ge |J_{w,h}|l + \sum_{v\in V(T_w)\backslash J_{w,h}}\frac{|D^\ast_w(v)| l}{d-k+h}.
   \end{equation*}
\end{proposition}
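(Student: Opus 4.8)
The plan is to mirror, essentially verbatim, the argument used for Proposition~\ref{prop:layer} and the single-failure tree bound \eqref{eq:eqtn25}, replacing the single-failure Lemma~\ref{lemma:bound} with its multiple-failure counterpart Lemma~\ref{lemma:bound_mult}. The key structural fact is that $T_w$ is a rooted tree, so for any vertex $v\in V(T_w)$ the only way information from the subtree $D^\ast_w(v)$ can leave that subtree on its way to $w$ (and hence to $F$) is through the edge joining $v$ to its parent. Thus the data that $v$ forwards to its parent is a random variable $R^F_{D^\ast_w(v)}$ that is a function of $S^F_{D^\ast_w(v)}$ and, together with the helper contributions $S^F_{D\setminus D^\ast_w(v)}$ coming in from outside the subtree, must allow recovery of $W_F$; that is, it satisfies hypothesis \eqref{eq:eqtn_mult1} of Lemma~\ref{lemma:bound_mult} with $A=D^\ast_w(v)$.

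First I would fix $v\in V(T_w)$ and set $A=D^\ast_w(v)$, noting $|A|=|D^\ast_w(v)|$. If $v\in J_{w,h}$, i.e. $|D_w(v)|\ge d-k+h$ — and here I would point out that one should read the definition of $J_{w,h}$ as $|D_w(v)|\ge d-k+h$, so that $|D^\ast_w(v)|=|D_w(v)|+1\ge d-k+h$ as well — then part (1) of Lemma~\ref{lemma:bound_mult} gives that the flow across the parent edge of $v$ has entropy at least $hl$, so $v$ must transmit at least $hl$ symbols of $\ff$. If $v\notin J_{w,h}$, then $|D^\ast_w(v)|\le d-k+h-1$ (using $|D_w(v)|\le d-k+h-1$), so part (2) of Lemma~\ref{lemma:bound_mult} gives that $v$ must transmit at least $\frac{h|D^\ast_w(v)|l}{d-k+h}$ symbols.

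Finally I would sum these per-edge lower bounds over all vertices $v\in V(T_w)$ (each non-root vertex contributing exactly one edge to its parent; if one prefers, exclude $w$ itself or observe its contribution to the total is counted consistently with the statement), obtaining
\[
\beta(D)\;\ge\;\sum_{v\in J_{w,h}} hl \;+\!\!\sum_{v\in V(T_w)\setminus J_{w,h}}\!\!\frac{h|D^\ast_w(v)|l}{d-k+h}
\;=\;h\Big(|J_{w,h}|\,l+\!\!\sum_{v\in V(T_w)\setminus J_{w,h}}\!\!\frac{|D^\ast_w(v)|\,l}{d-k+h}\Big),
\]
which is the claimed bound up to the factor $h$ — I would double-check against the statement whether the intended bound carries the factor $h$ or whether $l$ here already denotes a per-failure sub-block, and state the convention explicitly. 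I do not expect a genuine obstacle: the only subtle points are (a) verifying that $R^F_{D^\ast_w(v)}$ genuinely satisfies the hypothesis of Lemma~\ref{lemma:bound_mult}, which follows from the tree structure plus the assumption that all repair traffic funnels through $w$ and then out to $F$ by relaying, so that cutting the parent edge of $v$ isolates exactly the helper set $D^\ast_w(v)$ from $F$; and (b) being careful that different edges correspond to disjoint "cuts" so the per-edge bounds add without double counting, which again is immediate in a tree.
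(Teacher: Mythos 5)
Your proof is correct and follows exactly the intended argument: the paper gives no separate proof of Proposition~\ref{prop:Cm}, calling it an ``obvious extension'' of \eqref{eq:eqtn25}, and the extension is precisely your cut-the-parent-edge argument with Lemma~\ref{lemma:bound} replaced by Lemma~\ref{lemma:bound_mult}. On the factor of $h$ that you flag: your derivation is the faithful application of Lemma~\ref{lemma:bound_mult} and yields the bound \emph{with} the factor $h$; since $h\ge 1$ this implies the stated inequality, so your proof is valid as written, and in fact the stated bound appears to have dropped the factor $h$ (it reduces correctly at $h=1$, and the cooperative-repair construction of Sec.~\ref{sec:multiple}, where each leaf helper transmits $hl/(d-k+h)$ symbols and each node with enough descendants transmits $hl$, meets your factor-$h$ version with equality, not the printed one). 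One microscopic slip: for $v\notin J_{w,h}$ you only get $|D^\ast_w(v)|\le d-k+h$, not $\le d-k+h-1$, but in the boundary case $|D^\ast_w(v)|=d-k+h$ parts (1) and (2) of Lemma~\ref{lemma:bound_mult} give the same value $hl$, so nothing breaks.
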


Below we present a construction of codes and a repair scheme that meets this bound with equality, attaining the minimum possible 
communication complexity of repair of the nodes in $F$ under the assumptions discussed above (it is possible that removing these
assumptions enables one to further lower the communication cost).
The scheme relies on the idea of {\em cooperative repair} \cite{ShumHu2013}.
In this setting, under the full connectivity assumption, two or more failed nodes connect directly to the same set of helpers, evaluate partial information about their contents, and then exchange the results to complete the repair. 

\vspace*{.1in}
\begin{figure}[th]\begin{center}\scalebox{1.7}
{\begin{tikzpicture}
 [
 v_style/.style={circle, draw,  fill=black, minimum size=0.01cm,scale=0.15},
 v_style1/.style={circle, draw,  fill=none, minimum size=0.01cm,scale=0.15},
 v_style2/.style={circle, draw=none,   fill=none}
 ]
 
  \node[circle, draw,red, fill=red, minimum size=0.01cm,scale=0.15,label=left:{ {\scriptsize$v_1$}}] (v1) at (-1,1.0) {};
   \node[circle, draw,red, fill=red, minimum size=0.01cm,scale=0.15, label=right:{\scriptsize $v_2$}] (v2) at (1,1.0) {};
    \node[v_style] (v3) at (0,0) {};
 \node[circle, draw,blue, fill=blue, minimum size=0.01cm,scale=0.15,label={[xshift=-0.2cm, yshift=-.1cm]{\scriptsize $\;\;w$}}] (w) at (0,-0.98) {};
  \path [draw=black, snake it]
     (-1,1) -- (0,0);
  \path [draw=black, snake it]
     (1,1) -- (0,0);
   \path [draw=black, snake it]
     (0,0) -- (0,-0.98);
         
  \node[ellipse, draw, minimum width = 1in, 
    minimum height = .5in,label={[xshift=.0cm, yshift=-1.6cm]{\tiny\sc Helper set $D$}}] (e) at (0,-1.62) {};
 
  \node[v_style1] (h1) at (-0.6,-1.5){};
  \node[v_style1] (h2) at (0,-1.5){};
  \node[v_style1] (h3) at (0.6,-1.5){};
  \path (w) edge [black,ultra thin] (h1);
  \path (w) edge [black,ultra thin] (h2);
  \path (w) edge [black,ultra thin] (h3);
  
  \node[v_style2] (h4) at (-1.1,-2){};
  \node[v_style2] (h5) at (-0.5,-2.05){};
  \node[v_style2] (h6) at (-.2,-2){};
  \node[v_style2] (h7) at (-.15,-2){};
  \node[v_style2] (h8) at (.2,-2){};
  \node[v_style2] (h9) at (0.3,-2){};
  \node[v_style2] (h10) at (0.9,-2){};

  \path (h1) edge [black,ultra thin] (h4);
  \path (h1) edge [black,ultra thin] (h5);
  \path (h1) edge [black,ultra thin] (h6);
  \path (h2) edge [black,ultra thin] (h7);
  \path (h2) edge [black,ultra thin] (h8);
  \path (h3) edge [black,ultra thin] (h9);
  \path (h3) edge [black,ultra thin] (h10);
  
  \node[circle, draw,  fill=black, minimum size=0.01cm,scale=0.05] at (-.4,-2) {};
  \node[circle, draw,  fill=black, minimum size=0.01cm,scale=0.05] at (-.2,-2) {};
  \node[circle, draw,  fill=black, minimum size=0.01cm,scale=0.05] at (-.0,-2) {};
  \node[circle, draw,  fill=black, minimum size=0.01cm,scale=0.05] at (.2,-2) {};
  \node[circle, draw,  fill=black, minimum size=0.01cm,scale=0.05] at (.4,-2) {};
  
\end{tikzpicture}}
\end{center}
\centerline{\footnotesize {\sc Fig.2:} Graph topology for repair of multiple nodes}
\end{figure}

\vspace*{.1in}

We use this idea for repair on graphs wherein the information from helpers is transmitted along some path to the failed node, relying
on a family of cooperative codes constructed recently in \cite{Ye20}.
The savings come from the fact that in the course of this transmission we can perform intermediate processing rather than simple relaying.
In our presentation
for simplicity we assume that $d=k+1, h=2$ as in Fig. 2. At the same time it will be obvious that the technique applies to all other 
feasible parameter regimes.

Let $\cC$ be the $[n,k,d=k+1,l=3\times 2^n]$ {\em cooperative repair} MSR code from the family constructed in \cite{Ye20}\footnote{We could use
other code families, for instance, the codes from \cite{Ye19}.}. 
Every coordinate $C_i$ of a codeword $C = (C_1,\dots,C_n)\in (\ff^l)^n$ is a vector $\{c_{i,b,a}: b \in \{1,2,3\}, a \in \{0,1,\dots 2^n-1\}\}$. 
For $2n$ distinct field elements $\{\lambda_{i,j}: i \in [n], j \in \{0,1\}\}$, the parity check equations that define the code are
   \begin{multline}\label{eq:eqtn_coop_pc}
    \sum_{i=1}^n \lambda_{i,a_i}^tc_{i,b,a} = 0 \;\;\;\forall\;\;\;t \in \{0,1,\dots, n-k-1\},\\\;\;a \in \{0,1,\dots,2^n-1\},\;\; b\in                    
           \{1,2,3\},
   \end{multline}
where $a_i$ is the $i$-th coordinate in the binary representation of $a$. Below we use the notation $a(i,a_i\oplus 1)$ to denote the number obtained from $a$ by flipping the $i$th bit in its binary expansion.
Assume that the failed nodes correspond to coordinates 1 and 2 and fix a value of $a \in \{0,1,\dots,2^n-1\}$.
The standard cooperative repair under direct connectivity (on a complete graph) proceeds in two steps. 
In step 1, helper node $i$ sends $\{c_{i,1,a}+c_{i,2,a(1,a_1\oplus 1)}: a \in \{0,1,\dots,2^n-1\}\}$ to node 1 and $\{c_{i,1,a}
+c_{i,3,a(2,a_2 \oplus 1)}: a \in \{0,1,\dots,2^n-1\}\}$ to node 2. Using $a$ with $b=1$ and $a(1,a_1\oplus 1)$ with $b=2$ in \eqref{eq:eqtn_coop_pc} and summing the corresponding equations, we obtain
\begin{align}
\lambda_{1,a_1}^tc_{1,1,a}&+\lambda_{1,a_1\oplus 1}^tc_{1,2,a(1,a_1\oplus 1)} \nonumber\\
&+ \lambda_{2,a_2}^t(c_{2,1,a}+c_{2,2,a(1,a_1\oplus 1)}) 
   \nonumber\\ 
&+ \sum_{i=3}^n \lambda_{i,a_i}^t(c_{i,1,a}+c_{i,2,a(1,a_1\oplus 1)})=0 \label{eq:crs}
\end{align}
for all $ t \in \{0,1,\dots,n-k-1\}.$
Equations \eqref{eq:crs} form a set of parity checks of an $(n+1,k+1)$ Reed-Solomon code,
and hence knowing $c_{i,1,a}+c_{i,2,a(1,a_1\oplus 1)}$ at $k+1$ positions allows node 1 to recover $c_{1,1,a}, c_{1,2,a(1,a_1\oplus 1)}$ and $(c_{2,1,a}+c_{2,2,a(1,a_1\oplus 1)})$. A similar argument shows that node 2 can recover $c_{2,1,a}, c_{2,3,a(2,a_2\oplus 1)}$ and $(c_{1,1,a}+c_{1,3,a(2,a_2\oplus 1)})$. In step 2 of the repair, node 1 sends $(c_{2,1,a}+c_{2,2,a(1,a_1\oplus 1)})$ to node 2 and node 2 sends $(c_{1,1,a}+c_{1,3,a(2,a_2\oplus 1)})$ to node 1, which completes the repair of both node 1 and 2; for details see \cite{Ye20}.

To see how intermediate processing at the nodes of the tree $T_w$ can simplify repair on a graph of the type shown in Fig.~2, 
observe that the first step above, node 1 seeks to learn three code symbols (namely $c_{1,1,a}, c_{1,2,a(1,a_1\oplus 1)}$ and $(c_{2,1,a}+c_{2,2,a(1,a_1\oplus 1)})$) of the $(n+1,k+1)$ RS codeword, and it does so by collecting $k+1$ symbols from $k+1$ helper nodes. In an RS code, once we know any $k+1$ coordinates, all the other coordinates of the codeword can be computed via Lagrange interpolation and subsequent evaluation. This can be expressed in matrix form as follows:
    \begin{multline*}
\begin{bmatrix}
	c_{1,1,a}\\ c_{1,2,a(1,a_1\oplus 1)}\\(c_{2,1,a}+c_{2,2,a(1,a_1\oplus 1)})
\end{bmatrix} \\= 
[U_1 \; U_2 \;\dots \; U_{k+1}]
\begin{bmatrix}
(c_{i_1,1,a}+c_{i_1,2,a(1,a_1\oplus 1)})\\
(c_{i_2,1,a}+c_{i_2,2,a(1,a_1\oplus 1)})\\
\vdots \\
(c_{i_{k+1},1,a}+c_{i{k+1},2,a(1,a_1\oplus 1)})
\end{bmatrix},
    \end{multline*}
where $i_1,i_2,\dots,i_{k+1}$ are the helper nodes and the matrix $U = [U_1 \; U_2 \;\dots \; U_{k+1}]$ 
is obtained by multiplying an inverse Vandermonde matrix (Lagrange interpolation) and a matrix corresponding to evaluating the obtained
polynomial at the three coordinates being sought.
Since the matrix $U$ can again be pre-computed, a node that has collected the values 
$(c_{i,1,a}+c_{i,2,a(1,a_1\oplus 1)})$ from three or more 
helper nodes, can start transmitting the corresponding linear combinations, much in the same way as was done in Section \ref{sec:MSR}. The above procedure is repeated for node 2 with appropriate adjustments to the subscripts in the last displayed equation. Step 2 of the repair process is unchanged from that of the standard cooperative repair, and it yields no communication savings. Exactly as in the case of a single failed node, viz., Theorem \ref{thm:PM_repair}, we can show that this procedure meets the bound of Lemma \ref{lemma:bound_mult}.

%
%

\section{Repair with information exchange among the helpers}\label{sec:intra}
The bounds and constructions presented earlier in this paper are focused on communication from the helper nodes to the
failed node. In this section we consider a more general problem (and potential savings in the repair cost) when the helper nodes may communicate with each other before transmitting the information to the failed node. Recall that a variant of this problem was considered earlier in the literature under very specific assumptions: The nodes in the storage cluster are organized in subsets, called {\em racks}, and communication between the nodes in the rack does not count toward the repair bandwidth. This model enables one to derive tighter bounds on the cost of node repair \cite{Hou2018}, and there are families of codes that 
attain these bounds \cite{Chen2020}. 

Another version of information exchange in the context of erasure recovery appeared earlier in the problem of {\em cooperative repair},
already mentioned in the previous section.
In this setting (assuming full connectivity) several failed nodes contact the same set of helpers and process the received information,
gaining some knowledge about their contents and about the contents of the other failed nodes. They then exchange information to complete the repair. This problem, introduced in \cite{ShumHu2013}, is vaguely reminiscent of repair on graphs because different nodes of the
encoding acquire partial information about the contents of other nodes. Below we make this link more precise by presenting an
example of node repair on graphs motivated by cooperative repair (albeit in a rather restricted setting).

We begin with establishing a framework for finding a lower bound on the total communication complexity of repair for general
graphs. Let us define some additional notation. Let an $(n,k,d,l)$ MSR code be defined on a connected graph $G=(V,E)$. Assume that the subgraph $G_{f,D}= (V_{f,D},E_{f,D})$ 
spanned by the failed node $v_f$ and the set of helper nodes $D$ is connected. 
Construct a directed graph $\bar{G}_{f,D}= ({V}_{f,D},\bar{E}_{f,D})$ as follows:
\begin{itemize}
	\item For every edge $(u,v) \in E_{f,D}$ with $u,v \in D$, add the two directed edges $(u,v)$ and $(v,u)$ to $\bar{E}_{f,D}$.
	\item For every edge $(u,v_f) \in E_{f,D}$, add the directed edge $(u,v_f)$ to $\bar{E}_{f,D}$.
\end{itemize}

For an arbitrary communication protocol 
that repairs the failed node $v_f$, let $X_{u,v}$, for $(u,v) \in \bar{E}_{f,D}$, be the total number of symbols 
sent along the edge $(u,v)$ during the complete protocol. Fix an order of the edges in $\bar E=\bar E_{f,D}$ and 
let $\bar{X}$ be the vector of $X_{u,v}$'s. Let $\cP^\ast(D)$ be the set of all non-empty subsets of $D$. Define a binary matrix $M$ of size 
$(2^d-1) \times |\bar{E}|$ by setting $M_{S,(u,v)}={\mathbbm 1}(u\in S\wedge v\in S^c),$ where $S^c={V}_{f,D}\backslash S.$ The rows of $M$
are characteristic vectors of the cuts $(S,S^c)$.
Let $\bar b\in\reals^{|2^d-1|}$ with $\bar b_S=\beta\cdot\min\{d-k+1,|S|\}$ for all $S\in\cP^\ast(D).$

\begin{proposition}\label{prop:LP}
For the failed node $v_f$ and helper nodes $D$, the total communication complexity of repair is bounded below by the solution to the following linear program with $|\bar E|$ variables and $2^d-1$ constraints:
\begin{equation*}
	\begin{array}{ll@{}ll}
		\text{minimize}  & \displaystyle \bm{1}^T \bar{X} &\\
		\text{subject to}& \displaystyle M\bar{X} \ge \bar{b},\\
		&                                               \bar{X} \ge 0.
	\end{array}	
\end{equation*}
\end{proposition}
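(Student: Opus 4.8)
The plan is to exhibit, for an arbitrary repair protocol, a feasible point of the displayed linear program whose objective value is at most the total communication cost of that protocol; since any feasible objective value is at least the optimum of this minimization LP, the asserted lower bound follows. So I would first fix a repair protocol running on $\bar G_{f,D}$ in which, as everywhere else in the paper, the input of helper $i$ is its MSR download $S_i^f$ and $v_f$ holds nothing initially, and let $\bar X$ be the vector of symbol counts $X_{u,v}$, $(u,v)\in\bar E_{f,D}$, so that the total cost is $\bm 1^T\bar X$. (If the communication pattern is data dependent, run the argument below with $\mathbb E[\bar X]$ in place of $\bar X$; the worst-case cost is at least $\bm 1^T\mathbb E[\bar X]$.) It then suffices to verify $\bar X\ge 0$ — immediate — and $M\bar X\ge\bar b$, i.e.\ that for every nonempty $S\subseteq D$ the number of symbols crossing the cut from $S$ to $S^c:=V_{f,D}\backslash S$ is at least $\bar b_S=\beta\min\{d-k+1,|S|\}$.

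Next I would fix such an $S$ and let $R$ be the concatenation of all field symbols transmitted along directed edges $(u,v)$ with $u\in S$, $v\in S^c$ during the whole protocol. Its length is exactly the $S$-th coordinate $\sum_{u\in S,\,v\in S^c}X_{u,v}$ of $M\bar X$, and, $R$ being a $q$-ary string of that length, $H(R)$ (in $q$-ary units) is at most that coordinate; so it is enough to show $H(R)\ge\bar b_S$. For this I would use the standard transcript/cut argument: $v_f\in S^c$, and the only information that ever enters the side $S^c$ from the side $S$ is the content of the symbols counted in $R$, so an induction along the timeline of the protocol shows that the joint state of all vertices of $S^c$ at every step is a deterministic function of $\{S_i^f:i\in D\backslash S\}$ (the initial inputs held in $S^c$) together with $R$. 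Since the protocol terminates with $v_f$ holding $W_f$, this gives
$$H(W_f\mid R,\;S_{D\backslash S}^f)=0,$$
which is precisely hypothesis \eqref{eq:eqtn17} of Lemma~\ref{lemma:bound} with $A=S$ and $R_A^f=R$. Lemma~\ref{lemma:bound} then yields $H(R)\ge l$ when $|S|\ge d-k+1$ and $H(R)\ge |S|l/(d-k+1)$ when $|S|\le d-k$, and in both cases this equals $\bar b_S$. Hence $\bar X$ is feasible and $\bm 1^T\bar X$ is at least the LP value.

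The one point that needs care — and which I expect to be the main obstacle in the write-up — is that in an interactive protocol a vertex of $S$ may receive messages originating in $S^c$ before it transmits across the cut, so $R$ is in general a function of $W_S$ \emph{together with} those earlier reverse messages, not of $W_S$ alone; that is, the auxiliary requirement ``$H(R_A^f\mid W_A)=0$'' listed among the hypotheses of Lemma~\ref{lemma:bound} may genuinely fail in the present setting. I would handle this by pointing out that it is harmless: inspection of the proof of Lemma~\ref{lemma:bound} shows that this requirement is never actually invoked — only \eqref{eq:eqtn17} and the MDS/entropy identities are used — so the stated lower bounds on $H(R_A^f)$ hold under \eqref{eq:eqtn17} alone, which is all we have and all we need. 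Apart from this observation, the argument is just the cut/transcript bound together with a direct appeal to Lemma~\ref{lemma:bound}, with no real computation involved.
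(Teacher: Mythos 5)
Your proposal is correct and follows essentially the same route as the paper: for each nonempty $S\subseteq D$ one identifies the cut transcript with the $R_A^f$ of Lemma~\ref{lemma:bound} (taking $A=S$), so each row of $M\bar X\ge\bar b$ is exactly the conclusion of that lemma, and feasibility of the actual protocol's $\bar X$ gives the bound. Your additional observation — that for interactive protocols the auxiliary hypothesis $H(R_A^f\mid W_A)=0$ may fail because of messages entering $S$ from $S^c$, but is never invoked in the proof of Lemma~\ref{lemma:bound}, which uses only \eqref{eq:eqtn17} and the MDS/MSR entropy identities — is accurate and supplies a justification that the paper only gestures at with its remark that the bound holds ``irrespective of the communication among the nodes in set $A$.''
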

\begin{proof} We only need to justify the inequality $M\bar{X} \ge \bar{b}.$
For any set $S \in \cP^\ast(D)$, Lemma \ref{lemma:bound} implies
\begin{equation*}
	\sum_{\begin{substack}{(u,v)\in \bar{E}_{f,D}\\ u \in S,v \in S^c}\end{substack}} X_{u,v} \ge R_S^f \ge \min\{d-k+1,|S|\}\beta.
\end{equation*}
Collecting these inequalities for all $S \in \cP^\ast(D)$, we obtain the claimed set of constraints.
\end{proof}
The key observation here is that the quantity $R_A^f$ in Lemma \ref{lemma:bound} represents the total outflow of 
information transmitted from the set of nodes $A$ for the repair, and hence the bounds still hold irrespective of the 
communication among the nodes in set $A$.

In the remainder of this section we consider two settings in which the bound of this proposition enables one to prove optimality
of communication for recovery while allowing communication between the helper nodes, namely when the failed node has the largest 
and the smallest possible number of helpers, respectively, as immediate neighbors. In both cases we allow arbitrary communication among the helper set.

\subsection{The case of the complete graph} 
This case corresponds to the original repair problem of \cite{Dimakis10}, and the cut-set bound provides
the minimum required download per helper node for the repair of a failed node. 
In this model, the transmitted data of each helper node is a function of its own stored content only. 
Can communication complexity be reduced if the helper nodes are allowed to exchange information before communicating with the failed node? 
An easy corollary of Proposition \ref{prop:LP} and Lemma \ref{lemma:bound} implies that in case of MSR codes the answer is negative.

\begin{proposition}
	For the complete graph $K_n$, the communication complexity is bounded below by $d\beta$ and is achieved by having all the helper nodes directly transmit $\beta$ symbols to the failed node.
\end{proposition}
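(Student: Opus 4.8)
The plan is to obtain the proposition directly from Proposition~\ref{prop:LP} and Lemma~\ref{lemma:bound}, so that almost all the work is structural. By Proposition~\ref{prop:LP} the communication complexity is bounded below by the optimum of the stated linear program, so it suffices to prove that every feasible $\bar{X}$ has $\bm{1}^T\bar{X}\ge d\beta$, and then to exhibit a repair scheme attaining $d\beta$.

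For the lower bound, first I would record what $\bar{G}_{f,D}$ looks like when $G=K_n$: the subgraph on $\{v_f\}\cup D$ is itself complete, so $\bar{E}_{f,D}$ contains both arcs $(u,v),(v,u)$ for every pair $u,v\in D$ and exactly one arc $(v,v_f)$ for each $v\in D$; in particular $v_f$ has out-degree zero. Hence every arc of $\bar{G}_{f,D}$ has its tail in $D$, and therefore $\bm{1}^T\bar{X}=\sum_{v\in D}\sum_{w:(v,w)\in\bar{E}_{f,D}}X_{v,w}$, i.e.\ the total traffic equals the sum of the out-traffic of the $d$ helpers. Now for each $v\in D$ I invoke the LP constraint coming from the singleton cut $S=\{v\}$: since $\min\{d-k+1,|S|\}=1$, this constraint reads $\sum_{w:(v,w)\in\bar{E}_{f,D}}X_{v,w}\ge\beta$, which is exactly the instance of Lemma~\ref{lemma:bound}(2) (or part (1) when $d=k$) asserting that a single helper must export at least $\beta$ symbols no matter what it receives from the rest of $D$. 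Summing these $d$ inequalities yields $\bm{1}^T\bar{X}\ge d\beta$.

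For the matching scheme, suppose the data is stored with an $(n,k,d,l)$ MSR code and choose any $d$ of the surviving nodes as helpers. By the defining property of MSR codes together with the cut-set bound, each helper $v$ can compute $S_v^f$, a function of its own contents with $H(S_v^f)=l/(d-k+1)=\beta$, such that $\{S_v^f:v\in D\}$ determines $W_f$. Since the graph is complete, $v$ is adjacent to $v_f$ and simply transmits $S_v^f$ along the arc $(v,v_f)$, with no other communication taking place. The total cost is $d\beta$, and this assignment is feasible for the linear program: for a cut $S$ the crossing traffic equals $|S|\beta\ge\min\{d-k+1,|S|\}\beta$. Hence the lower bound is tight and, as in Theorem~\ref{thm:PM_repair}, explicitly achievable.

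I do not foresee a genuine obstacle; the only point worth stressing is conceptual. One might expect that allowing the helpers to exchange and combine information before contacting $v_f$ could push the total below $d\beta$. The argument above explains why it cannot: the singleton-cut constraints are blind to what a helper receives, so each of the $d$ helpers is still forced to send out at least $\beta$ symbols, and on $K_n$ these $d$ contributions are realized by disjoint direct transmissions, saturating the bound.
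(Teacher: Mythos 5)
Your proposal is correct and follows essentially the same route as the paper: the matching upper bound is the same direct-transmission assignment $X^*_{u,v}=\beta\,\mathbbm{1}(v=v_f)$, and your step of summing the $d$ singleton-cut constraints (valid because each arc has its tail in exactly one singleton $\{v\}\subset D$) is precisely the paper's weak-duality argument with the dual feasible point $Y^*_S=\mathbbm{1}(|S|=1)$. No gaps; the presentation as a direct summation of primal constraints rather than an explicit dual certificate is only a cosmetic difference.
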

\begin{proof}
Consider the assignment of variables of the LP problem $X^*$ with $X^*_{u,v} = \beta \mathbbm{1}(v=v_f)$.
It is clearly feasible because it corresponds to all the helper nodes transmitting $\beta$ 
symbols to the failed node. Indeed, this assignment satisfies the bounds of Lemma \ref{lemma:bound} and thus also 
the inequality constraints of Proposition~\ref{prop:LP}. Next we show that $X^*$ is optimal by considering the dual LP problem,
which has the form
\begin{equation*}
	\begin{array}{ll@{}ll}
		\text{maximize}  & \displaystyle \bar{b}^T\bar{Y}  &\\
		\text{subject to}& \displaystyle M^T\bar{Y} \le \bm{1},\\
		&                                               \bar{Y} \ge 0\;\;.
	\end{array}	
\end{equation*}
Take the assignment of variables $Y^\ast$ with $Y^*_S=\mathbbm{1}(|S|=1)$ for all $S\subset D.$
Since for two different $S_1 = \{v_1\}$ and $S_2 = \{v_2\}$ any edge $(u,v)\in \bar E$ can belong to at most one of the cuts 
$(S_1,S_1^c)$ or $(S_2,S_2^c)$, we have that $M^T Y^* \le \bm{1}$. Since $\bm{1}^T X^* = \bar{b}^T Y^* = d\beta,$ we conclude that $X^\ast$ is indeed optimal.
\end{proof}

\subsection{The case of two neighbors}\label{sec:coop}
Assume that the information is encoded with an $[n,k,d=k+1,l]$ MSR code and stored on a graph $G(V,E)$ with $|V|=n.$ 
Consider the repair graph (no longer a tree) shown in Fig.3 with the failed node $v_f$ connected to two helper nodes which connect to the remaining subset of the helper set.

\begin{figure}[th]\begin{center}\scalebox{0.25}{\begin{tikzpicture}
				[
				vertex_style/.style={circle, draw, fill=red,minimum size=0.25cm,scale=2.0},
				vertex_style1/.style={circle, draw, fill=blue,minimum size=0.25cm,scale=2.0},
				dots/.style={circle, draw,fill=blue, minimum size=0.02cm,scale=0.75}
				]
				
				\useasboundingbox (-10,-12) rectangle (10,2);
				
				\begin{scope}
					
				\node[vertex_style] (v1) at (0,0) {};
				\node[vertex_style1] (v2) at (-3,-5) {};
				\node[vertex_style1] (v3) at (3,-5) {};
				\node[vertex_style1] (v4) at (-10,-10) {};
				\node[vertex_style1] (v5) at (-5,-10) {};
				\node[vertex_style1] (v6) at (5,-10) {};
				\node[vertex_style1] (v7) at (10,-10) {};
				\node[dots] (v8) at (1,-10) {}; 
				\node[dots] (v8) at (0,-10) {}; 
				\node[dots] (v8) at (-1,-10) {}; 
				
				\draw[<-, line width=1mm] (v1) -- (v2);
				\draw[<-, line width=1mm] (v1) -- (v3);
				\draw[<->, line width=1mm] (v2) -- (v3);
				\draw[<->, line width=1mm] (v2) -- (v4);
				\draw[<->, line width=1mm] (v2) -- (v5);
				\draw[<->, line width=1mm] (v2) -- (v6);
				\draw[<->, line width=1mm] (v2) -- (v7);
				\draw[<->, line width=1mm] (v3) -- (v4);
				\draw[<->, line width=1mm] (v3) -- (v5);
				\draw[<->, line width=1mm] (v3) -- (v6);
				\draw[<->, line width=1mm] (v3) -- (v7);
				
				\node[minimum size=0.5cm,scale=3] at (1,1) {$v_f$};
				\node[minimum size=0.5cm,scale=3] at (4.2,-4.5) {$v_2$};
				\node[minimum size=0.5cm,scale=3] at (-4,-4.5) {$v_1$};
				\node[minimum size=0.5cm,scale=3] at (12,-9) {$\Gamma_2(v_f)$};
				
				\draw [decorate,decoration={brace,amplitude=10pt,mirror},line width=1mm]
				(-11,-11) -- (11,-11) node [black,midway,xshift=0, yshift=-1cm,scale=3] 
				{$k-1$ nodes};
				
				\end{scope}
		\end{tikzpicture}}
		\centerline{{\footnotesize {\sc Fig.3:} Repair graph of the node $v_f$ that attains the LP lower bound}}
		\end{center}
\label{fig:2neighbors}
\end{figure}

We will prove that for this graph the minimum required communication for repair equals $(d+1)\beta = (k+2)\beta.$
To show this, assume that the failed node $v_f$ relies on a set $D$ of $k+1$ helpers for repair, and that it is connected to two of them,
denoted $v_1$ and $v_2.$ Assume further that the $k+1$ helpers span a complete graph $K_{k+1},$ where $k$ is the dimension of the
MSR code used for the encoding of the data.
To link this graph to the LP problem of Prop.~\ref{prop:LP}, construct a directed graph by replacing every edge between a pair of helpers with
a pair of opposing directed edges, and make a directed edge from each of $v_1,v_2$ to $v_f$. Thus, the new set of {\em directed} edges is
   $$
   \bar E=\{((v_i,v_j), v_i,v_j\in D), \text{ and }({v_1,v_f}),({v_2,v_f})\}.
   $$
To construct a primal LP program, assign 
    $$
    X^\ast_{(u,v)}=\begin{cases}
         \beta &\text{if }u\in \Gamma_2(v_f)\cup \{v_2\}, v={v_1}\\
         2\beta &\text{if }u=v_1, v=v_f\\
         0 &\text{otherwise}.
         \end{cases}
   $$
This assignment defines a valid repair protocol, so it's a feasible solution of the LP problem which gives the value of the objective function to be
	\begin{equation}
		\bm{1}^TX^* = (d-1)\beta+2\beta = (d+1)\beta
	\end{equation} 
Construct a dual program $Y^*=(Y^*_S)_S$ by setting
    $$
    Y^\ast_S=\begin{cases} \frac1{d-2} &\text{if } |S|=2, S\ne\{v_1,v_2\}\\
      0&\text{otherwise}.
     \end{cases}
   $$   
The vector $Y^\ast$ is a feasible assignment of the dual program. To show this, consider an edge $(u,v)$ with $u\in D.$ Our
argument depends on whether $v\in D$ or $v=v_f.$ In the first case, the row of $M^T$ contains $d-2$ ones which correspond to the
cuts in $G_{f,D}$ that contain the edge $(u,v)$ (there are exactly $d-2$ such cuts), so this row times $Y^\ast$ equals one.
If $v=v_f,$ then $u$ is either $v_1$ or $v_2.$ Say it is $v_1,$ then the row $(v_1,v_f)$ contains $d-1$ ones which correspond to the cuts that contain the edge $(v_1,v_f)$. Further, $Y^\ast=0$ in the coordinate $S=\{v_1,v_2\}$, so the nonzeros in the vector $Y^\ast$ and the $(M^T)_{v_1,v_f}$ overlap in $d-2$ places, again satisfying the constraints of the dual program.

 To compute the value
of the dual problem, note that $Y^\ast\ne 0$ in $\binom{d}{2} - 1 = \frac{(d-2)(d+1)}{2}$ coordinates, and the corresponding entries in $\bar{b}$ are set to $2\beta.$ Thus, $\bar{b}^TY^* = 
\frac{(d-2)(d+1)}{2}\cdot \frac{1}{d-2}\cdot 2\beta = (d+1)\beta,$ which equals the value of the primal problem, proving that the repair protocol
defined by it yields the minimum possible communication complexity. Finally, we argue that if there does not exist a repair protocol that performs better in terms of complexity when the helper nodes form the complete graph, then there cannot exist a repair protocol that performs better for any sub-graph of the complete graph.

The repair bandwidth $(d+1)\beta$ can be attained by sending the data from all the helper nodes but $v_1$ to the node $v_1,$
combining them and passing the result to $v_f$ (which is the IP repair discussed earlier). This repair protocol does not involve 
two-way communication between the neighboring helper nodes. In the appendix we construct another protocol that does involve it, while still having the same communication complexity of repair.

\subsection{Can the repair bandwidth be lower than the IP protocol?}
So far we have not identified cases in which communication among the helper nodes reduces the complexity of repair compared
to the IP protocol. That this may be possible is demonstrated in the next numerical example in which the value of the linear program is below the repair bandwidth of the IP scheme. Note that we still stop short of constructing an actual node repair scheme that would have this value of the communication complexity.

Consider the graph $G_{f,D}$ in Fig.~4 with three direct neighbors of the failed node, and let $d=6, k=5$. We 
assume that $v_f=1,$ and it is directly connected to helper nodes 2, 3 and 4. The six helper nodes form a complete graph $K_6$.
 The IP technique can achieve the complexity of 7 units by transmitting along the spanning tree shown in Fig.~5 and performing IP 
(combining the data) at node 3.

\begin{figure}[th]
\begin{subfigure}{.2\textwidth}
\centering
\scalebox{0.27}{\begin{tikzpicture}
				[
				vertex_style/.style={circle, draw, fill=red,minimum size=0.25cm,scale=2.0},
				vertex_style1/.style={circle, draw, fill=blue,minimum size=0.25cm,scale=2.0},
				dots/.style={circle, draw,fill=blue, minimum size=0.02cm,scale=0.75}
				]
				
				\useasboundingbox (-10,-12) rectangle (10,2);
				
				\begin{scope}
					
					\node[vertex_style] (v1) at (0,0) {};
					\node[vertex_style1] (v2) at (-5,-5) {};
					\node[vertex_style1] (v3) at (0,-5) {};
					\node[vertex_style1] (v4) at (5,-5) {};
					\node[vertex_style1] (v5) at (-5,-10) {};
					\node[vertex_style1] (v6) at (0,-10) {};
					\node[vertex_style1] (v7) at (5,-10) {};
					
					\draw[<-, line width=1mm] (v1) -- (v2);
					\draw[<-, line width=1mm] (v1) -- (v3);
					\draw[<-, line width=1mm] (v1) -- (v4);
					\draw[<->, line width=1mm] (v2) -- (v3);
					\draw[<->, line width=1mm] (v2) to [out=150,in=30]  (v4);
					\draw[<->, line width=1mm] (v2) -- (v5);
					\draw[<->, line width=1mm] (v2) -- (v6);
					\draw[<->, line width=1mm] (v2) -- (v7);
					\draw[<->, line width=1mm] (v3) -- (v4);
					\draw[<->, line width=1mm] (v3) -- (v5);
					\draw[<->, line width=1mm] (v3) -- (v6);
					\draw[<->, line width=1mm] (v3) -- (v7);
					\draw[<->, line width=1mm] (v4) -- (v5);
					\draw[<->, line width=1mm] (v4) -- (v6);
					\draw[<->, line width=1mm] (v4) -- (v7);
					\draw[<->, line width=1mm] (v5) -- (v6);
					\draw[<->, line width=1mm] (v5) to [out=-150,in=-30]  (v7);
					\draw[<->, line width=1mm] (v6) -- (v7);
					\node[minimum size=0.5cm,scale=3] at (1,1) {$1$};
					\node[minimum size=0.5cm,scale=3] at (-6.5,-4.9) {$2$};
					\node[minimum size=0.5cm,scale=3] at (-1,-4.5) {$3$};
					\node[minimum size=0.5cm,scale=3] at (6.5,-4.9) {$4$};
					\node[minimum size=0.5cm,scale=3] at (-6.5,-10.1) {$5$};
					\node[minimum size=0.5cm,scale=3] at (-0.6,-11) {$6$};
					\node[minimum size=0.5cm,scale=3] at (6.5,-10.1) {$7$};
					
				\end{scope}
		\end{tikzpicture}}
\caption*{\hspace*{.3in}\footnotesize {\sc Fig.4:} The repair graph}
\end{subfigure}
\hspace*{.2in}\begin{subfigure}{.2\textwidth}
\centering\scalebox{0.27}{\begin{tikzpicture}
				[
				vertex_style/.style={circle, draw, fill=red,minimum size=0.25cm,scale=2.0},
				vertex_style1/.style={circle, draw, fill=blue,minimum size=0.25cm,scale=2.0},
				dots/.style={circle, draw,fill=blue, minimum size=0.02cm,scale=0.75}
				]
				
				\useasboundingbox (-10,-12) rectangle (10,2);
				
				\begin{scope}
					
					\node[vertex_style] (v1) at (0,0) {};
					\node[vertex_style1] (v2) at (-5,-5) {};
					\node[vertex_style1] (v3) at (0,-5) {};
					\node[vertex_style1] (v4) at (5,-5) {};
					\node[vertex_style1] (v5) at (-5,-10) {};
					\node[vertex_style1] (v6) at (0,-10) {};
					\node[vertex_style1] (v7) at (5,-10) {};

					\draw[<-, line width=1mm, dashed] (v1) -- (v2);
					\draw[<-, line width=1mm] (v1) -- (v3);
					\draw[<-, line width=1mm, dashed] (v1) -- (v4);
					\draw[<->, line width=1mm] (v2) -- (v3);
					\draw[<->, line width=1mm, dashed] (v2) to [out=150,in=30]  (v4);
					\draw[<->, line width=1mm, dashed] (v2) -- (v5);
					\draw[<->, line width=1mm, dashed] (v2) -- (v6);
					\draw[<->, line width=1mm, dashed] (v2) -- (v7);
					\draw[<->, line width=1mm] (v3) -- (v4);
					\draw[<->, line width=1mm] (v3) -- (v5);
					\draw[<->, line width=1mm] (v3) -- (v6);
					\draw[<->, line width=1mm] (v3) -- (v7);
					\draw[<->, line width=1mm, dashed] (v4) -- (v5);
					\draw[<->, line width=1mm, dashed] (v4) -- (v6);
					\draw[<->, line width=1mm, dashed] (v4) -- (v7);
					\draw[<->, line width=1mm, dashed] (v5) -- (v6);
					\draw[<->, line width=1mm, dashed] (v5) to [out=-150,in=-30]  (v7);
					\draw[<->, line width=1mm, dashed] (v6) -- (v7);
					
					\node[minimum size=0.5cm,scale=3] at (1,1) {$1$};
					\node[minimum size=0.5cm,scale=3] at (-6,-5) {$2$};
					\node[minimum size=0.5cm,scale=3] at (-1,-4.5) {$3$};
					\node[minimum size=0.5cm,scale=3] at (6,-5) {$4$};
					\node[minimum size=0.5cm,scale=3] at (-6,-10) {$5$};
					\node[minimum size=0.5cm,scale=3] at (-0.6,-11) {$6$};
					\node[minimum size=0.5cm,scale=3] at (6,-10) {$7$};
				\end{scope}
		\end{tikzpicture}}
\caption*{\hspace*{.3in}\footnotesize {\sc Fig.5:} Repair using IP}
\end{subfigure}
\end{figure}

\vspace*{.2in}
To define the LP problem we construct a directed graph $\bar G_{f,D}$ as explained in the beginning of this section. 
The linear program of Prop.~\ref{prop:LP} in this case has value $6.75$ and the assignments of variables are: the primal program
\begin{equation*}
	X^*_{(u,v)} = \begin{cases}
		1 & \text{if $u \in \{2,3,4\}, v = 1$},\\
		0.5 & \text{if $u,v \in \{5,6,7\}, u \ne v$},\\
		0.25 & \text{if $u \in \{5,6,7\}, v \in \{2,3,4\}$},\\
		0 & \text{otherwise;}
	\end{cases}
\end{equation*}
the dual program:
\begin{equation*}
	Y^*_S = \begin{cases}
		0.125 & \text{if $|S|=2, S \subset \{2,3,4\}$},\\
		0.25 & \text{if $|S|=2, S \not\subset \{2,3,4\}$},\\
		0 & \text{otherwise}
	\end{cases}
\end{equation*}
Many more similar examples can be constructed for small-size graphs.

%
%
 
\section{Node repair on random graphs}\label{sec:random}
In this section we analyze the distributed repair procedure in the case when the underlying graph $G(V,E)$ is sampled from the $\cG_{n,p}$ ensemble, where $0<p<1.$ We denote such a random element from the ensemble as $\bG_{n,p}$. As before, we assume that the coordinates $C_1,\dots,C_n$ of a codeword of an $(n,k,d)$ MSR code are placed on the vertices 
$v_1,\dots,v_n$. The main question that we address is finding relations between the parameters $p,n,k,d$ such that 
graph-based repair of the failed node with high probability results in lower repair bandwidth than the AF strategy. Throughout this section we
assume that each helper node provides one field symbol for the repair of the (single) failed node.

We will assume that $p \gg \frac{\log n}{n}$  because if $\bG_{n,p}$ is not connected, then with positive
probability the node $v_f$ is isolated, and repair is not possible (the notation $f(n)\gg g(n)$ means that $g(n)=o(f(n))$). Furthermore, $\PP_{\cG_{n,p}}(\deg(v_f)\ge d)=\sum_{i=d}^n\binom{n}{i}p^i(1-p)^{n-i},$ which goes to zero for $n\to\infty$ if $d\gg np.$ Thus, overall this is the
parameter regime that may make the graph-based repair (possible and) advantageous over the agnostic AF repair procedure.

Throughout we will assume that $k$ and $d\in \Theta(n),$ and that $\chi(n):=d-k$ is $o(n)$, i.e., $d$ is close to $k$.
For simplicity (without loss of generality) we also assume that each helper node
provides only one symbol of $\ff$ for the repair of the failed node.

We will use the following two results regarding the random Erd{\"o}s-R{\'e}nyi graphs (below $\PP=\PP_{\cG_{n,p}}$).
\begin{lemma}[\!\!\cite{Bollobas81}, p.~50; \cite{FK2016}, Sec.7.1]\label{lemma:diam}
		
		(i) If $p^2n-2\log n \rightarrow \infty,$ and $	n^2(1-p) \rightarrow \infty,$
then $$\PP(\diam(\bG_{n,p})=2)\to 1.$$

(ii) Suppose that the functions $t = t(n) \ge 3$ and $0<p=p(n) <1$ satisfy 
    \begin{gather*}
    (\log n)/t - 3\log \log n \rightarrow \infty, \quad p^t n^{t-1} - 2\log n \rightarrow \infty,\\
			p^{t-1}n^{t-2} - 2\log n \rightarrow -\infty,
	\end{gather*}
then  $\PP(\diam(\bG_{n,p})=t)\to 1.$
\end{lemma}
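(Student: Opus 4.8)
The plan is to prove each part by sandwiching the diameter between two matching bounds, $\diam\le t$ and $\diam\ge t$, each holding whp. For part~(i) both bounds are elementary and the stated hypotheses are exactly what the relevant union bounds need. For part~(ii) the upper bound requires a correlation inequality, since the length-$t$ paths joining a fixed pair are not independent, and the lower bound requires actually exhibiting a distant pair, so some second-moment bookkeeping is unavoidable.

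\emph{Part (i).} For $\diam(\bG_{n,p})\le 2$, fix an unordered pair $\{u,v\}$ and let $N_{uv}$ be its number of common neighbours; then $N_{uv}\sim\mathrm{Bin}(n-2,p^2)$, so $\Pr(N_{uv}=0)=(1-p^2)^{n-2}\le e^{-(n-2)p^2}$. A union bound over the $\binom n2$ pairs gives $\Pr(\diam>2)\le\tfrac12\exp\!\big(2\log n-(n-2)p^2\big)\to0$ precisely because $p^2n-2\log n\to\infty$; in particular every pair then has a common neighbour, so the graph is connected with $\diam\le2$ whp. For $\diam\ge2$ it suffices that $\bG_{n,p}\ne K_n$, and $\Pr(\bG_{n,p}=K_n)=p^{\binom n2}\le\exp\!\big(-\binom n2(1-p)\big)\to0$ since $n^2(1-p)\to\infty$. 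Hence $\diam=2$ whp.

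\emph{Part (ii), upper bound.} For an ordered pair $(u,v)$ let $X_{uv}$ count the paths of length exactly $t$ from $u$ to $v$ with distinct internal vertices; then $\mathbb{E}X_{uv}=(n-2)(n-3)\cdots(n-t)\,p^t=(1+o(1))\,n^{t-1}p^t=:\mu$. Writing $X_{uv}=\sum_P\mathbbm{1}\{P\subseteq\bG_{n,p}\}$ over the potential paths $P$ (increasing events in the edge variables), Janson's inequality yields $\Pr(X_{uv}=0)\le\exp(-\mu+\Delta)$, where $\Delta$ is a sum over pairs of length-$t$ $u$--$v$ paths sharing at least one edge, of the probability that their union is present. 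The key estimate is $\Delta=o(\mu)$: grouping pairs by the shape of their overlap, the sum is dominated by pairs sharing a single edge and is controlled using $np\to\infty$ (which follows from the hypotheses) together with $(\log n)/t-3\log\log n\to\infty$, the latter bounding the number of overlap shapes so that the lower-order terms are negligible. Consequently $\Pr(X_{uv}=0)\le\exp(-(1-o(1))\,n^{t-1}p^t)$, and a union bound over the $\binom n2$ pairs gives $\Pr(\diam>t)\le\exp\!\big(2\log n-(1-o(1))n^{t-1}p^t\big)\to0$ by the hypothesis $p^tn^{t-1}-2\log n\to\infty$. (A cruder argument using only $\approx n/(t-1)$ vertex-disjoint potential paths makes the indicators independent and gives $\exp(-np^t/(t-1))$ per pair, but this bound is not sharp.)

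\emph{Part (ii), lower bound, and the main obstacle.} For $\diam\ge t$ it suffices to produce one pair at distance $\ge t$. Fix $(u,v)$; the events ``$P\subseteq\bG_{n,p}$'' over paths $P$ of length $\le t-1$ joining $u,v$ are increasing, so by the Harris/FKG inequality $\Pr(d(u,v)\ge t)\ge\prod_P(1-p^{|P|})\ge\exp(-(1+o(1))\,n^{t-2}p^{t-1})$, the exponent being dominated by the $\sim n^{t-2}$ paths of length exactly $t-1$ (shorter paths contribute less since $np\to\infty$). Hence the expected number of ordered pairs at distance $\ge t$ is at least $n(n-1)\exp(-(1+o(1))n^{t-2}p^{t-1})=\exp\!\big(2\log n-(1+o(1))n^{t-2}p^{t-1}\big)\to\infty$ by the hypothesis $p^{t-1}n^{t-2}-2\log n\to-\infty$. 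To pass from ``expectation $\to\infty$'' to ``a distant pair exists whp'' I would run the second moment method on $W=\sum_{u,v}\mathbbm{1}\{d(u,v)\ge t\}$, bounding covariances between distinct pairs by the same overlap accounting used above (equivalently, by conditioning on the BFS neighbourhood of a fixed vertex and treating the remaining pairs as nearly independent). The main obstacle in both directions is exactly this overlap accounting: verifying $\Delta=o(\mu)$ for Janson and the analogous $\mathrm{Var}(W)=o((\mathbb{E}W)^2)$, and tracking how $(\log n)/t-3\log\log n\to\infty$ tames the proliferation of overlap configurations so that the $\binom n2$-fold union bound and the second moment still close exactly at the stated thresholds; everything else is routine.
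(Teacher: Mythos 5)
First, note that the paper itself gives no proof of this lemma: it is quoted verbatim from Bollob\'as \cite{Bollobas81} and Frieze--Karo\'nski \cite{FK2016}, and the cited proofs do not count $u$--$v$ paths at all. They run a BFS/neighbourhood-expansion argument: show that $|\Gamma_k(v)|=(1+o(1))(np)^k$ simultaneously for all $v$ and all $k$ up to roughly $t/2$, then bound $\Pr(\rho(u,v)>t)$ by the probability that no edge joins two such large disjoint neighbourhoods, which, after conditioning on the exploration, is a clean independent-edges computation giving $\exp(-(1+o(1))n^{t-1}p^t)$ per pair. Your part (i) is correct and complete.

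In part (ii) the upper bound has a genuine gap. The key claim $\Delta=o(\mu)$ is neither justified nor true throughout the admissible range: the dominant contribution to $\Delta$ comes from pairs of length-$t$ paths sharing their first (or last) edge and gives $\Delta\gtrsim \mu^2/(np)=\mu\cdot n^{t-2}p^{t-1}$, and the hypotheses only force $n^{t-2}p^{t-1}<2\log n$, not $o(1)$; when $n^{t-2}p^{t-1}\ge 2$ the exponent $-\mu+\Delta$ is nonnegative and your form of Janson's inequality is vacuous (one must pass to the extended form $\exp(-\mu^2/2\Delta)\approx\exp(-np/2)$ and verify $np\gg\log n$, which is where the condition $(\log n)/t-3\log\log n\to\infty$ actually enters). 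Moreover, even where $\Delta=o(\mu)$ does hold, the inference from $\Pr(X_{uv}=0)\le\exp(-(1-o(1))\mu)$ to ``the union bound closes because $\mu-2\log n\to\infty$'' fails at the exact threshold: $\mu$ may equal $2\log n+\omega$ with $\omega\to\infty$ arbitrarily slowly, and the discarded $o(1)\cdot\mu=o(\log n)$ term can swamp $\omega$. What is needed is $\mu-\Delta-2\log n\to\infty$, which requires the quantitative estimate $\Delta\lesssim\mu^2/(np)$ together with $np\gg(\log n)^3$. The lower bound has the same $(1\pm o(1))$ issue (benign there if you check the correction is additively $o(1)$), but its essential step --- the variance bound for $W$, i.e.\ the covariance accounting over pairs of vertex pairs --- is precisely the part you defer, and it is the delicate part of Bollob\'as's argument. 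As written the proposal is a plausible programme rather than a proof; the expansion route of the cited sources sidesteps the path-correlation problem entirely and is the one to follow for a self-contained argument.
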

 \begin{lemma}[\!\!\cite{Chung01}, Lemma 3]\label{lemma:Ni}
Suppose that $p \ge \frac{\log n}{n}$. For any $\epsilon >0$ and all $i=1,\dots,\lfloor\log n\rfloor$
 \begin{gather}
    \PP(|\Gamma_i(x)| \le (1+\epsilon)(np)^i)\ge 1-{1}/{\log^2 n} \label{eq:N>}\\
    \PP(|N_i(x)| \le (1+2\epsilon)(np)^i)\ge 1-{1}/{\log^2 n} . \label{eq:N<}
	\end{gather}	
 \end{lemma}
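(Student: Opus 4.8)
The plan is to establish the sphere estimate \eqref{eq:N>} by induction on $i$ and then to read off the ball estimate \eqref{eq:N<} by summing a geometric progression. First I would set up a breadth-first-search exposure of the edges of $\bG_{n,p}$ incident to the growing neighbourhood of the fixed vertex $x$: having revealed $\Gamma_0(x),\dots,\Gamma_{i-1}(x)$ and hence $N_{i-1}(x)$, the edges between $\Gamma_{i-1}(x)$ and $V\setminus N_{i-1}(x)$ are still untouched, so conditionally on this history each vertex outside $N_{i-1}(x)$ joins $\Gamma_i(x)$ independently with probability $1-(1-p)^{|\Gamma_{i-1}(x)|}\le p\,|\Gamma_{i-1}(x)|$. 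Thus $|\Gamma_i(x)|$ is conditionally binomial, and on the inductive event $|\Gamma_{i-1}(x)|\le(1+\epsilon_{i-1})(np)^{i-1}$ it is stochastically dominated by $\mathrm{Bin}\!\big(n,(1+\epsilon_{i-1})(np)^{i-1}p\big)$, a binomial with mean $(1+\epsilon_{i-1})(np)^i$ — which is large because $np\ge\log n\to\infty$. A Chernoff bound applied to this dominating variable should then give $|\Gamma_i(x)|\le(1+\epsilon_i)(np)^i$ outside an event of probability at most $n^{-c}$, and a union bound over the previous levels closes the induction. The base case $i=1$ is identical: $|\Gamma_1(x)|=\deg(x)\sim\mathrm{Bin}(n-1,p)$, dominated by $\mathrm{Bin}(n,p)$ with mean $np\ge\log n$.

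The subtlety I anticipate, and the step I would spend the most care on, is keeping the multiplicative constant from degrading with the depth: a naive induction turns $1+\epsilon$ into $(1+\epsilon)^i$. The fix is to run the induction with an increasing auxiliary sequence $\epsilon_i:=\epsilon(1-2^{-i})\uparrow\epsilon$, so that the relative deviation I must absorb at step $i$ is only $\delta_i\asymp\epsilon 2^{-i}$ while $\epsilon_i<\epsilon$ for every $i$. The resulting Chernoff exponent is of order $\delta_i^2(np)^i\asymp(np/4)^i\ge np/4\ge(\log n)/4$, so each level still fails with probability at most $n^{-c}$ for a constant $c=c(\epsilon)>0$; a union bound over the at most $\lfloor\log n\rfloor$ levels costs only a factor $\log n$, and $n^{c}\gg\log^3 n$ then yields the stated probability $1-1/\log^2 n$ for $n$ large. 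One more bookkeeping point: whenever $(np)^i\ge n$ both asserted inequalities are vacuous, since any sphere or ball has at most $n$ vertices, so the induction really only has to run over the $O(\log n)$ levels with $(np)^i<n$, on which also $|N_{i-1}(x)|=o(n)$ and the binomial parameters are well defined (i.e. $(1+\epsilon_{i-1})(np)^{i-1}p\le 1$).

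For the ball bound I would simply note that, on the good event that $|\Gamma_j(x)|\le(1+\epsilon_j)(np)^j$ for every $j\le i$, we get $|N_i(x)|=1+\sum_{j=1}^i|\Gamma_j(x)|\le 1+(1+\epsilon)(np)^i\big(1-(np)^{-1}\big)^{-1}$, and since $np\to\infty$ and $(np)^i\ge np\to\infty$ the extra tail of the geometric series and the additive $1$ are both swallowed by the slack between $1+\epsilon$ and $1+2\epsilon$ once $n$ is large. The probability bound is inherited from the sphere estimate, since it holds on the same good event. The only genuinely delicate ingredient, as noted above, is the dyadic choice of $\epsilon_i$, which works precisely because the geometric growth of the conditional means $(np)^i$ beats the geometric shrinkage $4^{-i}$ of the admissible squared deviation; everything else is routine Chernoff estimation and BFS bookkeeping.
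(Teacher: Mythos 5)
This lemma is imported verbatim from Chung and Lu (\cite{Chung01}, Lemma~3) and the paper gives no proof of it, so there is no internal argument to compare against; your BFS-exposure-plus-Chernoff induction is the standard proof and is essentially the one in the cited source. The argument is correct: the conditional stochastic domination of $|\Gamma_i(x)|$ by $\mathrm{Bin}\bigl(n,(1+\epsilon_{i-1})(np)^{i-1}p\bigr)$, the dyadic schedule $\epsilon_i=\epsilon(1-2^{-i})$ that keeps the constant from compounding, the observation that the exponent $\delta_i^2(np)^i\asymp\epsilon^2(np/4)^i\ge \epsilon^2(\log n)/4$ survives the union bound over $O(\log n)$ levels against the target $1/\log^2 n$, and the geometric-series step for $|N_i(x)|$ all check out. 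One bookkeeping detail is slightly misplaced: $(np)^i<n$ gives only $(1+\epsilon_{i-1})(np)^{i-1}p<1+\epsilon$, not $\le 1$, so the clean dichotomy is rather that at each level either $(1+\epsilon_{i-1})(np)^{i-1}p\le 1$ and the domination applies, or else $(1+\epsilon_{i-1})(np)^i>n$, whence $(1+\epsilon_i)(np)^i>n\ge|\Gamma_i(x)|$ and the claim (and all subsequent ones, since $(1+\epsilon_j)(np)^j$ is increasing in $j$) holds deterministically. This is a one-line repair, not a gap. Note also that the paper's $N_i$ excludes the center $x$, so your additive $+1$ is not even needed.
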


\subsection{ Repair threshold}\label{sec:threshold}
Let $t$ be a fixed integer. We say that {\em $t$-layer repair} of the failed node $v$ {\em is possible} if 
$$
   \PP(|N_t(v)|\ge d)\to 1 \text{ as $n\to\infty$}
   $$
and call the minimum $t$ for which this holds the {\em threshold depth} for repair. Note that such a $t$ is a function of $n$ and $p$.
   The next proposition establishes a threshold for $t$-layer repair in terms of $p.$ 
   \begin{proposition}\label{lemma:threshold}
   If
   \begin{gather}\label{eq:conditions}
    (np)^{t-1}=o(n), \quad
p^t n^{t-1}-2\log n \rightarrow \infty,
    \end{gather}
    then $t$ is the threshold depth for repair.
   \end{proposition}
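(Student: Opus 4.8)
The plan is to split ``$t$ is the threshold depth'' into two assertions: (a) \emph{$t$-layer repair is possible}, i.e.\ $\PP(|N_t(v)|\ge d)\to1$; and (b) \emph{$(t-1)$-layer repair is not possible}, i.e.\ $\PP(|N_{t-1}(v)|\ge d)\to0$. Together with the monotonicity $N_{t-1}(v)\subseteq N_t(v)\subseteq N_{t+1}(v)\subseteq\cdots$ (so that possibility is monotone in the number of layers) these two assertions identify $t$ as the least admissible number of layers, which is the definition of the threshold depth. Note first that \eqref{eq:conditions} forces $t\ge2$ (for $t=1$ the second condition fails) and, upon taking logarithms in $(np)^{t-1}=o(n)$ and using the standing assumption $np\gg\log n$, forces $t-1<\log n/\log\log n$, so $t=o(\log n)$ and $t-1\le\lfloor\log n\rfloor$ for large $n$. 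Assertion (b) is then immediate: since $d\in\Theta(n)$ while $(np)^{t-1}=o(n)$, for any fixed $\epsilon>0$ and all large $n$ we have $(1+2\epsilon)(np)^{t-1}<d$, so Lemma~\ref{lemma:Ni}, inequality \eqref{eq:N<}, gives $\PP(|N_{t-1}(v)|<d)\ge 1-1/\log^2n\to1$.

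For assertion (a) I would run the breadth-first exploration of $\bG_{n,p}$ from $v$ and show that whp $N_t(v)=V\setminus\{v\}$, whose size $n-1$ is at least $d$. The first ingredient is the lower-bound counterpart of Lemma~\ref{lemma:Ni}: whp, simultaneously for all $v$ and all $i\le t-1$, one has $|\Gamma_i(v)|=(1+o(1))(np)^i$. To see this, when the exploration passes from layer $i$ to layer $i+1$ the number of newly discovered vertices is, given the history, distributed as $\mathrm{Bin}\bigl(n-|N_i(v)|,\,1-(1-p)^{|\Gamma_i(v)|}\bigr)$; since $(np)^ip\le(np)^{t-2}p=p^{t-1}n^{t-2}\to0$ for $i\le t-2$ (a consequence of $(np)^{t-1}=o(n)$) and $|N_i(v)|=o(n)$, the mean of this variable equals $(1+o(1))\,np\,|\Gamma_i(v)|$, which is $\gg\log n$ at every step; a Chernoff bound makes each layer concentrate with per-instance failure probability $o(n^{-2})$, and a union bound over the $o(\log n)$ layers and the $n$ roots $v$ closes the ingredient.

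The second ingredient of (a) is a covering (sprinkling) step. For $u\notin N_{t-1}(v)$, the edges between $u$ and $\Gamma_{t-1}(v)$ were never inspected while building $N_{t-1}(v)$, so $\PP\bigl(u\not\sim\Gamma_{t-1}(v)\mid\text{history}\bigr)=(1-p)^{|\Gamma_{t-1}(v)|}$, and $u\notin N_t(v)$ forces $u\not\sim\Gamma_{t-1}(v)$. Combining this with the first ingredient (whose complementary event costs only $o(n^{-1})$ per vertex), I get
\begin{equation*}
\mathbb E\bigl|\{u\ne v:\ u\notin N_t(v)\}\bigr|\ \le\ n\,(1-p)^{(1-\epsilon)(np)^{t-1}}+o(1)\ \le\ n\,e^{-(1-\epsilon)\,p^tn^{t-1}}+o(1).
\end{equation*}
Since \eqref{eq:conditions} gives $p^tn^{t-1}\ge2\log n+\omega(1)$, the right-hand side is $n^{-1+2\epsilon}e^{-\omega(1)}+o(1)\to0$ once $\epsilon<1/2$, and Markov's inequality yields $N_t(v)=V\setminus\{v\}$ whp, proving (a).

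The step I expect to be the main obstacle is the first ingredient of (a): the \emph{uniform} lower bound $|\Gamma_{t-1}(v)|\ge(1-\epsilon)(np)^{t-1}$ with per-vertex failure probability $o(n^{-1})$. One has to set up the exploration-process coupling carefully so that the conditioning used in the Binomial description is legitimate, and one has to check that the Chernoff estimates still go through at the last layers, where $|\Gamma_i(v)|p$ is no longer negligible although $|N_i(v)|$ is still $o(n)$, so that no saturation occurs. Everything else --- assertion (b), the covering step, and the bookkeeping ($d\le n-1$, $t=o(\log n)$, $t-1\le\lfloor\log n\rfloor$) --- is routine. I would remark that for the special case $t=2$ one may shortcut assertion (a) by quoting Lemma~\ref{lemma:diam}(i), since $p^2n-2\log n\to\infty$ gives $\diam(\bG_{n,p})=2$ whp; for general $t$ I would not route through Lemma~\ref{lemma:diam}(ii), whose additional hypothesis $(\log n)/t-3\log\log n\to\infty$ is not implied by \eqref{eq:conditions}.
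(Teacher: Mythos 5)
Your proof is correct, and the lower-bound half (your assertion (b)) coincides with the paper's: both apply \eqref{eq:N<} of Lemma~\ref{lemma:Ni} with $i=t-1$ and compare $d=\Theta(n)$ against $(np)^{t-1}=o(n)$. Where you genuinely diverge is the achievability half. The paper simply cites Lemma~\ref{lemma:diam} to conclude $\diam(\bG_{n,p})=t$ w.h.p., hence $N_t(v)=V\setminus\{v\}\supseteq D$; you instead reprove the relevant direction from scratch via the BFS exploration (layer sizes concentrating at $(1+o(1))(np)^i$ for $i\le t-1$, using $p^{t-1}n^{t-2}\to 0$ to linearize $1-(1-p)^{|\Gamma_i(v)|}$) followed by a first-moment sprinkling step over the uninspected edges into $\Gamma_{t-1}(v)$. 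This is essentially the standard proof of the diameter theorem, so your route is longer but self-contained, and it remains valid if $t$ is allowed to grow with $n$. Your stated reason for avoiding Lemma~\ref{lemma:diam}(ii) --- that its hypothesis $(\log n)/t-3\log\log n\to\infty$ does not follow from \eqref{eq:conditions} --- is accurate only in that growing-$t$ regime; Sec.~\ref{sec:threshold} declares $t$ a fixed integer, in which case that hypothesis is automatic, and the lemma's remaining hypothesis $p^{t-1}n^{t-2}-2\log n\to-\infty$ is implied by $(np)^{t-1}=o(n)$, so the paper's citation is legitimate as written. The one step of your argument that truly needs care is the uniform \emph{lower} bound on $|\Gamma_{t-1}(v)|$ with per-vertex failure probability $o(n^{-1})$, which you correctly identify as the main obstacle; the paper's quoted toolbox (Lemma~\ref{lemma:Ni} is an upper bound only) does not supply it, so to stay within cited results you would need the matching lower-bound estimate from \cite{Chung01} or simply the diameter theorem itself --- at which point the two proofs collapse into one.
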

\begin{proof} 
To show that $t$-layer repair is possible, we observe that from Lemma~\ref{lemma:diam},
$\PP(\diam(\bG_{n,p}) = t) \rightarrow 1$ for all $t\ge 2$. This implies that for any failed node $v$, all the other nodes in the graph are reachable in at most $t$ 
steps, and in particular, $|N_t(v)|=n>d.$
To show that $t$ is the smallest radius that supports repair, observe that by \eqref{eq:N<}
for any $\epsilon >0$ 
   \begin{equation}\label{eq:->1}
\PP(|N_{t-1}(v)| \le (1+2\epsilon)(np)^{t-1})\to 1.
   \end{equation}
 Since $d$ is a linear function of $n$, we have the inclusion
     $$
     \{|N_{t-1}(v)| \ge d\} \subset \{|N_{t-1}(v)|/n > 0\} \quad(n\to\infty).
     $$
Together with \eqref{eq:->1} this implies that $\PP(|N_{t-1}(v)|/n \ge \gamma) \rightarrow 0$ for any $\gamma>0.$
\end{proof}
{\em Remark:} Given $t$, the conditions \eqref{eq:conditions} are satisfied if
  \begin{equation}\label{eq:p(n)}
  n^{-(t-1)/t}g(n) \ll p(n) \ll n^{-(t-2)/(t-1)},
  \end{equation}
where $g(n)\gg(2\log n)^{1/t}.$ Rephrasing Prop.~\ref{lemma:threshold}, we could say that for a given repair depth $t$ the probability $p(n)$ that
satisfies conditions \eqref{eq:p(n)} is a threshold for repair of depth $t$ in the ensemble $\cG_{n,p}.$
 
\subsection{Repair bandwidth}
In this section we estimate the communication complexity of  node recovery on a random graph. Throughout this section we assume that 
$t$ is the threshold for repair, i.e., conditions \eqref{eq:conditions} hold for $t,n,$ and $p$. Recall that by our assumption, $l=d-k+1.$ 
\begin{proposition}\label{lemma:bAF} The repair bandwidth $\beta_{\text{\rm AF}}$ satisfies
  $$
	\PP(\beta_{\text{\rm AF}} \ge td -o(n)) \rightarrow 1
  $$ where $t$ is the threshold for repair as given by \eqref{eq:conditions}.
\end{proposition}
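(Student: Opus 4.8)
The plan is to combine the closed-form expression \eqref{eq:bAF} for $\beta_{\text{\rm AF}}$ with the shell-growth estimates of Lemma~\ref{lemma:Ni}. First I would substitute $l=d-k+1$ into \eqref{eq:bAF}, so that the factor $l/(d-k+1)$ equals $1$, and then use $|N_{t-1}(v_f)|=\sum_{i=1}^{t-1}|\Gamma_i(v_f)|$ to rewrite
$$\beta_{\text{\rm AF}} = td - \sum_{i=1}^{t-1}(t-i)\,|\Gamma_i(v_f)|.$$
Thus it suffices to show that the correction term $\sum_{i=1}^{t-1}(t-i)|\Gamma_i(v_f)|$ is $o(n)$ with probability tending to $1$. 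Along the way the same shell bounds also confirm that the repair tree really has depth exactly $t$ (so that \eqref{eq:bAF} applies): $|N_{t-1}(v_f)|=o(n)<d$ while $|N_t(v_f)|=n\ge d$ since $\diam(\bG_{n,p})=t$ with high probability by Lemma~\ref{lemma:diam}.

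Next I would invoke Lemma~\ref{lemma:Ni}: for a fixed $\epsilon>0$ and each $i\le\lfloor\log n\rfloor$, the event $|\Gamma_i(v_f)|\le(1+\epsilon)(np)^i$ fails with probability at most $1/\log^2 n$. Since $t$ is a fixed integer, $t-1\le\lfloor\log n\rfloor$ for all large $n$, so a union bound over $i=1,\dots,t-1$ yields $|\Gamma_i(v_f)|\le(1+\epsilon)(np)^i$ simultaneously for all such $i$ with probability at least $1-(t-1)/\log^2 n\to 1$. On this event,
$$\sum_{i=1}^{t-1}(t-i)\,|\Gamma_i(v_f)| \le (1+\epsilon)\,t\sum_{i=1}^{t-1}(np)^i.$$

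Then I would use that $np\to\infty$ (which follows from the threshold conditions \eqref{eq:conditions}, e.g.\ via the regime \eqref{eq:p(n)} giving $np\gg n^{1/t}g(n)$), so the geometric-type sum is dominated by its top term, $\sum_{i=1}^{t-1}(np)^i\le 2(np)^{t-1}$ for $n$ large. Hence the correction term is at most $2(1+\epsilon)t\,(np)^{t-1}$, which is $o(n)$ because $(np)^{t-1}=o(n)$ by \eqref{eq:conditions} and $t$ is constant. Plugging this into the displayed identity for $\beta_{\text{\rm AF}}$ gives $\beta_{\text{\rm AF}}\ge td-o(n)$ on an event of probability tending to $1$, which is the claim.

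There is no genuine obstacle here: the argument is bookkeeping layered on top of Lemma~\ref{lemma:Ni}. The only two points requiring a little care are (i) checking $np\to\infty$ from the hypotheses, so that the total number of helper nodes is concentrated in the outermost layer, and (ii) the union bound, which is harmless precisely because $t$ is held fixed (for a growing $t=t(n)$ one would instead need $t\cdot(np)^{t-1}=o(n)$, a mild strengthening of \eqref{eq:conditions}).
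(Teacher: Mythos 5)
Your proof is correct and follows essentially the same route as the paper: rewrite \eqref{eq:bAF} as $\beta_{\text{\rm AF}}=td-\sum_{i=1}^{t-1}(t-i)|\Gamma_i(v_f)|$, apply Lemma~\ref{lemma:Ni} with a union bound over the $t-1$ layers, and use $(np)^{t-1}=o(n)$ from \eqref{eq:conditions} to conclude the correction term is $o(n)$ with high probability. You merely spell out two details the paper leaves implicit (the geometric-sum bound via $np\to\infty$, and the depth of the repair tree), which is fine.
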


\begin{proof} Rewriting the expression for $\beta_{\text{\rm AF}}$ in \eqref{eq:bAF}, we obtain
    $$
    \beta_{\text{\rm AF}}= td - \sum_{i=1}^{t-1}(t-i)|\Gamma_i(v_f)|.
    $$
Define the events $E_i = \{|\Gamma_i(v_f)| \le (1+\epsilon)(np)^i\}$ and notice that $E: = \cap_{i=1}^{t-1}E_i \subseteq \{\beta_{\text{\rm AF}} \ge (td -o(n))\}$. 
From Lemma \ref{lemma:Ni} we know that $\PP(E_i^c) \le 1/\log^2 n$ for all $i$, and thus
   $$
	Pr(\cup_{i=1}^{t-1}E_i^c) \le \sum_{i=1}^{t-1}Pr(E_i^c)\le t/{\log^2 n}.
	$$
Finally, $\PP(\beta_{\text{\rm AF}} \ge td-o(n)) \ge Pr(E) \ge 1-\frac{t}{\log^2 n}\rightarrow 1.$
\end{proof}

{\em Remark:} This proposition implies that for large $n,$ most of the helper nodes are at distance $t$ from the 
failed node. Note that, assuming \eqref{eq:conditions}, Lemma \ref{lemma:Ni} along with Lemma 8 in \cite{Chung01} imply that the size of the neighborhood $\Gamma_t(v)$ with high probability grows as $c(np)^t$ for some constant $c<1$. This provides an intuitive explanation of the claim of Prop.~\ref{lemma:bAF} for $d=\Theta(n)$ and $(np)^{t-1} = o(n)$, implying that the AF repair strategy results in a $t$-fold increase of repair bandwidth compared to full connectivity. 

\vspace*{.1in}The next proposition gives further insights into the relationship between $\beta_{\text{\rm AF}}$ and $t$. 
\begin{proposition}\label{lemma:kappa}
	Let $d=\delta n, 0<\delta<1,$ let $\kappa(n)$ be a function of $n$ such that $\underline{c}\le \kappa(n)/n\le \bar c$ for some 
constants $\underbar c, \bar c$ and all sufficiently large $n$,
and let $t$ be the threshold for repair as given by \eqref{eq:conditions}. We have
    $$
    \PP(\beta_{\text{\rm AF}} \le \kappa(n))\to
         \begin{cases} 
              0 &\text{if } t>\bar c/\delta\\
              1 &\text{ if } t\le \underbar c/\delta.
         \end{cases}     
    $$
\end{proposition}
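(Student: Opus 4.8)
The plan is to leverage the exact asymptotic expression for $\beta_{\text{\rm AF}}$ already established in the proof of Proposition~\ref{lemma:bAF}, namely that with high probability $\beta_{\text{\rm AF}} = td - o(n)$, together with the observation that $td = t\delta n$. First I would record the two-sided control: from Lemma~\ref{lemma:Ni} we have $\beta_{\text{\rm AF}} \ge td - o(n)$ w.h.p., and on the other hand $\beta_{\text{\rm AF}} \le td$ deterministically (each helper contributes one symbol and is relayed at most $t$ times), so in fact $\beta_{\text{\rm AF}} = td(1+o(1))$ w.h.p. Thus the event $\{\beta_{\text{\rm AF}}\le\kappa(n)\}$ is, up to a vanishing-probability set, the event $\{td(1+o(1))\le\kappa(n)\}$, i.e. $\{t\delta n(1+o(1))\le\kappa(n)\}$.

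Next I would compare $t\delta n$ with the sandwiching hypotheses $\underline c\, n\le\kappa(n)\le\bar c\, n$. If $t>\bar c/\delta$, then $t\delta n > \bar c n \ge \kappa(n)$ with a fixed multiplicative gap $t\delta/\bar c > 1$; since the $o(n)$ correction in $\beta_{\text{\rm AF}}=td-o(n)$ is asymptotically negligible against this gap, for all large $n$ the event $\{\beta_{\text{\rm AF}}\le\kappa(n)\}$ forces $td - o(n)\le\kappa(n)\le\bar c n$, which is impossible; hence on the high-probability event $E=\cap_{i=1}^{t-1}E_i$ from the proof of Prop.~\ref{lemma:bAF} the bad event cannot occur, giving $\PP(\beta_{\text{\rm AF}}\le\kappa(n))\le\PP(E^c)\le t/\log^2 n\to 0$. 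Symmetrically, if $t\le\underline c/\delta$, then $td = t\delta n\le\underline c n\le\kappa(n)$ deterministically (no probabilistic input needed at all for the upper bound $\beta_{\text{\rm AF}}\le td$), so $\PP(\beta_{\text{\rm AF}}\le\kappa(n))=1$.

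Let me be slightly more careful about the $o(n)$ term in the first case, since that is the only place subtlety enters. Write $\beta_{\text{\rm AF}} = td - \eta(n)$ on the event $E$, where $0\le\eta(n)\le (1+\epsilon)\sum_{i=1}^{t-1}(t-i)(np)^i = o(n)$ by condition \eqref{eq:conditions} (which gives $(np)^{t-1}=o(n)$, hence every lower-order term is also $o(n)$, and $t$ is fixed). Then on $E$, $\beta_{\text{\rm AF}}\ge td - o(n) = t\delta n - o(n)$; choosing $n$ large enough that $o(n) < (t\delta - \bar c)n/2$ (possible since $t\delta - \bar c > 0$ is a fixed positive constant) forces $\beta_{\text{\rm AF}} > \bar c n \ge \kappa(n)$, so $\{\beta_{\text{\rm AF}}\le\kappa(n)\}\cap E=\varnothing$ and the conclusion follows.

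I do not expect a genuine obstacle here: the proposition is essentially a corollary of Prop.~\ref{lemma:bAF} once one unpacks what $td=t\delta n$ means relative to a function pinned between $\underline c n$ and $\bar c n$. The only point requiring mild care is ensuring the $o(n)$ slack in the concentration bound is dominated by the fixed constant gap $|t\delta - \bar c|$ (resp. that the upper bound case needs no concentration at all); everything else is bookkeeping with the event $E$ and the union bound $\PP(E^c)\le t/\log^2 n$ already used above. One should also note the hypotheses leave a gap $\underline c/\delta < t \le \bar c/\delta$ uncovered, but this is inherent to the statement and needs no comment in the proof.
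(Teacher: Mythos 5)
Your proposal is correct and follows essentially the same route as the paper: both cases reduce to the identity $\beta_{\text{\rm AF}} = td - \sum_{i=1}^{t-1}(t-i)|\Gamma_i(v_f)|$, with the first case handled by conditioning on the high-probability event $E$ from Prop.~\ref{lemma:bAF} (on which the subtracted sum is $\Theta((np)^{t-1})=o(n)$ and hence cannot reach the constant gap $(t\delta-\bar c)n$), and the second case by the deterministic bound $\beta_{\text{\rm AF}}\le td\le \underline{c}\,n\le\kappa(n)$. Your extra care with the $o(n)$ slack is exactly the step the paper's proof also relies on.
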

\begin{proof}
To prove the first claim in the proposition, compute
   \begin{align*}
			\PP&(\beta_{\text{\rm AF}} \le \kappa(n)) 
			\le \PP(\beta_{\text{\rm AF}} \le \overline{c}n)\\
			&= \PP\Big(td - \sum_{i=1}^{t-1}(t-i)|\Gamma_i(v_f)| \le \overline{c}n\Big)\\
			&=\PP\Big(\sum_{i=1}^{t-1}(t-i)|\Gamma_i(v_f)| \ge (t\delta - \overline{c})n\Big)\\
			&\le \PP\Bigl(\sum_{i=1}^{t-1}(t-i)|\Gamma_i(v_f)| \ge (t\delta - \overline{c})n \Bigm\vert E\Bigr)\PP(E)+\PP(E^c),
	\end{align*}
where the event $E$ is defined above in Prop. \ref{lemma:bAF}. Conditional on $E$ we have 
$\sum_{i=1}^{t-1}(t-i)(1+\epsilon)(np)^i=\Theta((np)^{t-1})$, and \eqref{eq:conditions} implies that the first term $\to 0$ w.h.p. 
To complete the proof notice that $\PP(E^c)\le {t}/{\log^2 n}\to 0.$

For the second claim, observe that
 $$
\PP(\beta_{\rm AF} \le \kappa(n)) \ge \PP(\beta_{\text{\rm AF}} \le \underbar{c}n) \ge \PP(\beta_{\text{\rm AF}} \le td) = 1
  $$
  concluding the proof.
\end{proof}

Now let us show that the graph-based repair as defined in \eqref{eq:lc} or \eqref{eq:DM} with high probability has smaller repair bandwidth. %
\begin{theorem}\label{thm:IP}
	 Let $t$ be the threshold given in Prop.~\ref{lemma:threshold}. Let  $\chi(n)$ be such that 
   $
\chi(n)n^{s-1}p^s \rightarrow 0
  $
where $s \le t-1$ is the largest integer for which this condition holds. Then 
$
\PP(\beta_{\text{\rm IP}} \le (t-s)d+o(n)) \rightarrow 1.
$
\end{theorem}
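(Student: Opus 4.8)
The plan is to turn $\beta_{\text{IP}}$ into an explicit sum over the repair tree, split that sum at layer $s$, and bound the two pieces by a random-graph layer estimate and by a deterministic tree count, respectively. First I would record that, since $t$ is the threshold for repair, Prop.~\ref{lemma:threshold} (via Lemma~\ref{lemma:diam}) gives $\diam(\bG_{n,p})=t$ w.h.p., hence $|N_t(v_f)|=n>d$ w.h.p.; on this event a breadth-first repair tree $T_f$ of depth $t$ exists, its layers coincide with the distance spheres $\Gamma_j(v_f)$, and $N_{t-1}(v_f)\subseteq D\subseteq N_t(v_f)$. Because in this section every helper downloads one field symbol, we have $l=d-k+1$, and the protocol of Sec.~\ref{sec:MSR} underlying Theorem~\ref{thm:PM_repair} forwards across the edge above a non-root vertex $v$ exactly $\min\{d-k+1,|D^*(v)|\}$ symbols (raw relaying while $|D^*(v)|\le d-k+1$, and a fixed length-$l$ linear combination once $|D^*(v)|\ge d-k+2$). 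Hence
\[
\beta_{\text{IP}}=\sum_{v\in V(T_f)\setminus\{v_f\}}\min\{d-k+1,\,|D^*(v)|\}.
\]

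Next I would split this sum according to whether $v$ lies within distance $s$ of $v_f$:
\[
\beta_{\text{IP}}=\underbrace{\sum_{v\in N_s(v_f)\setminus\{v_f\}}\min\{d-k+1,|D^*(v)|\}}_{(\mathrm{I})}
\;+\;\underbrace{\sum_{j=s+1}^{t}\ \sum_{v\in\Gamma_j(v_f)\cap D}\min\{d-k+1,|D^*(v)|\}}_{(\mathrm{II})}.
\]
For $(\mathrm{I})$, I would bound every summand by $d-k+1=\chi(n)+1$, so $(\mathrm{I})\le(\chi(n)+1)|N_s(v_f)|$, and then invoke Lemma~\ref{lemma:Ni}, inequality \eqref{eq:N<} (legitimate since $s\le t-1<\log n$ for large $n$ by the growth condition on $t$), to get $|N_s(v_f)|\le(1+2\epsilon)(np)^s$ with probability at least $1-1/\log^2 n$. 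Thus $(\mathrm{I})\le(1+2\epsilon)(\chi(n)+1)(np)^s$ w.h.p.; writing $(\chi(n)+1)(np)^s=n\cdot\chi(n)n^{s-1}p^s+(np)^s$, the hypothesis $\chi(n)n^{s-1}p^s\to0$ disposes of the first term and the threshold bound $(np)^{t-1}=o(n)$ (with $s\le t-1$) of the second, so $(\mathrm{I})=o(n)$ w.h.p.

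For $(\mathrm{II})$ no randomness is needed: I would bound each summand by $|D^*(v)|$ and use that in a rooted tree the subtrees hanging from the layer-$j$ vertices partition the vertices at depth $\ge j$ from the root, so that $\sum_{v\in\Gamma_j(v_f)\cap D}|D^*(v)|=|\{u\in D:\rho(u,v_f)\ge j\}|\le|D|=d$; summing over $j=s+1,\dots,t$ gives $(\mathrm{II})\le(t-s)d$. Combining the two estimates yields $\beta_{\text{IP}}\le(t-s)d+o(n)$ w.h.p., which is the assertion (the degenerate case $s=0$ is immediate, since $\beta_{\text{IP}}\le\beta_{\text{AF}}\le td$).

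The only non-mechanical step, and hence the main obstacle, is justifying the closed form for $\beta_{\text{IP}}$: one must check that as soon as some vertex of $T_f$ switches to forwarding a length-$l$ combination, every vertex above it has $|D^*(\cdot)|\ge d-k+2$ and also forwards only $l$ symbols, so that no edge ever carries more than $\min\{d-k+1,|D^*(v)|\}$ (that it carries at least this much is Lemma~\ref{lemma:bound}). Once that is settled, what remains is a single application of Lemma~\ref{lemma:Ni} at the scale $i=s$ together with an elementary deterministic count, with no further random-graph input required.
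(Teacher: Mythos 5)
Your argument is correct and matches the paper's proof in all essentials: both bound the contribution of the layers $1,\dots,s$ by $(\chi(n)+1)\cdot O((np)^s)=o(n)$ via Lemma~\ref{lemma:Ni} and the contribution of layers $s+1,\dots,t$ by $(t-s)d$. Your double-counting of subtree sizes in part (II) and your single use of the ball bound at scale $s$ are only cosmetic variants of the paper's explicit layer-size computation and its union bound over the sphere estimates for $i\le s$.
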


{\em Remark:} Since $pn\to\infty$, it is easy to check that the assumptions of the theorem are non-vacuous, i.e., the largest $s$ satisfying
the condition exists and is well defined.

\begin{proof}
Let $T_f$ be the repair tree with the root $v_f$. By assumption, the distance from the root to the leaves is $t$, and we will assume that the
helper nodes in $\Gamma_i(v_f), i=t, t-1,\dots,s+1$ simply relay their information along the edges, while the nodes in $N_s(v_f)$ transmit
$l=d-k+1$ symbols given by a linear combination of the form given in \eqref{eq:lc}.

 Then, for the failed node $v_f$, we have
\begin{align*}
\beta_{\text{\rm IP}} &= (t-s)(d-|N_{t-1}(v_f)|)\\
&+\sum_{i=1}^{t-s-1}(t-s-i)|\Gamma_{t-i}(v_f)|
                 + (d-k+1)\sum_{i=1}^s|\Gamma_i(v_f)|\\
					&= (t-s)d+\sum_{i=1}^s|\Gamma_{i}(v_f)|(d-k+1-(t-s))\\
					& \hspace*{.3in}-\sum_{i=s+1}^{t-1}|\Gamma_{i}(v_f)|(t-i)\\
					&\le (t-s)d+\sum_{i=1}^s|\Gamma_{i}(v_f)|(d-k+1-t+s)).
\end{align*}
Proceeding similarly to the proof of Prop.~\ref{lemma:bAF}, we obtain
    \begin{multline*}
	\PP\Big(\beta_{\text{\rm IP}} \le (t-s)d+ \sum_{i=1}^s(1+\epsilon)(np)^{i}(\chi(n)+1-t+s)\Big)
	 \\\ge 1-{s}/{\log^2 n} \rightarrow 1.
    \end{multline*}
Now using the assumption $\chi(n)(np)^s = o(n)$ finishes the proof.
\end{proof}

To conclude, we have shown a strict separation between the typical communication cost of node recovery using the IP repair procedure and
the graph-agnostic AF protocol when the number of helpers $d$ is only slightly more than $k.$ Note the following simple corollary:

\begin{corollary}\label{cor_IP}
Let $t$ be the threshold given in Prop.~\ref{lemma:threshold}. For $\chi(n) = O(1)$, 
$$
\PP(\beta_{\text{\rm IP}} \le d+o(n)) \rightarrow 1.
$$
\end{corollary}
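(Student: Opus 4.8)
The plan is to obtain this statement as an immediate specialization of Theorem~\ref{thm:IP}. Recall that in that theorem $s$ is defined as the largest integer with $s\le t-1$ for which $\chi(n)n^{s-1}p^s\to 0$, and the conclusion is $\PP(\beta_{\text{\rm IP}}\le(t-s)d+o(n))\to1$. So the first step is to show that, under the hypothesis $\chi(n)=O(1)$, this maximal value of $s$ equals $t-1$; once that is in hand, substituting $s=t-1$ yields $(t-s)d=d$ and hence the claim.

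To carry out the first step, I would test the defining condition $\chi(n)n^{s-1}p^s\to0$ at $s=t-1$, i.e., examine $\chi(n)n^{t-2}p^{t-1}$. Since $\chi(n)=O(1)$, it is enough to show $n^{t-2}p^{t-1}\to0$; but this is precisely a rewriting of the first part of \eqref{eq:conditions}, which defines the threshold depth $t$: the condition $(np)^{t-1}=o(n)$ says $n^{t-1}p^{t-1}=o(n)$, that is, $n^{t-2}p^{t-1}=o(1)$. Therefore $\chi(n)n^{t-2}p^{t-1}=O(1)\cdot o(1)\to0$, so $s=t-1$ satisfies the defining condition, and being the maximal admissible value it is exactly the $s$ picked out by Theorem~\ref{thm:IP}. (One also notes $t-1\ge1$, since the case $t=1$ is excluded throughout the paper.)

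It then remains only to confirm that, with $s=t-1$, the error term hidden in the ``$o(n)$'' of Theorem~\ref{thm:IP} is genuinely $o(n)$. Tracing the proof of that theorem with $s=t-1$, one has $\chi(n)+1-t+s=\chi(n)$, so the bound on $\beta_{\text{\rm IP}}$ reads $(t-s)d+\sum_{i=1}^{t-1}(1+\epsilon)(np)^i\chi(n)$; since $t$ is a fixed integer, $np\to\infty$, and $(np)^{t-1}=o(n)$, this sum is $O((np)^{t-1})\cdot O(1)=o(n)$, as required, and the tail probability bound $1-s/\log^2 n\to1$ from the proof of Theorem~\ref{thm:IP} carries over verbatim. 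I do not expect a genuine obstacle here; the only mildly delicate point is the bookkeeping that identifies $s$ with $t-1$ and verifies that the absorbed error remains $o(n)$ after this substitution, and both follow directly by quoting \eqref{eq:conditions} and the statement and proof of Theorem~\ref{thm:IP}, with no new estimate needed.
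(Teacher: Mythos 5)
Your proposal is correct and matches the paper's own (one-line) proof: both arguments observe that the first condition in \eqref{eq:conditions}, namely $(np)^{t-1}=o(n)$, i.e.\ $n^{t-2}p^{t-1}=o(1)$, combined with $\chi(n)=O(1)$ shows that $s=t-1$ satisfies the hypothesis of Theorem~\ref{thm:IP}, whence $(t-s)d=d$. Your additional bookkeeping confirming that the absorbed error term remains $o(n)$ is a harmless elaboration of what the paper leaves implicit.
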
 
\begin{proof}
	By \eqref{eq:conditions}, for $\chi(n) = O(1)$, the condition  $
	\chi(n)n^{s-1}p^s \rightarrow 0
	$ is satisfied for $s=t-1$. 
\end{proof}
Corollary \ref{cor_IP} supports the following intuition. Since $\chi(n)$ is a constant, nodes in all layers but the last can do a very high amount of compression and hence the contribution of those layers to the total communication complexity becomes insignificant; the complexity is primarily determined by the number of helper nodes in the last layer which is approximately $d$. This implies that even in random graphs where we do not have direct connectivity among all the helper nodes and the failed node, it is possible to bring down the communication complexity to the same order as that of the case of direct connectivity using IP. 

The above theorem and corollary suggest that the communication complexity is primarily controlled by the two parameters $p(n)$ and $\chi(n)$. One 
can ask the question, for what values of these parameters, does the complexity become significantly higher than that of the complexity of repair 
under full connectivity, i.e., $d$. 
In other words, we wish to study the behavior of $\beta^\ast-d$ where $\beta^\ast$ is the minimum complexity over all possible repair schemes\footnote{Our arguments rely on the information-theoretic lower bounds, so they indeed apply to all possible repair schemes.}. In this regard, Corollary \ref{cor_IP} says that for $\chi(n) = O(1)$, we have $\beta^\ast-d = o(n)$ with high probability. We will now show that for sparse graphs with high probability the repair becomes significantly more complex than sending $d$ symbols. 
The following theorem quantifies this claim. Its proof relies on Lemma \ref{lemma:bound} together with another lemma from \cite{Chung01}.
 \begin{lemma}[\!\!\cite{Chung01}, Lemma 2]\label{lemma:Ni_alt}
	Suppose that $p > \frac{c\log n}{n}$ for a constant $c \le 2$. Then with probability at least $1-o(\frac{1}{n})$, we have 
	for all $1\le i\le n$
	\begin{gather}
		|\Gamma_i(x)| \le \frac{9}{c}(np)^i \\
		|N_i(x)| \le \frac{10}{c}(np)^i . 
	\end{gather}	
\end{lemma}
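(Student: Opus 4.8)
The plan is to explore the ball around $x$ by breadth-first search and to bound each layer by Chernoff inequalities, the one real subtlety being to keep the implied multiplicative constant from growing with $i$. First I would reveal the layers $\Gamma_1(x),\Gamma_2(x),\dots$ one at a time and let $\cF_i$ denote the $\sigma$-algebra generated by the edges exposed in discovering $N_i(x)$. A vertex outside $N_i(x)$ lies in $\Gamma_{i+1}(x)$ exactly when it has at least one edge into $\Gamma_i(x)$, and these events are independent over such vertices, each of probability $1-(1-p)^{|\Gamma_i(x)|}\le p\,|\Gamma_i(x)|$; hence, conditionally on $\cF_i$, $|\Gamma_{i+1}(x)|$ is stochastically dominated by $\mathrm{Bin}(n,\,p\,|\Gamma_i(x)|)$ and $\mathbb E[\,|\Gamma_{i+1}(x)|\mid\cF_i\,]\le np\,|\Gamma_i(x)|$. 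Since $c\le 2$, the inequality $\tfrac9c(np)^i\ge n$ holds for every $i\ge i_0:=\lceil\log n/\log(np)\rceil$, so (as $|\Gamma_i(x)|\le n$) the claimed bound is automatic there; it remains to control $1\le i<i_0$, and $i_0=O(\log n/\log\log n)$.

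I would prove $|\Gamma_i(x)|\le\alpha_i(np)^i$ for $1\le i<i_0$ by induction along the filtration $\cF_i$, with a slowly varying sequence $\alpha_i$ chosen as follows. For the base case, $|\Gamma_1(x)|=\deg(x)\sim\mathrm{Bin}(n-1,p)$ has mean at most $np$, and with $\alpha_1:=\tfrac8c$ (so that $4\le\alpha_1<\tfrac9c$) a Chernoff bound gives $\PP(|\Gamma_1(x)|\ge\alpha_1 np)\le(e/\alpha_1)^{\alpha_1 np}\le(ec/8)^{8\log n}=o(n^{-2})$. For the step, set $\delta_i:=\sqrt{6\log n}\,(np)^{-(i+1)/2}$ (which is $\le1$ for $n$ large) and $\alpha_{i+1}:=(1+\delta_i)\alpha_i$; on the event $\{|\Gamma_i(x)|\le\alpha_i(np)^i\}$ the conditional mean of $|\Gamma_{i+1}(x)|$ is at most $\alpha_i(np)^{i+1}$, so the target $\alpha_{i+1}(np)^{i+1}$ exceeds the conditional mean by a factor at least $1+\delta_i$, and since $\delta_i^2(np)^{i+1}=6\log n$ a Chernoff bound gives $\PP(|\Gamma_{i+1}(x)|\ge\alpha_{i+1}(np)^{i+1}\mid\cF_i)=o(n^{-2})$. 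The key point is that the tolerances decay geometrically, $\sum_{i\ge1}\delta_i=\sqrt{6\log n}\sum_{i\ge1}(np)^{-(i+1)/2}=O(\sqrt{\log n}/np)=o(1)$, so $\alpha_i\le\alpha_1e^{\sum_j\delta_j}=\alpha_1(1+o(1))<\tfrac98\alpha_1=\tfrac9c$ once $n$ is large. A union bound over the $O(\log n)$ relevant layers then shows that all the good events hold except with probability $o(1/n)$, and on that event $|\Gamma_i(x)|\le\tfrac9c(np)^i$ for all $1\le i\le n$; finally $|N_i(x)|=\sum_{j=1}^i|\Gamma_j(x)|\le\tfrac9c(np)^i\cdot\tfrac{np}{np-1}\le\tfrac{10}c(np)^i$ since $np\to\infty$.

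The single delicate point --- where a careless argument fails --- is exactly this constant bookkeeping. A naive layer-by-layer Chernoff bound allowing a fixed-factor slack at every step would compound to a factor $n^{o(1)}$ over the up to $\Theta(\log n/\log\log n)$ layers, overshooting the claimed constant $\tfrac9c$. What saves the bound is that, because the running threshold $\alpha_i(np)^{i+1}$ grows geometrically while only a Chernoff exponent $\gtrsim\log n$ is needed, the per-step tolerances $\delta_i$ decay geometrically and sum to $o(1)$; the constant is thus inflated only by $1+o(1)$ beyond the value $\alpha_1$ chosen at the first layer, and $\tfrac9c\ge\tfrac92$ leaves room. One must also phrase the good event through the running threshold $\alpha_i(np)^i$ and not the final threshold $\tfrac9c(np)^i$, so that the conditional mean at each step provably stays below the next layer's threshold. (An alternative is to couple the exploration to a Galton--Watson process with $\mathrm{Bin}(n,p)$ offspring and invoke the concentration of its martingale limit, which equals $1+O((np)^{-1/2})$ as $np\to\infty$.)
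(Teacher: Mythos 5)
The paper does not prove this statement --- it is quoted directly from Chung and Lu (Lemma~2 of \cite{Chung01}) --- so there is no internal proof to compare against; your argument is a correct self-contained reconstruction of the standard one. The stochastic domination of $|\Gamma_{i+1}(x)|$ conditionally on the exploration filtration by a binomial with mean at most $np\,|\Gamma_i(x)|$ is right; the base-case exponent $\alpha_1 np>8\log n$ with base $ec/8\le e/4<1$ and the inductive exponent $\delta_i^2\mu/3\ge 2\alpha_1\log n\ge 8\log n$ both give per-layer failure probability $o(n^{-2})$, which a union bound over the $O(\log n)$ nontrivial layers turns into $o(1/n)$; and the decisive point --- that the tolerances $\delta_i$ decay geometrically and sum to $O(\sqrt{\log n}/np)=o(1)$ because $np>c\log n$, so the constant inflates only from $8/c$ to $(1+o(1))\cdot 8/c<9/c$ --- is exactly the bookkeeping that makes the stated constant achievable. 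The one point you gloss over is that the conditional Chernoff step implicitly assumes the running threshold $\alpha_{i+1}(np)^{i+1}$ (equivalently the dominating binomial's mean parameter) does not exceed $n$; whenever it does, the claim for that layer is trivial since $|\Gamma_{i+1}(x)|\le n$, so nothing is lost. The final passage to $|N_i(x)|$ via the geometric series and $np\to\infty$ is also correct.
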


\begin{theorem}\label{thm:sparse}
For $p(n) = o(\frac{\chi(n)}{n})$, we have $\PP(\beta^\ast -d = \Theta(n)) \rightarrow 1$.
\end{theorem}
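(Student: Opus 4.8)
The plan is to show that when $p(n) = o(\chi(n)/n)$, the repair bandwidth is forced to be $\Theta(n)$ above $d$ with high probability. The lower bound $\beta^\ast - d = \Omega(n)$ is the substantive direction; the matching upper bound $\beta^\ast - d = O(n)$ follows already from the AF bandwidth estimate of Prop.~\ref{lemma:bAF} (indeed $\beta_{\text{\rm AF}} \le td = O(n)$ deterministically once the graph has diameter $t = O(1)$, and $d \le n$). So I would concentrate on proving $\PP(\beta^\ast \ge d + cn) \to 1$ for some constant $c>0$.

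First I would invoke Lemma~\ref{lemma:bound} (or equivalently Proposition~\ref{prop:layer}) to pass from a repair scheme to a layer-by-layer information-flow lower bound: for any scheme, the total communication is at least $\sum_{j=1}^{t} H(R_j^f)$ where $R_j^f$ is the flow out of $\cup_{i\ge j}\Gamma_i(v_f)$, and each term is at least $\min\{l,\, |\cup_{i\ge j}\Gamma_i(v_f)|\, l/(d-k+1)\}$. Since $l = d-k+1 = \chi(n)+1$, the first layer alone already contributes $|\Gamma_1(v_f)| \cdot \min\{1, \cdot\}$-type terms, but the point is to extract a $\Theta(n)$ contribution. The key size estimate is that when $p = o(\chi(n)/n)$, i.e. $np = o(\chi(n))$, the early layers are \emph{small} relative to $\chi(n)$: using Lemma~\ref{lemma:Ni_alt}, with probability $1 - o(1/n)$ we have $|N_i(v_f)| \le \frac{10}{c}(np)^i$ for all $i$, and in particular there is some intermediate radius $j^\ast$ at which the ball $N_{j^\ast-1}(v_f)$ still has size $o(n)$ but $|\cup_{i \ge j^\ast}\Gamma_i(v_f)| = n - o(n) \gg d-k+1$, so that $H(R_{j^\ast}^f) \ge l = \chi(n)+1$ by part (1) of Lemma~\ref{lemma:bound}; more usefully, many layers simultaneously have $|\cup_{i\ge j}\Gamma_i| \ge d-k+1$ and each contributes a full $l$. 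Summing these $l$'s over the $\Theta(t)$ layers — but $t = O(1)$, so that only gives $O(\chi(n))$, not $\Theta(n)$ — so the real contribution must come from the layers where $|\cup_{i\ge j}\Gamma_i(v_f)|/(d-k+1)$ is itself $\Theta(n/\chi(n))$ times $l = \Theta(\chi(n))$, i.e. $\Theta(n)$. Concretely, for the innermost relevant layer (the one adjacent to $v_f$, $j=1$), $|\cup_{i\ge 1}\Gamma_i(v_f)| = n-1$, so $H(R_1^f) \ge \min\{l, (n-1)l/(d-k+1)\} = l$; that's only $\Theta(\chi(n))$. The $\Theta(n)$ must instead come from counting: there are roughly $t - O(1)$ layers near the outside, but more to the point I would track the quantity $\sum_j H(R_j^f) \ge \sum_j l = tl$ against the AF value and show the gap from $d$ is $\Theta(n)$. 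Let me restate: the bandwidth is at least $\beta_{\text{AF}}$'s lower bound from Prop.~\ref{lemma:bAF}, $\beta^\ast \ge \beta_{\text{AF}} \ge td - o(n)$; and $td - o(n) - d = (t-1)d - o(n)$. If $p = o(\chi(n)/n)$ forces $t \ge 3$ (which it does, since $t=2$ requires $p^2 n \to \infty$, i.e. $p \gg n^{-1/2}$, incompatible with $p = o(\chi(n)/n)$ when $\chi(n) = o(n)$ in the relevant sparse regime — one checks $\chi(n)/n \cdot \sqrt n = \chi(n)/\sqrt n$, which can still be large, so this needs care), then $(t-1)d \ge d = \delta n = \Theta(n)$, giving the claim.

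So the clean route is: (1) quote Prop.~\ref{lemma:bAF}, which applies verbatim since we are in the threshold regime, to get $\beta^\ast \ge \beta_{\text{AF}} \ge td - o(n)$ w.h.p.; (2) show that $p(n) = o(\chi(n)/n)$ together with $\chi(n) = o(n)$ and $d = \Theta(n)$ forces the repair threshold $t$ to satisfy $t \ge 2$, hence $t \ge 2$ and $(t-1)d \ge d = \Theta(n)$ — actually one needs $t\ge 2$ strictly, which always holds since $t=1$ is excluded by assumption and also $p=o(\chi(n)/n)=o(1)$ makes the diameter exceed $1$; (3) for the upper bound, note $\beta^\ast \le \beta_{\text{AF}} \le td \le t n = O(n)$ deterministically once $\operatorname{diam} = t = O(1)$, and $t$ is indeed $O(1)$ under the Remark after Prop.~\ref{lemma:threshold} provided $p$ is polynomially bounded below; combine to conclude $\beta^\ast - d \ge (t-1)d - o(n) \ge d - o(n) = \Theta(n)$ and $\beta^\ast - d \le td - d = (t-1)d = O(n)$, hence $\beta^\ast - d = \Theta(n)$ w.h.p.

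The main obstacle I anticipate is making step (2) airtight: I must verify that in the regime $p = o(\chi(n)/n)$ the threshold depth is genuinely $t \ge 2$ and, for the quantitative $\Theta(n)$ lower bound, that the surviving-layer ball $\cup_{i\ge 1}\Gamma_i(v_f)$ at the relevant cut has size comparable to $n$ — which it does trivially for $j=1$ — but then argue that the \emph{sum over layers} $\sum_j H(R_j^f)$ actually reaches $\Theta(n)$ and not merely $\Theta(\chi(n))$. The resolution is that Prop.~\ref{lemma:bAF} already does this work for us: it shows most of the $d = \Theta(n)$ helpers sit in the outermost layer at distance $t \ge 2$, so their information must traverse $t$ edges, paying $\Theta(td) = \Theta(n)$ in total, and since the inner layers are of size $o(n) \ll d$ they cannot absorb or compress away this cost below $(t-1)d - o(n)$. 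So the theorem reduces cleanly to $\beta^\ast \ge \beta_{\text{AF}}$, which is immediate since AF is one valid scheme and the lower bound of Prop.~\ref{lemma:bAF} is information-theoretic hence applies to $\beta^\ast$ — wait, that direction is backwards: $\beta^\ast \le \beta_{\text{AF}}$ since $\beta^\ast$ is the minimum. The correct statement is that the lower bound $td - o(n)$ in Prop.~\ref{lemma:bAF} is proved via Lemma~\ref{lemma:Ni} bounding layer sizes, and \emph{the same layer-size bounds plugged into Lemma~\ref{lemma:bound}} give $\beta^\ast \ge td - o(n)$ directly. Thus the real content is re-deriving Prop.~\ref{lemma:bAF}'s bound for arbitrary schemes using Prop.~\ref{prop:layer} rather than only for AF, which is routine given the cut-set structure.
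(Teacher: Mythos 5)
There is a genuine gap, and it sits exactly where your proposal finally settles: the claim that the bound $td-o(n)$ of Prop.~\ref{lemma:bAF} can be ``re-derived for arbitrary schemes'' from Prop.~\ref{prop:layer}. It cannot. Prop.~\ref{lemma:bAF} is a statement about the cost of the AF protocol specifically --- the formula \eqref{eq:bAF} charges each helper's symbol once per edge of its relay path --- and it is an upper bound's lower estimate, not an information-theoretic lower bound on $\beta^\ast$. Indeed Theorem~\ref{thm:IP} exhibits schemes with $\beta_{\text{\rm IP}}\le (t-s)d+o(n)$, which is strictly below $td-o(n)$ whenever $s\ge 1$, so no universal bound of the form $\beta^\ast\ge td-o(n)$ can hold. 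Prop.~\ref{prop:layer} does not rescue this: it caps each layer's contribution at $\min\{l,\cdot\}$ with $l=\chi(n)+1=o(n)$, so summing over the $t=O(1)$ layers yields at most $d+O(t\chi(n))=d+o(n)$ beyond the outermost layer --- you noticed this yourself mid-proposal (``that only gives $O(\chi(n))$, not $\Theta(n)$'') but then abandoned the observation. A further symptom that the route is wrong: your final argument uses the hypothesis $p=o(\chi(n)/n)$ only to argue $t\ge 2$, which holds far more generally; if the argument were sound it would contradict Corollary~\ref{cor_IP}, which gives $\beta^\ast-d=o(n)$ for $\chi(n)=O(1)$ even though $t\ge2$ there.

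The missing idea is that sparsity must be used to \emph{forbid compression at the penultimate layer}. Since $np=o(\chi(n))$, Lemma~\ref{lemma:Ni_alt} plus a union bound over the $o(n)$ vertices of $\Gamma_{t-1}(v_f)$ shows that w.h.p.\ every $v\in\Gamma_{t-1}(v_f)$ has at most $\epsilon\chi(n)\le d-k$ immediate descendants; part (2) of Lemma~\ref{lemma:bound} then forces the outflow from each $D(v)\cup\{v\}$ to be at least $|D(v)|+1$, i.e.\ no aggregation gain is possible there. Adding these flows to the $d-|N_{t-1}(v_f)|$ transmissions that the outermost layer must make in any scheme gives
$\beta^\ast\ge \bigl(d-|N_{t-1}(v_f)|\bigr)+\sum_{v\in\Gamma_{t-1}(v_f)}\bigl(|D(v)|+1\bigr)\ge 2\bigl(d-|N_{t-1}(v_f)|\bigr)=2d-o(n)$,
using $(np)^{t-1}=o(n)$, whence $\beta^\ast-d\ge d-o(n)=\Theta(n)$. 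The correct multiplicative constant is $2$, not $t$; the trivial $O(n)$ upper bound via AF you give is fine.
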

\begin{proof}
Given $p(n)$, let $t$ be the threshold for repair as given in Prop.~\ref{lemma:threshold}. Clearly, 
any helper node $v\in N_{t}(v_f)$ needs to transmit at least one unit of information, so $\beta^\ast \ge d-|N_{t-1}(v_f)|$. Let $F_n := \{\Gamma_{t-1}(v_f) \le \frac{9}{c}(np)^{t-1}\}$. Now consider a node $v\in N_{t-1}(v_f)$. 
From Lemma \ref{lemma:Ni_alt}, the immediate neighborhood of this node satisfies
   $$
   \PP\Big(|\Gamma_{1}(v)| \le \frac{9}{c}np\Big) \ge 1-o\Big(\frac{1}{n}\Big)
   $$
for some constant $c \le 2$. Let $D(v)$ be the {\em immediate} descendants of node $v$ in the repair tree. For every $\delta>0$, 
there exists an $n_1$ such that $|D(v)| \le |\Gamma_{1}(v)| \le \frac{9}{c}np$ with probability at least $1-\frac{\delta}{n}$ for every $n \ge n_1$. 
Further, since $np = o(\chi(n)),$ for every $\epsilon > 0$, there exists an $n_2$ such that $np \le \epsilon\chi(n)$ for every $n \ge n_2$. 
Combining these two statements, we claim that the event $E_{v,\epsilon,n} :=\{|D(v)| \le \epsilon\chi(n)\}$ satisfies 
   $$
   \PP(E^c_{v,\epsilon,n})\le \delta/n \text{ for all }\epsilon,\delta >0, n\ge \max(n_1,n_2).
   $$
 Since this is true for all $v\in \Gamma_{t-1}(v_f),$ we have
    \begin{align*}
     \PP( \cup_{v \in \Gamma_{t-1}(v_f)} E_{v,\epsilon,n}^c) &=  \PP( \cup_{v \in \Gamma_{t-1}(v_f)} E_{v,\epsilon,n}^c| F_n)\PP(F_n)
     \\& \hspace*{.2in}+ 
     \PP( \cup_{v\in \Gamma_{t-1}(v_f)} E_{v,\epsilon,n}^c| F_n^c)\PP(F_n^c)\\
                  &\le \frac{9\delta}{c}\frac{(np)^{t-1}}{n}+ o\Big(\frac{1}{n}\Big) \rightarrow 0,
    \end{align*}
where the last step follows because by the definition of the threshold $t$, $(np)^{t-1} = o(n)$. This implies that $\PP(\cap_{v \in \Gamma_{t-1}(v_f)} E_{v,\epsilon,n}) \rightarrow 1$ for all $\epsilon>0$. 
Now by Lemma \ref{lemma:bound}, for $|D(v)|\le \chi(n)$ the outflow of communication from the set $D(v) \cup \{v\}$ has to be at least $|D(v)|+1$ and by the above analysis this is true for all $v \in \Gamma_{t-1}(v_f)$ with high probability. 
This implies that with high probability
   \begin{align*}
   \beta^\ast &\ge  d-|N_{t-1}(v_f)| + \sum_{v \in \Gamma_{t-1}(v_f)}(|D(v)|+1) \\
   &\ge  d-|N_{t-1}(v_f)| + \Gamma_{t}(v_f) = 2(d-|N_{t-1}(v_f)|).
   \end{align*}
Finally, noting that $\PP(|N_{t-1}(v_f)| = o(n)) \rightarrow 1$ gives the desired claim. 
\end{proof}

This theorem gives a sufficient condition for the separation of complexity of repair on a complete graph and a sparse random graph.

%
%

\subsection{Random Regular Graphs}
In this section we briefly address node repair on random regular graphs. We single out this ensemble from a multitude of other options
because it is conceivable that the architecture of the storage system places the same number of servers in close proximity to any single server,
and this is modeled by a regular graph. Let $\cG_{n,r}$ be the set of all $r$-regular $n$-vertex graphs with a uniform distribution on it. We denote 
a random element from this ensemble by $\bG_{n,r}$.  Assume again that the data is encoded with an $(n,k,d,l)$ MSR code, and the coordinates
$C_1,\dots,C_n$ of the codeword are placed on the vertices  $v_1,\dots,v_n.$ 

We will derive conditions on the parameters $k=k(n),d=d(n)$ and $r=r(n)$ such that as $n \rightarrow \infty,$ 
with high probability the graph-based repair process is advantageous over the AF strategy. We again assume that $d=\Theta(n).$
Denote by $\cG_{n,m}$ the ensemble of graphs with $n$ vertices and $m$ edges and let $\bG_{n,m}$ be a random graph sampled from it.

For the purposes of node repair we need the graph to be connected. In \cite{Bollobas80}, Bollob{\'a}s showed that $\bG_{n,r}$ is $r$-connected with high probability.

Recall that a property of graphs is called {\em increasing} if it is inherited from a subgraph to any graph that contains it. The following equivalence between properties of $\bG_{n,m}$ and $\bG_{n,r}$ will be used below.

\begin{lemma}\label{lemma:equiv}(\cite{FK2016}, Corollary 10.11)
Let $\cL$ be an increasing property of graphs such that $\bG_{n,m}$ satisfies $\cL$ with high probability for some $m=m(n),$ where $n\log n \ll m \ll n^2$. Then $\bG_{n,r}$ satisfies $\cL$ with high probability for $r=r(n) \sim \frac{2m}{n}$.
\end{lemma}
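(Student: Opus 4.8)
The plan is to reduce the statement to a one-sided coupling between the binomial random graph and the random regular graph, after which the increasing property passes through by monotonicity.

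First I would pass from $\bG_{n,m}$ to $\bG_{n,p}$. By the standard correspondence between the $\cG_{n,m}$ and $\cG_{n,p}$ models for monotone properties, the hypothesis $\bG_{n,m}\in\cL$ w.h.p.\ implies $\bG_{n,p}\in\cL$ w.h.p.\ for any $p=p(n)$ with $\mathrm{Bin}\big(\binom n2,p\big)\ge m$ w.h.p.: one couples $\bG_{n,p}$ to contain a copy of $\bG_{n,m}$ and uses that $\cL$ is increasing. Since $m\gg n\log n$, it suffices to take $p=(1+\eta_n)\,m/\binom n2$ for any $\eta_n\to0$ with $\eta_n^2m\to\infty$; then $pn=(1+o(1))\cdot 2m/n$.

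Next I would fix $r=r(n)$ with $r\sim 2m/n$, chosen slightly above $pn$, say $r=(1+2\eta_n)\,2m/n$, and build a coupling $(\Gamma_1,\Gamma_2)$ with $\Gamma_1\sim\bG_{n,p}$, $\Gamma_2\sim\bG_{n,r}$ and $E(\Gamma_1)\subseteq E(\Gamma_2)$ with probability $1-o(1)$. Given such a coupling, $\Gamma_2\in\cL$ whenever $\Gamma_1\in\cL$, so the previous paragraph yields $\bG_{n,r}\in\cL$ w.h.p. To construct the coupling I would realize $\bG_{n,r}$ via the configuration (pairing) model --- $rn$ half-edges matched by a uniform perfect matching, conditioned on simplicity --- and use that a typical outcome $F$ of $\bG_{n,p}$ has at most $(1-\eta_n)rn/2$ edges and maximum degree at most $r$; the degree bound is exactly where $r\gg\log n$ (equivalently $m\gg n\log n$) enters, via a Chernoff estimate and a union bound over the $n$ vertices. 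One then reveals the half-edges realizing $F$ first and, by a pairing/switching count, checks that the conditional law of the completion of the pairing is again close to uniform, so that $\bG_{n,r}\supseteq F$ holds with the required probability for all typical $F$; alternatively one may quote a Kim--Vu-type sandwich theorem valid for $\log n\ll r\ll n$.

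The crux is this coupling with $p$ as large as $(1-o(1))\,r/n$. With only a constant-factor gap between $pn$ and $r$ the coupling is routine, but then the conclusion would weaken to $r=\Theta(2m/n)$ rather than $r\sim 2m/n$. Closing the gap forces one to control the degree fluctuations of $\bG_{n,p}$ at the extremal density and to run a careful pairing-model (switching) argument, which is the real content here; the remaining ingredients --- the $\cG_{n,m}\leftrightarrow\cG_{n,p}$ transfer, the Chernoff/union-bound degree estimate, and the final monotonicity step --- are standard.
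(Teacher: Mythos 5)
This lemma is not proved in the paper at all: it is imported verbatim as Corollary 10.11 of \cite{FK2016}, so there is no in-paper argument to compare against. Your outline correctly reconstructs the standard proof of that corollary: (i) the monotone-property equivalence between $\cG_{n,m}$ and $\cG_{n,p}$ at $p\approx m/\binom n2$, (ii) a one-sided coupling (lower sandwich) embedding $\bG_{n,p}$ into $\bG_{n,r}$ with $pn=(1-o(1))r$, and (iii) the trivial monotonicity step. You also correctly locate where each hypothesis is used: $m\gg n\log n$ (equivalently $r\gg\log n$) is what makes the maximum degree of $\bG_{n,p}$ concentrate at $(1+o(1))pn$ so that a near-full-density embedding is possible, and $m\ll n^2$ keeps $r=o(n)$, which is the regime in which the known one-sided sandwich theorems (Kim--Vu; Dudek, Frieze, Ruci\'nski and \v{S}ileikis) apply.

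The one caveat is that the genuinely hard step --- the coupling at density $(1-o(1))r/n$ rather than at a constant fraction of $r/n$ --- is not actually carried out in your sketch; you defer it either to ``a careful pairing-model (switching) argument'' or to quoting the sandwich theorem. That deferral is legitimate here, since the cited corollary in \cite{FK2016} itself rests on exactly that embedding theorem, but you should be explicit that you are invoking it as a black box rather than proving it: a from-scratch switching argument at this density is a substantial piece of work (controlling the conditional law of the pairing after revealing the half-edges of $F$ uniformly over all typical $F$), and your two-sentence description of it would not survive as a proof on its own.
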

The following proposition is a counterpart to Prop.~\ref{lemma:threshold} for random regular graphs. Here the definition of the threshold depth of
repair is the same as in Sec.~\ref{sec:threshold}.
\begin{proposition}\label{lemma:threshold_reg}
	Let $d=\delta n, 0<\delta<1$ and let $t$ be a fixed integer. Then $t$ is the threshold depth for repair if
	\begin{gather}\label{eq:conditions_reg}
		r^{t-1}=o(n), \quad
		\frac{r^t}{n}-2\log n \rightarrow \infty.
	\end{gather}
\end{proposition}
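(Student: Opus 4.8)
The plan is to transfer the two conditions in \eqref{eq:conditions_reg} from the regular-graph model $\bG_{n,r}$ to the edge-count model $\bG_{n,m}$ with $m\sim rn/2$, apply the diameter/neighborhood machinery already available for $\bG_{n,m}$ (which in turn is typically equivalent to $\bG_{n,p}$ with $p\sim m/\binom n2\sim r/n$), and then invoke Lemma~\ref{lemma:equiv} to push the resulting high-probability statements back to $\bG_{n,r}$. Concretely, under $p\sim r/n$ the conditions \eqref{eq:conditions_reg} are exactly the translations of conditions \eqref{eq:conditions} in Prop.~\ref{lemma:threshold}: $r^{t-1}=o(n)$ becomes $(np)^{t-1}=o(n)$, and $r^t/n-2\log n\to\infty$ becomes $p^tn^{t-1}-2\log n\to\infty$. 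So the regular-graph statement should follow by mimicking the proof of Prop.~\ref{lemma:threshold} one layer at a time.

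First I would establish the positive direction, that $t$-layer repair is possible, i.e.\ $\PP(|N_t(v_f)|\ge d)\to1$. The clean way is: the property ``$\diam(G)\le t$'' is increasing, it holds w.h.p.\ for $\bG_{n,m}$ with $m\sim rn/2$ by the diameter results behind Lemma~\ref{lemma:diam} (translated through the standard $\bG_{n,m}\leftrightarrow\bG_{n,p}$ coupling, using the second condition in \eqref{eq:conditions_reg}), and hence by Lemma~\ref{lemma:equiv} it holds w.h.p.\ for $\bG_{n,r}$. Then $N_t(v_f)$ is the whole vertex set, which has size $n>d=\delta n$, so $t$-layer repair is possible.

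Next I would show $t$ is minimal, i.e.\ $(t-1)$-layer repair fails: $\PP(|N_{t-1}(v_f)|\ge d)\to0$. Here I want the upper bound $|N_{t-1}(v_f)|\le (1+o(1))\,r^{t-1}$ w.h.p. The natural route is again via Lemma~\ref{lemma:Ni}~\eqref{eq:N<}, which gives $|N_{t-1}(x)|\le(1+2\epsilon)(np)^{t-1}$ w.h.p.\ for $\bG_{n,p}$ with $p\sim r/n$. Since $r^{t-1}=o(n)$ by the first condition in \eqref{eq:conditions_reg}, this forces $|N_{t-1}(v_f)|/n\to0$, so $|N_{t-1}(v_f)|<\delta n=d$ w.h.p. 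The only subtlety is that ``$|N_{t-1}(v_f)|$ small'' is a \emph{decreasing} property, so Lemma~\ref{lemma:equiv} does not apply directly to it; one instead applies the equivalence to the increasing event ``some vertex has $|N_{t-1}|\ge \gamma n$'' complemented, or, more simply, transfers the neighborhood-size tail bound of Lemma~\ref{lemma:Ni_alt}-type directly for the regular model using the configuration-model / switching estimates (a standard first-moment computation: the expected number of vertices reachable in $\le t-1$ steps is $O(r^{t-1})=o(n)$, then Markov).

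\textbf{Main obstacle.} The delicate point is exactly this asymmetry: the upper-bound half of the threshold is a decreasing property, so one cannot black-box Lemma~\ref{lemma:equiv}. I expect the cleanest fix is to prove the neighborhood-growth upper bound $|N_{t-1}(v_f)|=O(r^{t-1})$ w.h.p.\ \emph{directly} in $\bG_{n,r}$ via a short union-bound / branching-process domination argument (a $d$-regular tree of depth $t-1$ dominates the $(t-1)$-neighborhood, giving at most $\sum_{i\le t-1}r(r-1)^{i-1}=O(r^{t-1})$ vertices deterministically, so in fact no randomness is even needed for the upper bound once the graph is $r$-regular). With that observation the minimality direction is immediate, and only the ``$\diam\le t$'' direction genuinely needs Lemma~\ref{lemma:equiv}. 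Everything else is a routine transcription of the proof of Prop.~\ref{lemma:threshold}.
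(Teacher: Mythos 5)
Your proposal is correct and, after resolving your own ``main obstacle,'' lands on exactly the paper's proof: the minimality direction is handled by the deterministic bound $|N_{t-1}(v)|\le\sum_{i=1}^{t-1}r^i=\Theta(r^{t-1})=o(n)\ll d$ valid for any $r$-regular graph, and the possibility direction sets $p=r/n$, invokes Prop.~\ref{lemma:threshold} for $\bG_{n,p}$, passes to $\bG_{n,m}$ with $m\sim nr/2$ by the standard monotone-property equivalence, and then to $\bG_{n,r}$ via Lemma~\ref{lemma:equiv} since $t$-layer repair is an increasing property. Your observation that the decreasing-property half needs no random-graph transfer at all is precisely how the paper sidesteps the issue.
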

\begin{proof}
For finite $t$, $|N_{t-1}(v)| \le \sum_{i=1}^{t-1}r^i = \Theta(r^{t-1}) = o(n) \ll d$ and so $(t-1)$-layer repair is not possible.  

For the other direction, let $r(n)$ satisfy relations \eqref{eq:conditions_reg}. Take $p(n) = \frac{r(n)}{n}$, then by Prop.~\ref{lemma:threshold} $\mathbb{G}_{n,p}$ satisfies $t$-layer repair with high probability. Recall a basic fact that the ensembles
$\cG_{n,p}$ and $\cG_{n,m}$ are equivalent for all monotone properties (i.e., graphs sampled from them either both have the
property w.h.p. or they both do not). Therefore, the graph $\mathbb{G}_{n,m}$ with $m(n) \sim \frac{n^2}{2}p(n) = \frac{nr(n)}{2}$ 
affords $t$-layer repair with high probability. Now, satisfying $t$-layer repair is a monotone increasing graph property and so by Lemma \ref{lemma:equiv} we have that $\mathbb{G}_{n,r}$ affords $t$-layer repair with high probability. 
\end{proof}
With the threshold conditions established, counterparts of Prop.~\ref{lemma:bAF}, \ref{lemma:kappa} and Theorem \ref{thm:IP} can 
be proved for $\bG_{n,r}$ by simply replacing $np$ with $r$ and proceeding along similar arguments (which are in fact simpler
because the neighborhood sizes of a vertex afford uniform bounds). In the following statements, given without 
proofs, $t$ is the repair threshold as defined in \eqref{eq:conditions_reg}.

\begin{proposition}\label{lemma:bAF_reg} The repair bandwidth $\beta_{\text{\rm AF}}$ satisfies
	$$
	\PP(\beta_{\text{\rm AF}} \ge td -o(n)) \rightarrow 1.
	$$ 
\end{proposition}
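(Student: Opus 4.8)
The plan is to mirror the proof of Proposition~\ref{lemma:bAF}, with the simplification that in an $r$-regular graph the sphere sizes obey a \emph{deterministic} bound, so that essentially no concentration argument is needed. First I would pass to the high-probability event on which $\bG_{n,r}$ is connected (by Bollob{\'a}s's $r$-connectivity result \cite{Bollobas80}) and admits $t$-layer repair (by Proposition~\ref{lemma:threshold_reg}); on this event the helper set $D$ may be taken to consist of the $d$ vertices closest to $v_f$, so that $N_{t-1}(v_f)\subseteq D$ and precisely $d-|N_{t-1}(v_f)|$ of the helpers lie in layer $t$. The closed form \eqref{eq:bAF} then specializes to
$$
\beta_{\text{\rm AF}} = td - \sum_{i=1}^{t-1}(t-i)|\Gamma_i(v_f)|,
$$
exactly as in the $\cG_{n,p}$ setting.

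Next I would estimate the correction term using the uniform bound on neighborhood growth: in any $r$-regular graph, $|\Gamma_i(v)|\le r(r-1)^{i-1}\le r^i$ for every vertex $v$ and every $i\ge 1$. Hence
$$
\sum_{i=1}^{t-1}(t-i)|\Gamma_i(v_f)| \le \sum_{i=1}^{t-1}(t-i)r^i = O(r^{t-1}),
$$
and the first relation in \eqref{eq:conditions_reg}, namely $r^{t-1}=o(n)$, makes this sum $o(n)$. Combining with the displayed formula for $\beta_{\text{\rm AF}}$ gives $\beta_{\text{\rm AF}}\ge td-o(n)$ \emph{surely} on the conditioning event, and since that event has probability tending to $1$, the proposition follows.

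I do not expect a genuine obstacle here. Unlike the Erd{\"o}s--R{\'e}nyi case, no appeal to Lemma~\ref{lemma:Ni} or any other concentration inequality is required, since $|\Gamma_i(v_f)|\le r^i$ holds with probability one rather than with high probability. The only point that needs a little care is ensuring the validity of the closed-form expression \eqref{eq:bAF}, i.e., that $t$-layer repair is indeed possible and the set of the $d$ closest helpers is well defined; this is exactly what the hypothesis \eqref{eq:conditions_reg} buys us through Proposition~\ref{lemma:threshold_reg}, after which the bound reduces to the one-line estimate above.
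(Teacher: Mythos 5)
Your argument is correct and is exactly the route the paper intends: the paper states this proposition without proof, remarking only that it follows from Proposition~\ref{lemma:bAF} by replacing $np$ with $r$ and that the argument is simpler because the neighborhood sizes of a vertex in an $r$-regular graph admit uniform deterministic bounds. Your write-up carries out precisely that simplification, using $|\Gamma_i(v_f)|\le r^i$ in place of the concentration bound of Lemma~\ref{lemma:Ni} and invoking $r^{t-1}=o(n)$ from \eqref{eq:conditions_reg} to control the correction term.
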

\begin{proposition}\label{lemma:kappa_reg}
	Let $d=\delta n, 0<\delta<1$ and let $\kappa(n)$ be a function of $n$ such that $\underline{c}\le \kappa(n)/n\le \bar c$ starting with some $n$. Then 
	$$
	\PP(\beta_{\text{\rm AF}} \le \kappa(n))\to
	   \begin{cases} 
	     0&\text{if }t>\bar c/\delta\\
	     1&\text{if }t\le \underbar c/\delta.
	   \end{cases}
	$$
	for $t>\bar c/\delta$  we have $\PP(\beta_{\text{\rm AF}} \le \kappa(n)) \rightarrow 0.$
\end{proposition}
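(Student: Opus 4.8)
The plan is to reuse the proof of Proposition~\ref{lemma:kappa} almost verbatim, replacing the random neighborhood estimates there (Lemma~\ref{lemma:Ni} and the high-probability event $E$) by the \emph{deterministic} bound $|\Gamma_i(v)|\le r(r-1)^{i-1}\le r^i$, valid in every $r$-regular graph, so that $|N_{t-1}(v)|\le\sum_{i=1}^{t-1}r^i=\Theta(r^{t-1})$. The first condition in \eqref{eq:conditions_reg} then gives $|N_{t-1}(v)|=o(n)$ for \emph{every} realization of $\bG_{n,r}$. The only probabilistic ingredient is feasibility of $t$-layer repair, i.e.\ $|N_t(v_f)|\ge d$; by Proposition~\ref{lemma:threshold_reg} this holds with high probability, and on this event $\beta_{\text{\rm AF}}$ is well defined and, specializing \eqref{eq:bAF} to $l=d-k+1$ and rewriting exactly as in the proof of Proposition~\ref{lemma:kappa},
\[
\beta_{\text{\rm AF}}=td-\sum_{i=1}^{t-1}(t-i)\,|\Gamma_i(v_f)| .
\]

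First I would treat the case $t>\bar{c}/\delta$. On the (high-probability) feasibility event,
\[
0\le\sum_{i=1}^{t-1}(t-i)\,|\Gamma_i(v_f)|\le (t-1)\,|N_{t-1}(v_f)|\le (t-1)\sum_{i=1}^{t-1}r^i=o(n),
\]
hence $\beta_{\text{\rm AF}}=td-o(n)=(t\delta-o(1))n$. Since $t\delta>\bar{c}\ge\kappa(n)/n$ for all large $n$, the inequality $\beta_{\text{\rm AF}}\le\kappa(n)$ fails on the feasibility event once $n$ is large; therefore $\PP(\beta_{\text{\rm AF}}\le\kappa(n))$ is bounded above by the probability that $t$-layer repair is infeasible, which tends to $0$.

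Next, for $t\le\underline{c}/\delta$, I would not need anything beyond feasibility: on that event every summand $(t-i)\,|\Gamma_i(v_f)|$ is nonnegative, so $\beta_{\text{\rm AF}}\le td=t\delta n\le\underline{c}\, n\le\kappa(n)$ for all sufficiently large $n$. Thus $\PP(\beta_{\text{\rm AF}}\le\kappa(n))$ is at least the probability that $t$-layer repair is feasible, which tends to $1$ by Proposition~\ref{lemma:threshold_reg}.

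The argument is genuinely easier than its Erd{\H o}s--R{\'e}nyi counterpart, so there is no real obstacle; the only care required is bookkeeping with the two ``for all large $n$'' thresholds --- one from the deterministic $o(n)$ bound on $\sum_{i=1}^{t-1}r^i$ for $t$ fixed under \eqref{eq:conditions_reg}, the other from the hypothesis $\underline{c}\le\kappa(n)/n\le\bar{c}$ holding only eventually --- and choosing $n$ large enough to satisfy both simultaneously. No concentration inequality is needed precisely because neighborhood growth in a regular graph is deterministically bounded.
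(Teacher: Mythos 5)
Your proof is correct and follows exactly the route the paper prescribes: the paper states this proposition without proof, remarking that the $\cG_{n,r}$ counterparts are obtained from the $\cG_{n,p}$ arguments by replacing $np$ with $r$ and exploiting the uniform (deterministic) neighborhood bounds, which is precisely what you do. Your handling of the feasibility event via Proposition~\ref{lemma:threshold_reg} is, if anything, slightly more careful than the paper's own proof of the $\cG_{n,p}$ version.
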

\begin{theorem}\label{thm:IP_reg}
	 Let $d-k = \chi(n)$ be a function of $n$ such that 
	$
	\frac{\chi(n)r(n)^s}{n} \rightarrow 0
	$
	where $s \le t-1$ is the largest integer for which this condition holds. Then 
	$
	\PP(\beta_{\text{\rm IP}} \le (t-s)d+o(n)) \rightarrow 1.
	$
\end{theorem}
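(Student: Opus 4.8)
The plan is to run the proof of Theorem~\ref{thm:IP} essentially verbatim, with the Erd\H{o}s--R\'enyi neighbourhood estimate $(np)^i$ replaced by the \emph{deterministic} bound $|\Gamma_i(v)|\le r(r-1)^{i-1}\le r^i$ that holds at every vertex of every $r$-regular graph. This substitution is exactly the simplification promised after Proposition~\ref{lemma:threshold_reg}: it removes the only probabilistic ingredient of the argument other than the threshold statement itself, and in particular eliminates the union bound over layers that was needed in the random graph case.

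First I would fix the repair tree. Under the hypotheses \eqref{eq:conditions_reg}, Proposition~\ref{lemma:threshold_reg} says that $t$ is the threshold depth for repair, so with probability tending to $1$ one can take a rooted spanning tree $T_f$ of $\{v_f\}\cup D$ of depth at most $t$; moreover $|N_{t-1}(v_f)|\le\sum_{i=1}^{t-1}r^i=O(r^{t-1})=o(n)$ \emph{deterministically} by \eqref{eq:conditions_reg}, so for all large $n$ the number $d-|N_{t-1}(v_f)|$ of layer-$t$ helpers is positive and of order $\Theta(n)$. This is the only place where ``with high probability'' enters.

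Next, using the IP strategy that relays along the edges between layers $t,t-1,\dots,s+1$ and transmits the length-$l$ linear combination \eqref{eq:lc} (or \eqref{eq:DM}) on every edge inside $N_s(v_f)$, the same bookkeeping as in Theorem~\ref{thm:IP} gives
\begin{align*}
\beta_{\text{\rm IP}}
&=(t-s)\bigl(d-|N_{t-1}(v_f)|\bigr)+\sum_{i=1}^{t-s-1}(t-s-i)|\Gamma_{t-i}(v_f)|\\
&\quad +(d-k+1)\sum_{i=1}^{s}|\Gamma_i(v_f)|\\
&=(t-s)d+\sum_{i=1}^{s}|\Gamma_i(v_f)|\bigl(\chi(n)+1-t+s\bigr)-\sum_{i=s+1}^{t-1}|\Gamma_i(v_f)|(t-i),
\end{align*}
and since the last sum is nonnegative, $\beta_{\text{\rm IP}}\le(t-s)d+\sum_{i=1}^{s}|\Gamma_i(v_f)|\bigl(\chi(n)+1-t+s\bigr)$. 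Then I would bound the correction term: $\sum_{i=1}^{s}|\Gamma_i(v_f)|\le\sum_{i=1}^{s}r^i=O(r^s)$ \emph{deterministically}, and because $s\le t-1$ condition \eqref{eq:conditions_reg} gives $r^s=o(n)$, while the hypothesis $\chi(n)r^s/n\to0$ gives $\chi(n)r^s=o(n)$; since $t$ and $s$ are constants, $\bigl|\sum_{i=1}^{s}|\Gamma_i(v_f)|(\chi(n)+1-t+s)\bigr|=O(\chi(n)r^s)+O(r^s)=o(n)$. Hence, on the event supplied by Proposition~\ref{lemma:threshold_reg}, $\beta_{\text{\rm IP}}\le(t-s)d+o(n)$, which proves the claim.

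There is no genuinely hard step here; the only points needing care are (i) that the depth-$\le t$ property of $T_f$ is precisely what Proposition~\ref{lemma:threshold_reg} (together with the $r$-connectivity of $\bG_{n,r}$) delivers, so it is the sole source of the ``$\to1$'', and (ii) that the sign of $\chi(n)+1-t+s$ is irrelevant --- if it is negative the correction term only decreases $\beta_{\text{\rm IP}}$, and if it is positive it is still $o(n)$ by the bound above. In contrast to Theorem~\ref{thm:IP}, no appeal to a Chung-type neighbourhood lemma and no union bound over the layers $1,\dots,s$ are required, because the neighbourhood sizes of $\bG_{n,r}$ obey uniform, non-random bounds.
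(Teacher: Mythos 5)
Your proof is correct and is exactly the argument the paper intends: the paper gives Theorem~\ref{thm:IP_reg} without proof, saying only that it follows from the proof of Theorem~\ref{thm:IP} by replacing $np$ with $r$, the argument being simpler because the neighbourhood sizes of a regular graph admit uniform deterministic bounds. Your write-up fills in precisely those details, correctly identifying Proposition~\ref{lemma:threshold_reg} as the sole probabilistic ingredient and disposing of the sign of $\chi(n)+1-t+s$ along the way.
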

Similarly to the case of $\cG_{n,p},$ this shows a strict separation between the typical communication cost of node recovery using the IP repair procedure and the graph-agnostic AF protocol under a certain assumption on $\chi(n).$ A theorem that parallels Theorem \ref{thm:sparse} can be also easily established (note that the bounds of the form given in Lemma~\ref{lemma:Ni_alt} for regular 
graphs come for free).

\section{Conclusion}
In this paper we posed and advanced the problem of erasure correction (node repair) when the elements of encoded information are placed on the nodes of a graph, adopting the total amount of communication for repair as the figure of merit in the analysis. The main
difference of this problem from the standard setting of regenerating codes stems from the fact that most helpers are not directly connected to the failed node, and the information transmitted by them can be processed by the intermediate nodes or combined with the contents
of these nodes. We showed that the intermediate processing scheme can be implemented by linear MSR codes, attaining the general lower bounds on complexity derived in the paper. These results were also extended to the case of multiple failed nodes. We also established a framework for the analysis of repair schemes when the helpers communicate among themselves before contributing data toward the repair task, and gave simple examples when the arising complexity bounds are attained with equality. Finally, we studied the repair problem when the underlying graph is random, establishing bounds on the edge probability under which the intermediate processing scheme 
provides complexity savings compared to simple relaying.

Among the problems that so far have resisted analysis is the case when some of the helper nodes provide incorrect information. Since
the erroneous information is propagated along the edges and potentially combined with the contents of the intermediate nodes, the repair procedures proposed
in this paper do not support node recovery. This is unlike the case of complete connectivity, where optimal 
repair is possible even in the presence of errors \cite{Ye17}. Another version of repair with errors assumes that the edges form noisy channels, so the information propagated along them is received as realizations of random variables. A simple way to address this
problem suggests to add redundancy to the data transmitted over the edges, and it combines channel coding and the repair task.
Optimizing the tradeoffs that arise as a result presents an open problem.
Yet another challenge is to construct codes and nontrivial repair protocols when the helper nodes communicate among themselves in the process of repair, extending the approach of Sec.~\ref{sec:intra}. 

\appendix
We present an encoding/repair scheme for the example in Sec.~\ref{sec:coop}, Fig.3 that enables recovery of the contents of the 
node $v_f$ which assumes that the nodes in $\Gamma_1(v_f)$ exchange information before passing the repair data to the node $v_f.$ The 
construction shares some features of cooperative repair, and it relies on a code family constructed earlier for the case of the complete 
graph $K_n$ \cite[Sec.IV]{Ye17}. Namely, suppose that the information is encoded with an $[n,k,d=k+1,l=2^n]$ MSR array code $\cC$ 
and the codeword coordinates are placed on the vertices of a graph $G(V,E)$ with $|V|=n.$ Suppose further that the repair graph of the failed node $v_f$ is as shown in the figure. 
In accordance with \cite{Ye17}, Construction 2, we will assume that each helper node provides $\beta = 2^{n-1}$ symbols for the repair of $v_f.$

Let $C=(c_1,c_2,\dots,c_n)\in \cC$ be a codeword with $c_i=(c_{i,0},c_{i,1},\dots,c_{i,l-1}) \in \mathbb{\ff}^l$. The code is defined by the
following parity-check equations:
    \begin{multline*}
	\sum_{i=1}^n\lambda_{i,a_i}^tc_{i,a} = 0 \;\;\;\;\text{for all}\; a \in \{0,1,\dots,l-1\},\\ t\in \{0,1,\dots,n-k-1\},
  \end{multline*}
where $(a_1,a_2,\dots,a_n)$ is the binary representation of $a$. Below we assume that $v_f=1$, that $\Gamma_1(1)=\{2,3\}.$ and that
$\Gamma_2(1)=\{4,5,\dots,k+2\}.$ For a string $s=(s_1,s_2,\dots,s_{n-k-2}) \in \{0,1\}^{n-k-2}$, consider the set of $2^{k+2}$ values of $a$ for which $(a_{k+3},a_{k+4},\dots,a_n) = s$. Isolate a subset of this set by fixing a string $\hat{s}=(\hat{s}_1,\hat{s}_2,\dots,\hat{s}_{k-1}) \in \{0,1\}^{k-1}$ and collecting only those values of $a$ for which $(a_4,a_5,\dots,a_{k+2}) = \hat{s}$. 
Having fixed $s$ and $\hat{s}$, we are left with 8 parity check equations which can be labeled 
by a binary vector $\tilde{s} \in \{0,1\}^3$:
   \begin{multline}\label{eqtn_pc}
		\lambda_{1,\tilde{s}_1}^tc_{1,(\tilde{s},\hat{s},s)}+\lambda_{2,\tilde{s}_2}^tc_{2,(\tilde{s},\hat{s},s)}+\lambda_{3,\tilde{s}_3}^tc_{3,(\tilde{s},\hat{s},s)}\\ + \sum_{i=4}^{k+2}\lambda_{i,\hat{s}_{i-3}}^tc_{i,(\tilde{s},\hat{s},s)}+\sum_{i=k+3}^{n}\lambda_{i,s_{i-k-2}}^tc_{i,(\tilde{s},\hat{s},s)}=0
		\\\forall\;\;\tilde{s} \in \{0,1\}^3,\;\;t \in \{0,1,\dots,n-k-1\}
    \end{multline}	
For fixed $\hat{s}$ and $s$, the $\lambda$'s in the last two sums the same in all the equations. Define
\begin{align*}
	\mu^{(\hat{s},s)}_{2,1,i} &= c_{i,(000,\hat{s},s)}+c_{i,(010,\hat{s},s)}+c_{i,(100,\hat{s},s)}\\
	\mu^{(\hat{s},s)}_{3,1,i} &= c_{i,(000,\hat{s},s)}+c_{i,(100,\hat{s},s)}+c_{i,(101,\hat{s},s)}\\
	\mu^{(\hat{s},s)}_{2,2,i} &= c_{i,(001,\hat{s},s)}+c_{i,(011,\hat{s},s)}+c_{i,(111,\hat{s},s)}\\
	\mu^{(\hat{s},s)}_{3,2,i} &= c_{i,(011,\hat{s},s)}+c_{i,(110,\hat{s},s)}+c_{i,(111,\hat{s},s)}
\end{align*}
For $i \in \{4,5,\dots,k+2\}$ the helper node $i$  sends $\mu^{(\hat{s},s)}_{2,1,i},\mu^{(\hat{s},s)}_{2,2,i}$ to node 2 and $\mu^{(\hat{s},s)}_{3,1,i},\mu^{(\hat{s},s)}_{3,2,i}$ to node 3. Additionally node 3 sends $\mu^{(\hat{s},s)}_{2,1,3},\mu^{(\hat{s},s)}_{2,2,3}$ to node 2 and node 2 sends $\mu^{(\hat{s},s)}_{3,1,2},\mu^{(\hat{s},s)}_{3,2,2}$ to node 3. 

Node 2, having $\mu^{(\hat{s},s)}_{2,1,i}$ for all $i \in \{3,\dots,k+2\}$, can recover $c_{1,(000,\hat{s},s)}+c_{1,(010,\hat{s},s)}, c_{1,(100,\hat{s},s)}$. To see this, sum Eqns.~\eqref{eqtn_pc} for $\tilde{s}\in\{ 000,010,100\}$ to obtain
\begin{multline*}	
		\lambda_{1,0}^t(c_{1,(000,\hat{s},s)}+c_{1,(010,\hat{s},s)})+\lambda_{1,1}^tc_{1,(100,\hat{s},s)}
		\\+\lambda_{2,0}^t(c_{2,(000,\hat{s},s)}+c_{2,(100,\hat{s},s)})+\lambda_{2,1}^tc_{2,(010,\hat{s},s)}
		\\+ \lambda_{3,0}^t\mu^{(\hat{s},s)}_{2,1,3} + \sum_{i=4}^{k+2}\lambda_{i,\hat{s}_{i-3}}^t\mu^{(\hat{s},s)}_{2,1,i}+\sum_{i=k+3}^{n}\lambda_{i,s_{i-k-2}}^t\mu^{(\hat{s},s)}_{2,1,i}=0.
\end{multline*}
The multiplies of the $\lambda$'s in this equation form a codeword
of an $(n+2,k+2 = d+1)$ Reed-Solomon code. Node 2 collects $\mu^{(\hat{s},s)}_{2,1,i}$ for all $i \in \{3,\dots,k+2\}$ and it already knows $(c_{2,(000,\hat{s},s)}+c_{2,(100,\hat{s},s)}), c_{2,(010,\hat{s},s)}$, and so it can recover $(c_{1,(000,\hat{s},s)}+c_{1,(010,\hat{s},s)})$ and $c_{1,(100,\hat{s},s)}$. Similarly, it can be shown that with $\mu^{(\hat{s},s)}_{2,2,i}$ for all $i \in \{3,\dots,k+2\}$, node 2 can recover $(c_{1,(001,\hat{s},s)}+c_{1,(011,\hat{s},s)})$ and $c_{1,(111,\hat{s},s)}$. 

Node 3, using $\mu^{(\hat{s},s)}_{3,1,i}$ and $\mu^{(\hat{s},s)}_{3,2,i}$ for all $i \in \{2,4,5\dots,k+2\},$ can recover $(c_{1,(100,\hat{s},s)}+c_{1,(101,\hat{s},s)})$ and $c_{1,(000,\hat{s},s)}$ and $(c_{1,(110,\hat{s},s)}+c_{1,(111,\hat{s},s)})$ and $c_{1,(011,\hat{s},s)}$. Nodes 2 and 3 send these recovered linear combinations to the failed node, which can recover all $c_{1,(\tilde{s},\hat{s},s)}$ for $\tilde{s} \in \{0,1\}^3$. Finally this is done for all $\hat{s}$ and $s,$ and this recovers the entire of node 1.

{\em Communication Complexity:} Each helper node $i \in \{4,5,\dots,k+2\}$ sends two symbols to node 2 and 2 symbols to node 3 for each fixed $\hat{s},s$. Hence they transmit 4 symbols each resulting in a total transmission of $4(k-1)$ for each fixed $\hat{s},s$. Similarly node 3 transmits 2 symbols to node 2 and node 2 transmits 2 symbols to node 3 for each fixed $\hat{s},s$. Finally node 2 and node 3 total transmit 8 symbols to node 1 for each fixed $\hat{s},s$. This is repeated for every possible $\hat{s}$ and $s$. Hence the total communication complexity is 
\begin{align*}
	B &= (4(k-1)+4+8)\cdot 2^{k-1}\cdot 2^{n-k-2} 
	\\&= 2^{n-1}k + 2^n = \beta(d-1)+l=(d+1)\beta
\end{align*}
(since $l=2\beta$).
This matches the communication complexity of repair attainable with an IP protocol of Sec.~\ref{sec:coop}.

\remove{\vspace*{.2in}
{\bf B.} {\em Proof of Lemma \ref{lemma:bound_mult}:}
	1) By the assumption \eqref{eq:eqtn_mult1}, given the contents of all the nodes in $D\backslash A,$ the information contained in $R_A^F$ is sufficient to repair the nodes $\{v_i: i\in F\}$, i.e., 
	\begin{equation}\label{eq:eqtn_mult2}
		H(W_F|R_A^F, W_{D\backslash A})=0.
	\end{equation}
	We have $|D\backslash A| \le k-h$. Consider a set $B \subset A$ with $|B| = k-h-|D\backslash A|$. Now,
	\begin{equation}\label{eq:eqtn_mult3}
		H(R_A^F, W_{D\backslash A}, W_{B}) = H(R_A^F, W_{D\backslash A}, W_F, W_{B}) \ge kl
	\end{equation}
	where the equality in \eqref{eq:eqtn_mult3} follows from \eqref{eq:eqtn_mult2} and the chain rule, and the inequality follows from the MDS property of MSR codes because 
	$|D\backslash A|+|B|+|F| = k$. Next observe that
	\begin{align}
		H(R_A^F, W_{ D\backslash A}, W_{B}) &\le H(R_A^F)+H( W_{D\backslash A}, W_{B})\notag\\ &= H(R_A^f)+ (k-h)l\label{eq:eqtn_mult4}
	\end{align}
	where the equality again uses the independence of any $k-h$ coordinates in an MDS code. Combining \eqref{eq:eqtn_mult3} and \eqref{eq:eqtn_mult4}, we obtain the claimed inequality.
	
	For Part (2), let $C \subseteq D\backslash A$ such that $|C| = k-h$ and let $I = D\backslash \{A\cup C\}$. By the assumption \eqref{eq:eqtn_mult1}, we have
	\begin{equation}\label{eq:eqtn_mult5}
		H(W_F|R_A^F,W_C,S_I^F) = 0.
	\end{equation}
	Now,
	\begin{equation}\label{eq:eqtn_mult6}
		H(R_A^F,W_C,S_I^F) = H(R_A^F,W_F,W_C,S_I^F) \ge kl
	\end{equation}
	where the equality in \eqref{eq:eqtn_mult6} follows from \eqref{eq:eqtn_mult5} and the chain rule, and the inequality follows from the MDS property and the fact that $|C| = k-h$. Next observe that
	\begin{equation}\label{eq:eqtn_mult7}
		\begin{split}
			&H(R_A^F,W_C,S_I^F)\\ &\le H(R_A^F)+H( W_C)+H(S_I^F)\\ 
			&\le H(R_A^f)+H( W_C)+\sum_{i \in D\backslash\{A\cup C\}}H(S_i^F)\\
			&= H(R_A^F)+ (k-h)l+\frac{h(d-(k-h)-|A|)l}{d-k+h}
		\end{split}
	\end{equation}
	where we again use the independence of any $k-h$ coordinates in an MDS code. Combining \eqref{eq:eqtn_mult6} and \eqref{eq:eqtn_mult7}, we obtain the claimed inequality.}
\qed


\end{document}